\newcommand{\be}{\begin{equation}}
\newcommand{\ee}{\end{equation}}
\newcommand{\ba}{\begin{eqnarray}}
\newcommand{\ea}{\end{eqnarray}}
\newtheorem{theorem}{Theorem}
\newtheorem{definition}{Definition}
\newtheorem{proposition}{Proposition}
\newtheorem{lemma}{Lemma}
\def\>{\rangle}
\def\<{\langle}
\begin{document}

\title{ Conclusive Local State Marking: More Nonlocality With No Entanglement}

\author{Samrat Sen}
\affiliation{Department of Physics of Complex Systems, S. N. Bose National Center for Basic Sciences, Block JD, Sector III, Salt Lake, Kolkata 700106, India.}

\begin{abstract}

Nonlocality exhibited by ensembles of composite quantum states, wherein local operations and classical communication (LOCC) yield suboptimal discrimination probabilities compared to global strategies, is one of the striking nonclassical features of quantum theory. A variant of this phenomenon arises in conclusive local state discrimination, where the correct state must be identified with zero error, albeit allowing for inconclusive outcomes. More recently, the notion of local state marking has been introduced, with the focus shifted to correctly identifying the permutation of a subset of states randomly chosen from a given set of multipartite states under LOCC. In this work, we unify these two approaches by introducing the task of conclusive local state marking, which reveals a finer hierarchy of nonlocality in multipartite quantum state ensembles. Notably, we demonstrate that certain ensembles of product states can exhibit a stronger form of nonlocality than entangled ensembles traditionally considered highly nonlocal.
\end{abstract}


\maketitle	
{\it Introduction.} Quantum nonlocality—wherein entangled states shared between distant parties yield input-output correlations that defy any {\it local-causal} explanation—is one of the most striking features of quantum theory, as established by Bell’s theorem \cite{Bell1964,BellAPS,Mermin,BellReview}. In contrast, correlations arising from separable states are always factorizable and thus admit a local description. Interestingly, even ensembles of multipartite product states can exhibit a subtler form of nonlocality wherein the success probability of discriminating a state given from such an ensemble under local operations and classical communication (LOCC) can remain suboptimal as compared to the global discrimination probability. This phenomenon, termed `nonlocality without entanglement', was first highlighted by Peres and Wootters \cite{PeresWooters}, and gained prominence following the seminal work of Bennett {\it et al.} \cite{BennettNLWE}, who constructed orthogonal product-state bases that are locally indistinguishable. These findings have since spurred extensive research under the broader theme of local state discrimination (LSD),  yielding a rich array of results \cite{Bennett99(1),DiVincenzo03,Niset06,DuanProduct,Calsamiglia10,Bandyopadhyay,Chitambar14,Halder18,Demianowicz18,Halder19,Halder19(1),Agrawal19,Rout19,Bhattacharya20,Banik20,Rout20,LSM,Subhendu,Tathagata1,Pratik,Subhendu2}, owing to its relevance in fundamental implications for quantum information theory and applications in cryptographic protocols \cite{Gagliardoni2021Quantum,Bennett1984Quantum,Markham2008Graph,Matthews2009Distinguishability}. Over time, this has inspired several variants of the task. For instance, conclusive state discrimination (CSD) relaxes the requirement of always providing a definite answer, allowing for inconclusive outcomes while still ensuring correctness when a conclusion is drawn \cite{CheflesGlobal}, thereby making it less demanding than the task of perfect state discrimination. Another variant, known as local state marking (LSM), requires the parties to determine the specific permutation in which the quantum states are distributed among them \cite{LSM}. \setlength{\parindent}{15pt}   
\setlength{\parskip}{0pt}      

As shown by Duan and collaborators \cite{DuanProduct}, the celebrated product 
states introduced by Bennett \textit{et al.} \cite{BennettNLWE} lose their 
nonlocal property once conclusive local state discrimination (CLSD) is considered: these 
states become both globally and locally conclusively distinguishable. Duan 
\textit{et al.} further provided the first examples of product states that 
are nonlocal even in the CLSD paradigm, thereby establishing conclusive local 
indistinguishability as a strictly stronger manifestation of nonlocality than 
standard local indistinguishability.

 With this in mind, in this work, we revisit the task of LSM \cite{LSM}, which in its original version assumes mutually orthogonal states. We introduce a relaxed version called conclusive local state marking (CLSM) to allow for conclusively figuring out the permutation in which the set of states has been shared. For any set \( \mathcal{S} \) of multipartite quantum states, we define a family of state discrimination tasks referred to as \( m \)-CLSM, parameterized by an integer \( m \in \{1,..., |\mathcal{S}|\} \), where \( |S| \) denotes the cardinality of the set. The case \( m = 1 \) corresponds to the standard conclusive local state discrimination (CLSD) task \cite{Chefles}, while \( m = |\mathcal{S}| \) corresponds to the full conclusive local state marking task, which we refer to simply as CLSM.  This broader framework also accommodates linearly independent ensembles, thereby generalizing beyond ensembles of states containing only mutually orthogonal states. Thus, CLSM represents a task that is even less demanding than LSM, thereby expanding the landscape of LOCC-based discrimination paradigm. The CLSM task is shown to be an inequivalent task in comparison to CLSD. This we prove by showing that certain binary ensembles of mutually orthogonal quantum states—previously known to exhibit strong local indistinguishability \cite{Bandyopadhyay,Duan}—can, in fact, be conclusively marked using local operations and classical communication (LOCC). Even with access to an arbitrarily large number of copies, these ensembles were previously shown to be not only locally indistinguishable, but also conclusively indistinguishable under PPT (positive partial transpose) operations—a broader class that strictly contains LOCC operations. Notably, a set of states which is unmarkable, be it perfect or conclusive, shows a stronger form of nonlocality than what is considered in literature. In this direction, we give two examples of a set of four product states that are not only conclusively indistinguishable via LOCC, but whose \(2\)-CLSM task is also not possible. Thus, we present the first known example of product states that are conclusively unmarkable, thereby exhibiting a stronger form of nonlocality than the original example by Bennett {\it et al.}~\cite{BennettNLWE} or Peres and Wootters \cite{PeresWooters}.  We also give an example of a set of mutually orthogonal pure states, whose $2$-LSM as well as $2$-CLSM is also not possible. \\

While any set of mutually orthogonal quantum states can always be perfectly distinguished, the scenario becomes more intriguing in multipartite systems under restricted operational settings. A particularly relevant operational framework is the LOCC, where spatially separated parties perform local quantum operations on their respective subsystems, supplemented by rounds of classical communication. Due to the multi-round nature of such communication, characterizing the full set of LOCC operations remains a challenging task \cite{Chitambar2014Everything}. Remarkably, even orthogonal states may become indistinguishable under LOCC, giving rise to the phenomenon of `quantum nonlocality without entanglement' \cite{BennettNLWE}. Numerous examples of such constructions have since been reported, accompanied by detailed characterizations aimed at deepening our understanding of the area of local state discrimination \cite{Xu2016,Yang2015,Gao,Fei,Xu2017,Wang2017,Zhang,Zhang2015,Zheng,Zhang2016,Zhang2016Yongjun,Wang2017arxiv,Zhang2017Luo,Zhang2017Tan,Zuo2021,Zhang2024,Zhu2022,Zhang2021,Cao2025,Feng,Cohen,Croke,Childs2013,MassarPop,Tian,Rinaldis,Cosentino2013,Ghosh2004,Bandyopadhyay2013,Walgate2002,Fan,Nathanson2005,Watrous,Hayashi2006,Bandyopadhyay2011,Yu}.

However, it is known that no ensemble of just two mutually orthogonal pure states can exhibit such nonlocality \cite{walgatehardy}. However, when mixed states are considered in a binary ensemble, local distinguishability is no longer guaranteed. Indeed, there exist examples of two mutually orthogonal quantum states that are locally indistinguishable, even when multiple copies are provided \cite{Bandyopadhyay, Duan}. The key insight underlying these results is that the states cannot be distinguished even conclusively via LOCC, thereby ruling out perfect local discrimination as well. This underscores the power of conclusive local indistinguishability as a diagnostic tool for detecting local indistinguishability. In the following, we explore the concept of conclusive discrimination in greater depth.


Given a state $\ket{\psi_k}$, randomly chosen from an ensemble $\mathcal{S}\equiv\left\{ \ket{\psi_1}, \ket{\psi_2}, \dots, \ket{\psi_N} \right\}$, conclusive state discrimination (CSD) task demands unambiguous identification of the state with zero probability of error, albeit there is a nonzero chance that the protocol yields an inconclusive outcome. We thus require a measurement with \( N+1 \) outcomes: \( N \) of these correspond to the $N$ states that are to be conclusively discriminated, while the remaining outcome denotes an inconclusive result. This can be expressed in terms of positive operator-valued measures (POVMs), $\mathcal{M}\equiv\{P_?, P_k|~P_? + \sum_{k=1}^N P_k = \mathbb{1}_\mathcal{H}\}$. The POVM elements satisfy
\begin{equation}
\operatorname{Tr}(\ket{\psi_l}\bra{\psi_l} P_{k}) = p_k \delta_{l,k}~,\label{eq:POVMcondition}
\end{equation}
where \( p_k = \operatorname{Tr}(\ket{\psi_k}\bra{\psi_k} P_k) > 0 \) is the probability of correctly detecting the state \(\ket{\psi_k} \), and $P_?$ corresponds to inconclusive outcome. Thus, for example, even though the two nonorthogonal quantum states  $\ket{0}$ and $\ket{+}$, cannot be perfectly distinguished, it can be conclusively distinguished. 
Subsequently, a necessary and sufficient condition for conclusive distinguishability is obtained. \begin{lemma}\label{lemmaCheflesGlobal}
[Chefles \cite{CheflesGlobal}] A set of 
$N$ quantum states can be conclusively distinguished if and only if the states are linearly independent. 
\end{lemma}
For instance, the set \( \{\ket{0}, \ket{+}, \ket{1}\} \subset \mathbb{C}^2 \) cannot be conclusively distinguished. Notably, within the LOCC paradigm the concept of conclusive distinguishability get further refined due to constrain on the allowed operations.
\begin{definition}\label{defCheflesCLSD}
[Chefles \cite{Chefles}] A multipartite state \( |\psi\rangle \in S \) is conclusively locally identifiable if and only if there is an 
LOCC protocol whereby with some nonzero probability \( p > 0 \) it can be determined that the multipartite quantum system \( Q \) 
was certainly prepared in state \( |\psi\rangle \). The set \( S \) is conclusively locally distinguishable if and only if every state  in \( S \) is conclusively locally identifiable.
\end{definition}
Notably, in the multipartite case local conclusive distinguishability has a crucial difference from the global conclusive distinguishability. In the local case one need to ensure LOCC protocol for each and every member of the set, which, unlike the global case, may vary from state to state. Mere linear independence does not ensure local conclusive distinguishability of a set of multipartite states. For instance, consider the four Bell states in $\mathbb{C}^2 \otimes\mathbb{C}^2$:
\begin{equation*}
\mathcal{S}^{\text{Bell}} \equiv 
\left\{
\begin{aligned}
\ket{\mathcal{B}^1} &:= \ket{\Phi^+}_{AB} := (\ket{00} + \ket{11})/\sqrt{2} \\
\ket{\mathcal{B}^2} &:= \ket{\Phi^-}_{AB} :=(\ket{00} - \ket{11})/\sqrt{2} \\
\ket{\mathcal{B}^3} &:= \ket{\Psi^+}_{AB} :=(\ket{01} + \ket{10})/\sqrt{2} \\
\ket{\mathcal{B}^4} &:= \ket{\Psi^-}_{AB} :=(\ket{01} - \ket{10})/\sqrt{2}
\end{aligned}
\right\}.
\end{equation*}
These states are mutually orthogonal and therefore linearly independent. However, these states are known to be locally indistinguishable \cite{GKar}. The proof relies on the separability property of the Smolin state \cite{Smolin}, given by  
\begin{align}
\rho_S = \frac{1}{4} \sum_{i=1}^{4} \mathcal{B}^i_{A_1B_1} \otimes \mathcal{B}^i_{A_2B_2} = \frac{1}{4} \sum_{i=1}^{4} \mathcal{B}^i_{A_1A_2} \otimes \mathcal{B}^i_{B_1B_2},    
\end{align}
along with the fundamental fact that entanglement cannot be increased under LOCC [See \cite{Chitambar2014Everything} for an in-depth understanding]; \(\mathcal{B}\equiv\ket{\mathcal{B}}\bra{\mathcal{B}}\). The second decomposition shows that the state is separable across the bipartition \( A_1A_2 : B_1B_2 \), implying that Alice and Bob, who hold the \( A_1A_2 \) and \( B_1B_2 \) subsystems respectively, share no entanglement. In contrast, the first decomposition reveals that if the Bell states are locally distinguishable, then their local distinguishability would result in entanglement between Alice and Bob -- a contradiction. Since it is impossible to generate entanglement from a separable state via LOCC, even probabilistically, a similar argument implies that Bell states cannot be conclusively distinguished using LOCC.

On the other hand, consider the nine orthogonal product states of Bennett \textit{et al.} \cite{BennettNLWE}:
\renewcommand{\arraystretch}{1.3}
\begin{equation*}
\mathcal{S}^{\text{B}} \equiv \left\{
\begin{array}{ll}
\ket{\psi}_1 = \ket{1}_A\ket{1}_B,~~~~~\ket{\psi}_2 = \ket{0}_A\ket{0{+}1}_B,\\
\ket{\psi}_3 = \ket{0}_A\ket{0{-}1}_B,~\ket{\psi}_4 = \ket{2}_A\ket{1{+}2}_B,\\
\ket{\psi}_5 = \ket{2}_A\ket{1{-}2}_B,~\ket{\psi}_6 = \ket{1{+}2}_A\ket{0}_B,\\
\ket{\psi}_7 = \ket{1{-}2}_A\ket{0}_B,~\ket{\psi}_8 = \ket{0{+}1}_A\ket{2}_B,\\
\hspace{1.5cm}\ket{\psi}_9 = \ket{0{-}1}_A\ket{2}_B, \\
\end{array}
\right\},
\end{equation*}
\renewcommand{\arraystretch}{1.0}
where $\ket{p \pm q} := (\ket{p} \pm \ket{q})$. Although this set is locally indistinguishable, they turn out to be conclusively distinguishable under LOCC. To see this, take the state \(\ket{\psi_1}\in\mathcal{S}^B\). Alice and Bob can  conclusively identify  \(\ket{\psi_1}\in\mathcal{S}^B\) through a local protocol that consists of computational basis measurement. Whenever the product effect \(\ket{1}\bra{1}\otimes\ket{1}\bra{1}\) clicks they are sure the given state was \(\ket{\psi_1}\), else they register an inconclusive outcome. Similar protocols can be constructed for all the other members. Subsequently, this observation has been generalized to obtain a necessary and sufficient criterion for conclusive local discrimination. 
\begin{lemma}\label{lemmaCheflesProductDetecting}
[Chefles \cite{Chefles}]Let \( S = \{ |\psi_1\rangle, \dots, |\psi_m\rangle \} \) be a collection of \( m \) 
quantum states on \( \mathcal{H} \). Then \( S \) can be conclusively discriminated by LOCC if and only if for each \( 1 \leq k \leq m \), there exists a product detecting state \( |\phi_k\rangle \) such that \( \langle \psi_j | \phi_k \rangle = 0 \) for \( j \neq k \) and \( \langle \phi_k | \phi_k \rangle \neq 0 \).   
\end{lemma}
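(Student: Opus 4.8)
The plan is to prove both directions of the equivalence, reading a \emph{product detecting state} for $\ket{\psi_k}$ as a nonzero product vector $\ket{\phi_k}$ orthogonal to every $\ket{\psi_j}$ with $j\neq k$ while retaining $\langle\psi_k|\phi_k\rangle\neq 0$, so that the rank-one effect $\ketbra{\phi_k}{\phi_k}$ conclusively registers $\ket{\psi_k}$ and never misfires on the others. The easy direction is sufficiency ($\Leftarrow$). Given such a $\ket{\phi_k}=\bigotimes_{i=1}^N \ket{\phi_k^{(i)}}$ for each $k$, I would build an LOCC protocol identifying $\ket{\psi_k}$ in which party $A_i$ performs a two-outcome local measurement containing the normalized effect $\ketbra{\phi_k^{(i)}}{\phi_k^{(i)}}$ and broadcasts its result. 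When all $N$ parties report the $\phi_k$ outcome, the global product effect $\bigotimes_i \ketbra{\phi_k^{(i)}}{\phi_k^{(i)}}\propto\ketbra{\phi_k}{\phi_k}$ has clicked; its probability on $\ket{\psi_j}$ is $|\langle\phi_k|\psi_j\rangle|^2=0$ for $j\neq k$ and $|\langle\phi_k|\psi_k\rangle|^2>0$ on $\ket{\psi_k}$, so the state is conclusively and locally identified with nonzero probability. Running one such protocol per $k$ shows, via Definition~\ref{defCheflesCLSD}, that $S$ is conclusively locally distinguishable.

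The substance is in the necessity direction ($\Rightarrow$). The key structural fact I would invoke is that, conditioned on a fixed classical communication record, every branch of an LOCC protocol acts on the global state by a \emph{product} Kraus operator $K=\bigotimes_{i=1}^N M^{(i)}$, since each party's contribution is the ordered composition of its own local operations and these act on distinct tensor factors. Fix $k$ and consider the ``declare $\ket{\psi_k}$'' outcome, whose POVM element is $E_k=\sum_b K_b^\dagger K_b$ summed over the accepting branches $b$, each $K_b$ a product operator. Zero error forces $\Tr(E_k\,\ketbra{\psi_j}{\psi_j})=0$ for $j\neq k$; since each summand is positive semidefinite this gives $\|K_b\ket{\psi_j}\|^2=0$, hence $K_b\ket{\psi_j}=0$ for every branch and every $j\neq k$. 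Nonzero success, $\Tr(E_k\,\ketbra{\psi_k}{\psi_k})>0$, guarantees at least one branch $b^\star$ with $K_{b^\star}\ket{\psi_k}\neq 0$.

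It then remains to convert the product Kraus operator $K\equiv K_{b^\star}=\bigotimes_i M^{(i)}$ into an actual product \emph{vector}. I would pass to $E\equiv K^\dagger K=\bigotimes_i (M^{(i)})^\dagger M^{(i)}$, a positive semidefinite product operator whose support is the product subspace $V=\bigotimes_i \Supp\big((M^{(i)})^\dagger M^{(i)}\big)$. Because $K\ket{\psi_j}=0$ means $\ket{\psi_j}\in\ker E$ for $j\neq k$, and kernel and support of a Hermitian operator are orthogonal, every vector of $V$ is orthogonal to all $\ket{\psi_j}$ with $j\neq k$; because $K\ket{\psi_k}\neq 0$, the projection $P_V\ket{\psi_k}$ is nonzero. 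Finally, since tensor products of bases of the factors $\Supp\big((M^{(i)})^\dagger M^{(i)}\big)$ form a basis of $V$, the product vectors in $V$ span $V$, so they cannot all be orthogonal to $P_V\ket{\psi_k}$; choosing one that is not, say $\ket{\phi_k}\in V$ with $\langle\phi_k|\psi_k\rangle\neq0$, yields exactly the desired product detecting state, satisfying $\langle\psi_j|\phi_k\rangle=0$ for $j\neq k$ and $\langle\phi_k|\phi_k\rangle\neq 0$.

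I expect the main obstacle to be establishing and using the product-Kraus-operator decomposition of LOCC branches cleanly—in particular, arguing that orthogonality of the aggregate effect $E_k$ to the other states descends to each individual branch operator—since this is precisely where the LOCC restriction, as opposed to a general global measurement, is exploited. The remaining linear-algebra step, that product vectors span a product subspace and therefore detect any vector with nonzero projection onto it, is routine.
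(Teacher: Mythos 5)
The paper does not actually prove this statement: Lemma~\ref{lemmaCheflesProductDetecting} is imported verbatim from Chefles~\cite{Chefles} and used as a black box, so there is no in-paper proof to compare against. Judged on its own, your reconstruction is correct and is essentially the standard argument behind Chefles' result. The sufficiency direction is fine: the local two-outcome measurements $\{\ketbra{\phi_k^{(i)}}{\phi_k^{(i)}},\,I-\ketbra{\phi_k^{(i)}}{\phi_k^{(i)}}\}$ realize the product effect $\ketbra{\phi_k}{\phi_k}$ as a one-round LOCC protocol, and the zero/nonzero overlap conditions give exactly conclusive local identifiability of each $\ket{\psi_k}$ in the sense of Definition~\ref{defCheflesCLSD}. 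The necessity direction correctly isolates the two essential ingredients: (i) every branch of an LOCC protocol, conditioned on the full communication record, is implemented by a product Kraus operator $K_b=\bigotimes_i M^{(i)}$ (LOCC $\subseteq$ SEP at the Kraus level), and positivity of each $K_b^\dagger K_b$ lets the aggregate zero-error condition $\Tr(E_k\ketbra{\psi_j}{\psi_j})=0$ descend to $K_b\ket{\psi_j}=0$ branch by branch; (ii) a product vector can be extracted from the accepting branch. Your step (ii), via $\Supp\bigl(\bigotimes_i (M^{(i)})^\dagger M^{(i)}\bigr)=\bigotimes_i\Supp\bigl((M^{(i)})^\dagger M^{(i)}\bigr)$ and the fact that product vectors span a product subspace, is sound; a marginally slicker route is to pick any product basis vector $\ket{e}=\bigotimes_i\ket{e_i}$ with $\langle e|K_{b^\star}|\psi_k\rangle\neq 0$ and set $\ket{\phi_k}:=K_{b^\star}^\dagger\ket{e}=\bigotimes_i (M^{(i)})^\dagger\ket{e_i}$, which is manifestly a product vector with the required overlaps. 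Either way the lemma follows; no gap.
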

Notably, nonlocality without entanglement as identified in \cite{BennettNLWE} has an intriguing connection with unextendible product bases (UPB) \cite{Bennett99(1)}, which subsequently has lead to the notion of unextendible bases (UB) \cite{DuanProduct}.
\begin{definition} 
A set $\mathcal{S} = \{\ket{\phi_1}, \ldots, \ket{\phi_m}\}$ of $m$ linearly independent quantum states on a Hilbert space $\mathcal{H}$ is said to be an unextendible if its orthogonal complement $\mathcal{S}^\perp$ contains no product state; otherwise, $\mathcal{S}$ is said to be extendible. Furthermore, $\mathcal{S}$ is said to be a genuinely unextendible basis (GUB) if it is unextendible and any proper subset of $\mathcal{S}$ is extendible. 
\end{definition}\label{defUB}
An UB will be called UPB if it contains only product states, and it will be called GUPB if it further satisfies the requirement of genuineness. Notably, a necessary and sufficient criteria for conclusive local state discrimination of an UB is obtained subsequently.
\begin{lemma}\label{lemmaGUB}
[Duan et al. \cite{DuanProduct}]
An UB allows conclusive local discrimination if and only if it is a GUB.
\end{lemma} 
The authors in \cite{DuanProduct} have also provided an example of a set of pure product states in \(\mathbb{C}^2\otimes\mathbb{C}^2\) which are not conclusively distinguishable via LOCC :
\begin{align}
\mathcal{S}^{\text{D}} \equiv \left\{
\begin{aligned}
  &\ket{D_1} := \ket{0}_A \ket{0}_B, \quad 
   \ket{D_2} := \ket{1}_A \ket{1}_B, \\
  &\ket{D_3} := \ket{+}_A \ket{+}_B,\, \ket{D_4} := \ket{i_+}_A \ket{i_-}_B
\end{aligned}
\right\}, \label{Duanproductstates}
\end{align}

where $\ket{\pm}:=1/\sqrt{2}(\ket{0} \pm\ket{1})$, $\ket{i_{\pm}}:=1/\sqrt{2}(\ket{0} \pm\iota\ket{1})$, with $\iota:=\sqrt{-1}$

 In terms of conclusive local distinguishability, the set $\mathcal{S}^{\text{D}}$ exhibits a stronger form of nonlocality than $\mathcal{S}^{\text{B}}$. However, this comparison may appear less meaningful, as $\mathcal{S}^{\text{D}}$ comprises non-orthogonal states, whereas the states in $\mathcal{S}^{\text{B}}$ are mutually orthogonal. A more relevant comparison can be made with the well-known double-trine ensemble in $\mathbb{C}^2 \otimes \mathbb{C}^2$, introduced by Peres and Wootters \cite{PeresWooters} (see also \cite{ChitambarP-W}):
\begin{align}
\mathcal{S}^{\text{P-W}} &\equiv \left\{ \ket{w_{k}}_A \ket{w_{k}}_B \right\}_{k=0}^{2}, \nonumber\\
\text{where}~\ket{w_k} &:= \exp\left(-\frac{k\pi}{3} \sigma_y\right) \ket{0}. \label{eqPW}
\end{align}

This ensemble is highly symmetric and exemplifies `nonlocality without entanglement', as its optimal discrimination success via local operations falls short of that achievable globally. It is noteworthy that this ensemble of states constitute an UB -- it spans a 3 dimensional subspace in $\mathbb{C}^2\otimes \mathbb{C}^2$, with only the $\ket{\Psi^-}_{AB}$ orthogonal to it. Nevertheless, as we shall demonstrate, these states are conclusively distinguishable using local operations.
\begin{proposition}\label{propPW}
The set of states $\mathcal{S}^{\text{P-W}}$ can be distinguished conclusively via LOCC.
\end{proposition}
\begin{proof}
The state $\ket{w_1^{\perp}w_2^{\perp}}\equiv\ket{w_1^{\perp}}_A\ket{w_2^{\perp}}_B$ is orthogonal to $\ket{w_1w_1}$ and $\ket{w_2w_2}$, but non-orthogonal to $\ket{w_0w_0}$. Consider the measurement 
\begin{equation*}
\left\{
\begin{array}{l}
\textstyle E_0 ~:= \ket{w_1^{\perp}w_2^{\perp}}\bra{w_1^{\perp}w_2^{\perp}},\textstyle E_{?}~ := \ket{w_1^{\perp}w_2}\bra{w_1^{\perp}w_2} \\
\textstyle E_{??} := \ket{w_1w_2^{\perp}}\bra{w_1w_2^{\perp}},~~
\textstyle E_{???} := \ket{w_1w_2}\bra{w_1w_2}
\end{array}
\right\},
\end{equation*}
which is locally implementable, and ensures local identifying of the state $\ket{w_0}_A \otimes \ket{w_0}_B$ -- clicking of the effect $E_0$ ensures the state was $\ket{w_0}_A \otimes \ket{w_0}_B$, else they register an inconclusive result. Similarly the other two states can also be conclusively identified locally, and thereby ensuring conclusive local distinguishability of the double-trine ensemble.
\end{proof}
At this point, recalling the Lemma \ref{lemmaGUB}, one can conclude that the double trine ensemble is not only a UB, rather it is a GUB. More recently, the local state marking (LSM) task has been introduced as a generalization of local state discrimination (LSD) \cite{LSM}. We recall the formal definition from \cite{LSM}:
\begin{definition}\label{defLSM}
(Sen \textit{et  al.} \cite{LSM}). Given \( m \) states chosen randomly from a known set of mutually orthogonal multipartite quantum states \( \mathcal{S} \equiv \left\{ \ket{\psi_j} \right\}_{j=1}^N \), the \( m \)-LSM task demands correctly answering (or marking),  each of the \( m \) states via LOCC, thus figuring out the permutation of the $m$ states in the process.
\end{definition}
Here, $m$ ranges from 1 to $|\mathcal{S}|$, with the special case $m = 1$ corresponding to the standard local state discrimination (LSD) task. The $m$-LSM task necessarily involves a set of mutually orthogonal states, as the parties \textit{must} correctly determine the exact sequence of the $m$ states. As conclusive state discrimination has already significantly deepened our understanding of local distinguishability, leading to several notable insights, a natural question then arises: if the LSM task is relaxed to allow conclusive outcomes---possibly at the cost of inconclusive ones---what new and interesting consequences might emerge? We shall address this question in this paper, and start by formally defining the conclusive local state marking task (CLSM). 
\begin{definition}\label{def3}
[\(m\)-CLSM] Given \( m \) states chosen randomly from a known set of multipartite quantum states \( \mathcal{S} \equiv \left\{ \ket{\psi_j} \right\}_{j=1}^N \), the \( m \)-CLSM task demands marking each of the \( m \) states via LOCC, conclusively.
\end{definition}
\begin{figure}[t!]
\centering
\includegraphics[width=0.35\textwidth]{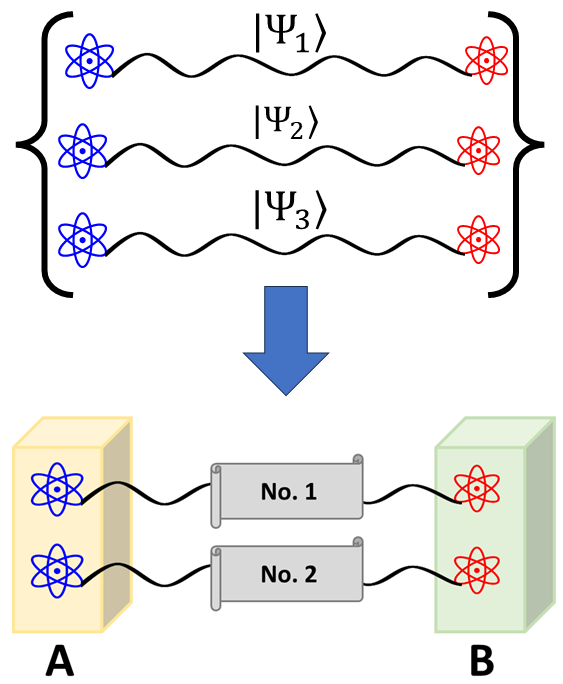}
\caption{(Color online) In the bipartite scenario, the task of $2$-CLSM is demonstrated for a set containing three bipartite states: $Z := \{\ket{\Psi_1}, \ket{\Psi_2}, \ket{\Psi_3}\}$. Two states, chosen randomly from the set, are distributed between spatially separated Alice and Bob without revealing their identities. The possible input pairs are: $\{\ket{\Psi_1\Psi_2}, \ket{\Psi_1\Psi_3}, \ket{\Psi_2\Psi_1}, \ket{\Psi_2\Psi_3}, \ket{\Psi_3\Psi_1}, \ket{\Psi_3\Psi_2}\}$. They must conclusively identify the indices of the two states (No. 1 and No. 2) using LOCC.}
\label{CLSMfig1}
\end{figure}
As in the LSM scenario, here too \( m \) ranges from \( 1 \) to \( |\mathcal{S}| \), with the special case \( m = 1 \) reducing to the standard conclusive local state discrimination (conclusive-LSD) task (see Fig.~\ref{CLSMfig1}). However, unlike the \( m \)-LSM task, in \( m \)-CLSM the parties are only required to provide a correct answer when they choose to respond; otherwise, they may declare an inconclusive outcome by remaining silent. 
At this point, it is important to highlight a crucial distinction between the \( m \)-CLSM task and the multi-copy conclusive local state discrimination (CLSD) task. The latter addresses the problem of distinguishing a state drawn from a known set when multiple copies of the unknown state are provided \cite{CheflesLD}. While linear independence is a necessary condition for conclusive discrimination in the single-copy setting, in the multi-copy setting even linearly dependent ensembles may become conclusively distinguishable, as the tensor product of multiple copies can lead to the desired linear independence. As we will see, no \( m \)-CLSM task can be possible for a set of linearly dependent set of states.

\color{black}
\begin{proposition}\label{propNOTPOSSIBLE}
For any linearly dependent set of multipartite quantum states, the \( m \)-CLSM task is not possible for any \( m \geq 1 \).
\end{proposition}

The detailed proof is discussed in Appendix~\ref{appendixA}, where we have given the proof for $3$-CLSM, which can easily be generalised for general $m$. For clarity, we outline the argument for the case of $3$-CLSM.

For a linearly dependent set of vectors $\{v_i\}^N_{i=1}$, there exist coefficients $\alpha_i$, not all zero, such that
    \(
        \sum_{i=1}^{N} \alpha_i v_i = 0.
    \) 
    Now, we tensor this relation by two fixed vectors $v_{j_2}$ and $v_{j_3}$ from the set. 
    This produces a new equation involving three-fold tensor products of the $v_i$'s. Then, we apply all possible permutations of the three tensor slots, multiply each by the sign of the permutation, 
    and sum them up. This step ensures that only terms with three distinct indices survive, 
    since repeating any state makes the antisymmetric sum vanish.
 After antisymmetrization, all terms with repeated indices vanish. 
    The remaining terms correspond to combinations of three distinct states and form the $3$-CLSM candidates. This provides a nontrivial relation among the $3$-CLSM states and thus they are linearly dependent.

\color{black}

Hence, we now turn our attention to the \(m\)-CLSM task for a linearly independent set of multipartite quantum states. Can the derived set of states constructed with all ordered \(m\)-tuples from the original set become linearly dependent—thereby ruling out the possibility of global conclusive discrimination and, consequently, its local counterpart? As we will demonstrate, this is not the case for the \( m \)-CLSM task.
\begin{proposition}\label{propLI}
For any set of linearly independent multipartite quantum states, the set of states arising in \(m\)-CLSM task is linearly independent.
\end{proposition}
We discuss the proof in Appendix \ref{appendixB}.
The essential idea is that for any set of linearly independent multipartite states, forming $m$-fold tensor products generates a larger set of states that remain linearly independent. The states arising in the $m$-CLSM task are simply a subset of these tensor-product states. Since any subset of a linearly independent set is itself linearly independent, the $m$-CLSM states inherit this property.

\color{black}

Since the \(m\)-CLSM task can be viewed as a CLSD task in higher dimensional systems, given a set of linearly independent set of multipartite states \(\mathcal{S}\) a necessary and sufficient condition for its \(m\)-CLSM can be obtained by generalizing the Lemma \ref{lemmaCheflesProductDetecting}. Instead of stating this, in the following we rather establish some interesting connections between CLSD and CLSM.

\begin{theorem}\label{theo1}
For a set of multipartite states $\mathcal{S}$, conclusive local state discrimination (i.e. $1$-CLSM) always implies m-conclusive local state marking (i.e. $m$-CLSM ~~~$\forall m \in \{1,2,\cdots ,|\mathcal{S}|\}$).
\end{theorem}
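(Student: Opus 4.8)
The plan is to invoke the product-detecting-state characterization of conclusive local discrimination (Lemma \ref{lemmaCheflesProductDetecting}) at both ends of the argument: once to extract structure from the hypothesis, and once to certify the conclusion. Recall from Proposition \ref{lemmaLI} that the $m$-CLSM task for $\mathcal{S}$ is literally a CLSD task for the derived ensemble $\mathcal{S}'_m$ of ordered $m$-tuples $\ket{\Psi_{\vec{i}}} := \ket{\psi_{i_1}} \otimes \cdots \otimes \ket{\psi_{i_m}}$, living in the regrouped space $\bigotimes_{l=1}^K (\mathbb{C}^{d_l}_{A_l})^{\otimes m}$ in which party $A_l$ now holds the $m$ copies of its $l$-th subsystem. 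So it suffices to exhibit, for every tuple $\vec{i}$, a product detecting state in this regrouped space and then apply Lemma \ref{lemmaCheflesProductDetecting} in the forward direction.

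First I would unpack the hypothesis. Since $\mathcal{S}$ is conclusively locally distinguishable (the $1$-CLSM case), Lemma \ref{lemmaCheflesProductDetecting} supplies for each $k$ a product detecting state $\ket{\phi_k} = \bigotimes_{l=1}^K \ket{\phi_k^{(l)}}$ with $\langle \psi_j | \phi_k\rangle = 0$ for all $j \neq k$. I would then record the easy but essential observation that necessarily $\langle \psi_k | \phi_k \rangle \neq 0$: by Definition \ref{defCheflesCLSD} each state must be identified with strictly positive probability $\propto |\langle \psi_k | \phi_k\rangle|^2$, so $\ket{\phi_k}$ cannot be orthogonal to $\ket{\psi_k}$ as well (equivalently, this is the dual-basis vector guaranteed by the linear independence forced by Lemma \ref{lemma1}).

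The core construction is then to set, for each ordered tuple $\vec{i} = (i_1,\dots,i_m)$, the candidate detecting state $\ket{\Phi_{\vec{i}}} := \ket{\phi_{i_1}} \otimes \cdots \otimes \ket{\phi_{i_m}}$. Two things must be checked. For orthogonality, the overlap factorizes as $\langle \Psi_{\vec{j}} | \Phi_{\vec{i}}\rangle = \prod_{t=1}^m \langle \psi_{j_t} | \phi_{i_t}\rangle$; each factor vanishes unless $j_t = i_t$, so the whole product is zero whenever $\vec{j} \neq \vec{i}$ and equals $\prod_t \langle \psi_{i_t}|\phi_{i_t}\rangle \neq 0$ when $\vec{j} = \vec{i}$, using the nonvanishing recorded above. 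This gives exactly the two conditions required by Lemma \ref{lemmaCheflesProductDetecting} for the ensemble $\mathcal{S}'_m$.

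The one point that deserves care --- and the only place the ``local'' part of the statement really bites --- is verifying that $\ket{\Phi_{\vec{i}}}$ is genuinely a product state across the parties in the regrouped space, not merely across the $m$ copies. Writing out $\ket{\Phi_{\vec{i}}} = \bigotimes_{t=1}^m \bigotimes_{l=1}^K \ket{\phi_{i_t}^{(l)}}$ and reorganizing the tensor factors by party yields $\ket{\Phi_{\vec{i}}} = \bigotimes_{l=1}^K \big( \bigotimes_{t=1}^m \ket{\phi_{i_t}^{(l)}} \big)$, so party $A_l$ holds the pure local state $\bigotimes_{t=1}^m \ket{\phi_{i_t}^{(l)}} \in (\mathbb{C}^{d_l}_{A_l})^{\otimes m}$, confirming product form. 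With this, the hypotheses of Lemma \ref{lemmaCheflesProductDetecting} are met for $\mathcal{S}'_m$, so the $m$-CLSM task is achievable by LOCC for every $m \in \{1,\dots,|\mathcal{S}|\}$. I expect the regrouping/product-structure verification to be the main (though modest) obstacle; the orthogonality bookkeeping is routine once the detecting states are tensored.
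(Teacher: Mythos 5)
Your proof is correct, and it takes a genuinely different route from the paper's. The paper argues operationally: it partitions the derived ensemble $\mathcal{S}^{[m]}_{\mathcal{P}[\{N\}]}$ into groups $\mathcal{G}^{[m]}_l$ according to the first index, runs the assumed CLSD protocol for $\mathcal{S}$ on the first tensor slot to identify that index conclusively, observes that the remaining slots are untouched by that round, and iterates down the tuple. You instead work entirely at the level of Lemma~\ref{lemmaCheflesProductDetecting}: extract product detecting states $\ket{\phi_k}$ from the hypothesis, tensor them into $\ket{\Phi_{\vec{i}}} = \ket{\phi_{i_1}}\otimes\cdots\otimes\ket{\phi_{i_m}}$, check the factorized overlaps, and verify that regrouping the factors by party preserves the product form. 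Your version buys a cleaner, non-adaptive certificate: it sidesteps the paper's somewhat delicate claim that after a conclusive click the state ``evolves to $\ket{\psi^\prime_{l^\star}}\otimes(\cdots)$ \dots\ where the term within the brackets remains unchanged,'' which is the one step in the paper's argument a careful reader might press on, and it also dispenses with the auxiliary fact that subsets of conclusively locally distinguishable sets remain so. The paper's version, conversely, exhibits an explicit round-by-round LOCC protocol rather than delegating the protocol's existence to the sufficiency direction of Chefles' lemma. One point in your write-up that is genuinely load-bearing rather than pedantic: the paper's statement of Lemma~\ref{lemmaCheflesProductDetecting} literally requires only $\langle\phi_k|\phi_k\rangle\neq 0$, so your separate derivation that $\langle\psi_k|\phi_k\rangle\neq 0$ (from Definition~\ref{defCheflesCLSD}, or from the linear independence forced by Lemma~\ref{lemma1}) is exactly what makes the diagonal overlap $\prod_t\langle\psi_{i_t}|\phi_{i_t}\rangle$ nonzero; without it the construction would not close.
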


The detailed proof is provided in Appendix \ref{appendixC}. For a set of states $\mathcal{S}$ that can be conclusively distinguished locally, the $m$-CLSM task for this set can be viewed as a standard CLSD task of higher-dimensional linearly independent states. Since every member of the original set of states is conclusively locally identifiable, this fact also holds for the $m$-CLSM states. This is the main idea behind the proof.\\
\color{black}

As the ensemble \(\mathcal{S}^{\text{P-W}}\) in Eq.(\ref{eqPW}) allows CLSD (see Proposition \ref{propPW}), thus according to Theorem \ref{theo1} it also allows any $m$-CLSM. We thus proceed to provide a different example of product states ensemble that does not allow CLSM. For that first consider the symmetric informationally complete (SIC) ensemble of the qubit:
\begin{align}\label{SIC}
\mathrm{SIC}
\equiv \left\{\begin{aligned}
&~~|s_1\rangle:= |0\rangle,~\text{and ~for}~j\in\{2,3,4\}\\
&|s_j\rangle:= \frac{1}{\sqrt{3}} \left( |0\rangle + e^{2\pi\iota (j-2)/3}\sqrt{2} |1\rangle \right)
\end{aligned} \right\}.   
\end{align}
Now, the double-SIC ensemble
\begin{align}\label{SIC}
\mathcal{S}^{\uparrow\hspace{-.05cm}\uparrow}:=\left\{\ket{s_i}_A\otimes\ket{s_i}_B~\text{s.t.}~\ket{s_i}\in\mathrm{SIC,~~i\in\{1,\cdots,4}
\right\}\},  
\end{align}
might be considered as a natural generalization of the Peres-Wootters ensemble \(\mathcal{S}^{\text{P-W}}\), in terms of symmetry of each party's qubit subsystem. However, the states in \(\mathcal{S}^{\text{P-W}}\) are linearly independent, whereas the ensemble \(\mathcal{S}^{\uparrow\hspace{-.05cm}\uparrow}\) spans a three dimensional subspace of \(\mathbb{C}^2\otimes\mathbb{C}^2\), in particular the subspace orthogonal to the state \(\ket{\Psi^-}\). Hence, the states in \(\mathcal{S}^{\uparrow\hspace{-.05cm}\uparrow}\) becomes linearly dependent making its conclusive discrimination impossible (see Lemma \ref{lemmaCheflesGlobal}). While the spins in \(\mathcal{S}^{\uparrow\hspace{-.05cm}\uparrow}\) are in parallel configuration, we thus consider another ensemble, namely the anti-parallel double-SIC ensemble
\begin{align}\label{anti-SIC}
\mathcal{S}^{\uparrow\hspace{-.05cm}\downarrow}:=\left\{
\ket{\gamma_i}:=\ket{s_i}_A\otimes\ket{s^\perp_i}_B~\text{s.t.}~\ket{s_i}\in\mathrm{SIC}
\right\},  
\end{align}
where \(\ket{\psi^\perp}\) denotes the state orthogonal to \(\ket{\psi}\). Our next result establishes a nonlocal behavior of this ensemble.  
\begin{proposition}\label{propSICASIC}
The ensemble of states \(\mathcal{S}^{\uparrow\hspace{-.05cm}\downarrow}\) does not allow conclusive local discrimination, while it allows conclusive global discrimination.     
\end{proposition}
\begin{proof}
The states in \(\mathcal{S}^{\uparrow\hspace{-.05cm}\downarrow}\) are linearly independent and thus allow conclusive global discrimination (Lemma \ref{lemmaCheflesGlobal}). This set trivially forms a UPB as the states span the full two-qubit Hilbert space. Moreover, it can be trivially shown that the vector orthogonal to the span of \(\mathcal{S}^{\uparrow\hspace{-.05cm}\downarrow}\setminus\{\ket{\gamma_i}\}\) is entangled for all \(i\in\{1,\cdots,4\}\). Therefore, \(\mathcal{S}^{\uparrow\hspace{-.05cm}\downarrow}\) is not a  GUPB, and according to Lemma \ref{lemmaGUB} it does not allow conclusive local discrimination.  
\end{proof}

Proposition \ref{propSICASIC} thus establishes a stronger nonlocality of the ensemble \(\mathcal{S}^{\uparrow\hspace{-.05cm}\downarrow}\) as compared to the ensemble \(\mathcal{S}^{\text{P-W}}\). In this direction, the product states used to demonstrate nonlocality in the work of Peres and Wootters~\cite{PeresWooters} (see Proposition~\ref{propPW}), as well as those introduced by Bennett {\it et al.}, are conclusively distinguishable via LOCC. Consequently, by Theorem~\ref{theo1}, any $m$-CLSM protocol is possible for these sets. In contrast, the product states comprising the set $\mathcal{S}^{\uparrow\hspace{-.05cm}\downarrow}$ are not locally conclusively distinguishable. A similar property holds for the set $\mathcal{S}^{\text{D}}$, introduced by Duan {\it  et al.}, which is also locally conclusively indistinguishable. As we will show later, the converse of Theorem~\ref{theo1} does not hold (See Appendix  \ref{appendixF}. We shall demonstrate this by providing an example of a set of two mutually 
orthogonal states for which CLSD is not possible. Interestingly, even under 
arbitrarily large but finite multiple-copy assistance, the set remains 
conclusively indistinguishable. However, we find that its CLSM --- in this 
case, $2$-CLSM --- is indeed possible.

  The implications of converse of Theorem \ref{theo1} not holding are particularly interesting and constitute one of the main motivations of the present work. For a set of states $\mathcal{S}$, the existence of an integer $m^\star \in (1, |\mathcal{S}|]$ for which $m^\star$-CLSM is not possible indicates a stronger form of nonlocality than what has previously been identified in the literature. This phenomenon is precisely what we establish for the set $\mathcal{S}^{\uparrow\hspace{-.05cm}\downarrow}$  in the following theorem. 
\begin{theorem}\label{theo2}
The ensemble of states \(\mathcal{S}^{\uparrow\hspace{-.05cm}\downarrow}\) does not allow \(2\)-CLSM.     
\end{theorem}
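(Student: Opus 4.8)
### Proof Strategy

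The plan is to recast the $2$-CLSM task for $\mathcal{S}^{\uparrow\downarrow}$ as a conclusive global/local discrimination problem on the derived ensemble of ordered pairs, and then exhibit a genuine linear dependence that obstructs even \emph{global} conclusive discrimination of a suitable sub-collection. By Proposition~\ref{lemmaLI} the full derived set of ordered $2$-tuples drawn from a linearly independent $\mathcal{S}^{\uparrow\downarrow}$ is itself linearly independent, so the obstruction cannot come from the whole $16$-dimensional ambient space. Instead, I would examine the finer structure demanded by the \emph{local} marking protocol: marking the two received states requires, for each party, a product detecting vector in the sense of Lemma~\ref{lemmaCheflesProductDetecting}, so the real task is to find for each ordered pair $(\gamma_i,\gamma_j)$ a \emph{product} effect on $\mathcal{H}^{\otimes 2}$ (factorized across the two copies \emph{and} across $A$ vs.\ $B$) that annihilates all competing pairs. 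The strategy is to show no such product detecting state can exist for at least one pair.

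First I would write out explicitly the six (or twelve, counting both orderings) states $\ket{\gamma_i}\otimes\ket{\gamma_j}$ with $i\neq j$, using the SIC vectors $\ket{s_i}$ and their orthocomplements $\ket{s_i^\perp}$ from Eq.~(\ref{anti-SIC}). The key structural fact I would leverage is the one already proved in Proposition~\ref{prop4}: $\mathcal{S}^{\uparrow\downarrow}$ is a UPB but \emph{not} a GUPB, because the vector orthogonal to $\mathrm{Span}(\mathcal{S}^{\uparrow\downarrow}\setminus\{\ket{\gamma_i}\})$ is entangled for every $i$. This non-genuineness is the engine of the obstruction, and I would look to propagate it to the tensor-product level. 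Concretely, Lemma~\ref{GUPB} tells us that $1$-CLSD already fails for $\mathcal{S}^{\uparrow\downarrow}$; the content of the theorem is that $2$-CLSM fails as well, which is \emph{not} automatic since moving to two copies can restore conclusive discriminability in general (as the paper stresses in the discussion preceding Proposition~\ref{lemmaNOTPOSSIBLE}).

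The decisive step is therefore to analyze the product-detecting condition on the two-copy space. For $2$-CLSM to succeed locally, marking the pair $(\gamma_i,\gamma_j)$ demands that each party (say Alice, holding the two $A$-subsystems) apply a local conclusive discrimination that identifies her reduced states; but the $A$-marginals of the $\gamma$'s are exactly the SIC vectors $\ket{s_i}$, which span only a two-dimensional space and hence are \emph{linearly dependent}. I would argue that the local reduction of the $2$-CLSM task forces, on Alice's side, a conclusive discrimination among ordered pairs of SIC states drawn from a linearly dependent set --- precisely the situation ruled out by Proposition~\ref{lemmaNOTPOSSIBLE} applied to the single-party SIC ensemble. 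In other words, although the joint two-copy states $\ket{\gamma_i\gamma_j}$ are linearly independent, the \emph{local} marking demands that each party separately mark its own linearly dependent marginal ensemble, which is impossible conclusively. Making this reduction airtight --- showing that any LOCC marking protocol induces a valid single-party conclusive marking of the SIC marginals --- is the step I expect to be the main obstacle, since one must rule out the possibility that cross-party classical communication and the correlation between the $A$ and $B$ halves (the anti-parallel structure $\ket{s_i}\otimes\ket{s_i^\perp}$) circumvents the per-party linear dependence. The anti-parallel correlation is exactly what makes this subtle: unlike $\mathcal{S}^{\uparrow\uparrow}$, here the joint states are full-rank, so the obstruction must be shown to survive the interplay between the two sides rather than following from marginal dimension counting alone.
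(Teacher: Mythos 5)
There is a genuine gap, and it sits exactly where you flag ``the main obstacle.'' Your decisive step asserts that a local marking protocol forces Alice, on her own, to conclusively discriminate the ordered pairs of her marginal SIC vectors, which are linearly dependent, so that Proposition~\ref{lemmaNOTPOSSIBLE} applies to her single-party ensemble. This reduction is not merely unproven --- it is false, and the paper itself contains the counterexample: the double-trine ensemble $\mathcal{S}^{\text{P-W}}$ has $A$-marginals $\{\ket{w_0},\ket{w_1},\ket{w_2}\}$ that span only $\mathbb{C}^2$ and are therefore linearly dependent, yet Proposition~\ref{prop1} shows $\mathcal{S}^{\text{P-W}}$ \emph{is} conclusively locally distinguishable (and hence, by Theorem~\ref{theo1}, admits every $m$-CLSM). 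The reason your reduction fails is visible in Lemma~\ref{lemmaCheflesProductDetecting}: the product detecting state $\ket{\alpha}_A\otimes\ket{\beta}_B$ for $\ket{\chi_{ij}}=\ket{s_is_j}\otimes\ket{s_i^\perp s_j^\perp}$ only needs the \emph{product} $\langle s_{i'}s_{j'}|\alpha\rangle\langle s_{i'}^\perp s_{j'}^\perp|\beta\rangle$ to vanish for each competing pair $(i',j')$, so the orthogonality constraints can be distributed between Alice's factor and Bob's factor pair by pair. No single party ever has to annihilate all competitors by herself, so per-party linear dependence of the marginals proves nothing.

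The paper's proof takes the route you set up in your first paragraph but then abandons: it treats the twelve derived states $\ket{\chi_{ij}}$ directly as a candidate product basis in $\mathbb{C}^4_A\otimes\mathbb{C}^4_B$ and applies the GUB criterion of Lemma~\ref{GUPB}. Concretely, it shows (i) the set is a UPB, by the local-rank criterion of Bennett \emph{et al.}: a product state orthogonal to the span would require a partition $\mathcal{S}_A\cup\mathcal{S}_B$ with both local ranks $r_A<4$ and $r_B<4$, and an exhaustive check of the partitions with $r_A\le 3$ shows $r_B=4$ in every case; and (ii) the set is \emph{not} a genuine UPB, by exhibiting a proper subset (dropping $\ket{\chi_{12}}$) that is still a UPB via the same partition analysis. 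Lemma~\ref{GUPB} then kills conclusive local discrimination of the derived set, hence $2$-CLSM of $\mathcal{S}^{\uparrow\hspace{-.05cm}\downarrow}$. If you want to salvage your write-up, replace the marginal-dimension argument with this two-copy UPB/non-GUPB analysis (or an equivalent direct search showing no product detecting state exists for some fixed $\ket{\chi_{ij}}$, in the spirit of the paper's proof of Theorem~\ref{theo3}).
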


The key steps involve demonstrating that the set \(\mathcal{S}^{\uparrow\hspace{-.05cm}\downarrow}_{(2)}\), defined as the $2$-CLSM of the ensemble \(\mathcal{S}^{\uparrow\hspace{-.05cm}\downarrow}\), constitutes an unextendible product basis (UPB), but not a genuine
one. Consequently, by invoking Lemma \ref{lemmaGUB}, the claim follows. The proof is discussed in detail in Appendix \ref{appendixD}. Interestingly, \(\mathcal{S}^{D}\) also exhibits the same phenomenon, and we discuss it in Appendix \ref{appendixE}.\\
\color{black}

{\it Discussion.}\label{conclusion}
In this work, we unify two key concepts in state discrimination: conclusive local state discrimination (CLSD) and local state marking (LSM), by introducing a new class of tasks called \(m\)-conclusive local state marking (\(m\)-CLSM). This framework highlights subtle aspects of quantum nonlocality. Propositions~\ref{propNOTPOSSIBLE} and~\ref{propLI} reveal structural features of \(m\)-CLSM, while Theorem~\ref{theo2} (also see  Appendix \ref{appendixE}) shows that certain product states exhibit stronger nonlocality than those in earlier constructions by Bennett \textit{et al.}~\cite{BennettNLWE} and Peres--Wootters~\cite{PeresWooters}.

These results open up promising directions for future work. Our use of adaptive LOCC protocols, though operationally feasible, yields low success probabilities. Exploring whether global measurements could improve this is a natural next step.  A key open question is whether one can construct a binary ensemble of mutually orthogonal quantum states that are not only conclusively indistinguishable via LOCC, but also disallow CLSM (here $2$-CLSM). In order to construct such an example, the two 
subspaces corresponding to the supports of the two states, say 
$\rho_{AB}$ and $\sigma_{AB}$ with 
$\mathrm{supp}(\rho_{AB}) \perp \mathrm{supp}(\sigma_{AB})$, 
must themselves be completely entangled. Under their $2$-CLSM, which 
involves tensoring two distinct states, the supports corresponding to 
$\rho_{A_1B_1} \otimes \sigma_{A_2B_2}$ and $\sigma_{A_1B_1} \otimes \rho_{A_2B_2}$ 
would also need to be completely entangled across the $A_1A_2 \,|\, B_1B_2$ bipartition. We believe these works \cite{duansup,cubitt} might be helpful in this construction. Thus, our work provides a direct connection to the vastly explored area of superactivation phenomenon of \textit{zero error capacity of quantum channels} \cite{duansup,cubitt}  and Unextendible Bases (UB) \cite{DiVincenzo2003} and Completely Entangled Subspaces (CES) \cite{BHAT}.

Additionally, comparing different \(m\)-CLSM tasks remains an open challenge; for example, finding state sets where \(2\)-CLSM fails but \(3\)-CLSM succeeds would be particularly insightful.

{\bf Acknowledgment:} It is a pleasure to thank Manik Banik for various
stimulating discussions and useful suggestions.

\onecolumngrid
\appendix

\section{Proof of Proposition \ref{propNOTPOSSIBLE}}\label{appendixA}
   \setcounter{proposition}{1}  
\begin{proposition}\label{lemmaNOTPOSSIBLE}
For any linearly dependent set of multipartite quantum states, the \( m \)-CLSM task is not possible for any \( m \geq 1 \).
\end{proposition}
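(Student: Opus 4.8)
The plan is to reduce the impossibility to a purely linear-algebraic statement about the composite ensemble and then to exhibit the required dependence explicitly. Label an ordered choice of $m$ distinct states by an injection $\sigma:\{1,\dots,m\}\hookrightarrow\{1,\dots,N\}$ and write $\ket{\Phi_\sigma}:=\ket{\psi_{\sigma(1)}}\otimes\cdots\otimes\ket{\psi_{\sigma(m)}}$ for the state actually distributed. Conclusively marking the chosen permutation is precisely conclusive discrimination of the family $\{\ket{\Phi_\sigma}\}_\sigma$, so by Lemma~\ref{lemma1} it is possible \emph{even globally} only if this family is linearly independent. As LOCC is strictly weaker than global discrimination, it suffices to show that linear dependence of $\mathcal{S}$ forces linear dependence of $\{\ket{\Phi_\sigma}\}_\sigma$; for $m=1$ this is the hypothesis itself, and the construction below covers all $m\ge 1$ uniformly.

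A first warning is that dimension counting does not settle the matter: for $N=4$, $m=3$ and a three-dimensional span one has $24$ composite states inside a $27$-dimensional space, so their dependence is genuine but not forced by counting and must be produced by hand. To this end I would fix a nontrivial relation $\sum_j c_j\ket{\psi_j}=0$, choose a basis of $\mathcal{W}:=\Span\,\mathcal{S}$, and record the $\ket{\psi_j}$ as the columns of a matrix $\Psi$, so that $\Psi c=0$. Pairing $\sum_\sigma\lambda_\sigma\ket{\Phi_\sigma}$ against all product functionals shows that this tensor vanishes iff the multilinear form $F(x^{(1)},\dots,x^{(m)})=\sum_\sigma\lambda_\sigma\,x^{(1)}_{\sigma(1)}\cdots x^{(m)}_{\sigma(m)}$ vanishes whenever the $x^{(k)}$ range over the row space of $\Psi$. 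Since that row space lies in $c^{\perp}$, it is enough to manufacture a nonzero ``diagonal-free'' $F$ (one supported on injective $\sigma$) that vanishes on $(c^{\perp})^{m}$.

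The device I have in mind is a determinant. Pick an index $a$ with $c_a\neq 0$ together with $m-1$ further distinct indices $t_1,\dots,t_{m-1}$ avoiding $a$ (possible because $m\le N$), and let $F$ be the determinant of the $m\times m$ matrix whose $(k,i)$ entry is $x^{(k)}_{t_i}$ for $i\le m-1$ and whose last column is $(\ell_k)_k$ with $\ell_k=\sum_j c_j x^{(k)}_j$. Expanding along the last column writes $F$ as a combination of the $\ell_k$, each of which vanishes on $c^{\perp}$, so $F$ vanishes on $(c^{\perp})^{m}$. On the other hand, splitting $\ell_k$ into its part supported on $\{t_1,\dots,t_{m-1}\}$ and the complementary part $\hat\ell_k=\sum_{j\notin\{t_i\}}c_j x^{(k)}_j$, the former turns the last column into a fixed combination of the first $m-1$ columns and hence contributes nothing; thus $F$ equals the same determinant with $\ell_k$ replaced by $\hat\ell_k$, which on expansion is manifestly diagonal-free. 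This cancellation of the repeated-index terms is the whole point. For $m=2$ the recipe collapses to the transparent relation $\sum_{j\neq t}c_j\,(\ket{\psi_t\psi_j}-\ket{\psi_j\psi_t})=0$.

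It then remains to confirm $F\not\equiv 0$, which I would check by evaluating at $x^{(k)}=e_{t_k}$ for $k<m$ and $x^{(m)}=e_a$: the determinant reduces to that of a diagonal matrix with entries $1,\dots,1,c_a$, giving $c_a\neq 0$. The coefficients $\lambda_\sigma$ read off from $F$ are then a nonzero relation supported on genuine, distinct-index tuples, proving $\{\ket{\Phi_\sigma}\}_\sigma$ linearly dependent and $m$-CLSM impossible. The crux of the whole argument is this single step, where the diagonal terms must cancel---so that the surviving relation involves only legitimately markable tuples---while the form stays nonvanishing; the antisymmetry of the determinant is what secures both at once. It is also precisely this forced off-diagonal structure that separates $m$-CLSM from multi-copy CLSD, where identical copies supply a symmetric arrangement in which linear independence can be recovered.
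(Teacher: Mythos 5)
Your argument is correct, and it takes a genuinely different---and, as it happens, more robust---route than the paper's. The paper performs the same first reduction (viewing $m$-CLSM as CLSD of the derived ensemble of ordered $m$-tuples of distinct states, then invoking linear independence as a necessary condition via Lemma~\ref{lemma1}), but it disposes of the linear-dependence claim by pure dimension counting: it asserts that the number of tuples $N'=N(N-1)\cdots(N-m+1)$ always exceeds $r^m$, where $r<N$ is the dimension of $\Span\,\mathcal{S}$. That inequality is false in general: for $r=N-1$ and $m=3$ one has $N(N-1)(N-2)=(N-1)(N^2-2N)<(N-1)(N^2-2N+1)=(N-1)^3$, and your own example $N=4$, $r=3$, $m=3$ (giving $24<27$) is the smallest instance. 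So the counting argument covers $m=1,2$ but breaks down for larger $m$ whenever $r$ is close to $N$, and the proposition for all $m\ge 1$ really does require an explicit dependence of the kind you build. Your bordered-determinant construction supplies it cleanly: $F$ is multilinear in the $x^{(k)}$ because the determinant is multilinear in its rows; it vanishes on $(c^{\perp})^m$ and hence on the row space of $\Psi$ because the last column is built from the $\ell_k$; the column operation replacing $\ell_k$ by $\hat\ell_k$ leaves the form unchanged while making every surviving monomial carry $m$ distinct indices (one $t_{\pi(k)}$ per off-last column plus one index outside $\{t_1,\dots,t_{m-1}\}$); and the evaluation at $(e_{t_1},\dots,e_{t_{m-1}},e_a)$ gives $c_a\neq 0$, certifying nontriviality. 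The only cosmetic slip is that the evaluated matrix in the original $\ell_k$ form is upper triangular rather than diagonal, though the determinant is $c_a$ either way (and it is genuinely diagonal after the column operation). In short, what the paper's approach buys is brevity where it works; what yours buys is validity for all $m$, and for $m\ge 3$ it is effectively the proof the paper is missing.
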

\begin{proof}
For simplicity, we give the proof for $3$-CLSM. For general $m$, it can be easily generalised. \\
Let us suppose that $\{v_1,\dots,v_N\}$ is a set of distinct vectors $(N>2)$ with a linear dependence
\begin{equation}
\sum_{i=1}^N \alpha_i v_i = 0, 
\quad \text{not all } \alpha_i = 0.
\label{eq:dep}
\end{equation}

Now, we fix two indices $j_2,j_3$ distinct from each other. Tensoring Eq.~(\ref{eq:dep}) with $v_{j_2}\otimes v_{j_3}$ yields
\begin{equation}
\sum_{i=1}^N \alpha_i \, v_i \otimes v_{j_2} \otimes v_{j_3} = 0.
\label{eq:tensor}
\end{equation}

We apply every permutation $\sigma \in S_3$ (the symmetric group on three slots) to Eq.~(\ref{eq:tensor}), multiply by $\mathrm{sgn}(\sigma)$, and sum over all $\sigma$. This antisymmetrization produces
\begin{equation}
\sum_{i=1}^N \alpha_i 
\Bigg( \sum_{\sigma \in S_3} \mathrm{sgn}(\sigma)\,
\sigma \cdot (v_i \otimes v_{j_2} \otimes v_{j_3}) \Bigg) = 0.
\label{eq:antisym}
\end{equation}

Now, define for each $i$
\begin{equation}
A_i := \sum_{\sigma \in S_3} \mathrm{sgn}(\sigma)\,
\sigma \cdot (v_i \otimes v_{j_2} \otimes v_{j_3}).
\end{equation}
Then Eq.~(\ref{eq:antisym}) reads
as  $\sum_{i=1}^N \alpha_i A_i = 0.$\\

Hence, two cases arise:
\begin{itemize}
    \item If $i \in \{j_2,j_3\}$, then some tensor slots are repeated. In this case, $A_i=0$ because exchanging the two identical slots yields cancellation.
    \item If $i \notin \{j_2,j_3\}$, then $A_i$ is a nonzero alternating sum of tensors with three distinct indices $(i,j_2,j_3)$ in all orders.
\end{itemize}

Thus the sum simplifies to
\begin{equation}
\sum_{i \notin \{j_2,j_3\}} \alpha_i A_i = 0.
\end{equation}
Since not all $\alpha_i$ vanish and only two indices are excluded, at least one surviving coefficient contributes. This provides a nontrivial linear relation among the $3$-CLSM states and thus they are linearly dependent.
\end{proof}
\color{black}

\section{Proof of Proposition \ref{propLI}}\label{appendixB}
\begin{proposition}
    For any set of linearly independent multipartite quantum states, the set of states arising in \(m\)-CLSM task is linearly independent.
\end{proposition}

\begin{proof}

    Let \( \mathcal{S} \equiv \left\{\ket{\psi_1},\cdots,\ket{\psi_N}\right\} \subset \bigotimes_{i=1}^K \mathbb{C}^{d_i}_{A_i}\) be a linearly independent set of states. Span of \(\mathcal{S}\) constitute a \(N\)-dimensional subspace, i.e., \(\text{Span}(\mathcal{S}):=\mathcal{H}\subset\bigotimes_{i=1}^K \mathbb{C}^{d_i}_{A_i}\). Consider now a new set of state \(\mathcal{T}_m\) constituted by \(m\)-fold tenor product of the states from \(\mathcal{S}\), i.e. \(\mathcal{T}_m\equiv\{\ket{\psi_{i_1}} \otimes\cdots\otimes\ket{\psi_{i_m}}\}^{N}_{i_1,\cdots,~i_m=1}\). The set of states is \(\mathcal{T}_m\) are again linearly independent and they span a \(N^m\)-dimensional vector space \(\mathcal{H}^{\otimes m}\) \cite{Steven_LinearAlgebra}. The \(m\)-CLSM task constituted for the set \(\mathcal{S}\) consists of states \(\mathcal{S}^\prime_m\equiv\{\ket{\psi_{i_1}} \otimes\cdots\otimes\ket{\psi_{i_m}}\}^{N}_{i_1,\cdots,i_m=1}\), that for a subspace of the set \(\mathcal{T}_m\).  Since any subset of a linearly independent set of states is again linearly independent \cite{Steven_LinearAlgebra}, therefore the set of states \( \mathcal{S}^\prime_m\) are linearly independent. This completes the proof.
\end{proof}

\section{Proof of Theorem \ref{theo1}}\label{appendixC}
\setcounter{theorem}{0}  
\begin{theorem}
    For a set of multipartite states $\mathcal{S}$, conclusive local state discrimination (i.e. $1$-CLSM) always implies m-conclusive local state marking (i.e. $m$-CLSM ~~~$\forall m \in \{1,2,\cdots ,|\mathcal{S}|\}$).
\end{theorem}
\begin{proof}

Let \( \mathcal{S}_N \equiv \left\{\ket{\psi_1},\cdots,\ket{\psi_N}\right\} \subset \bigotimes_{i=1}^K \mathbb{C}^{d_i}_{A_i}:= \mathcal{H} \) be a set of states that allows conclusive local discrimination. The problem of $m$-CLSM for the set $\mathcal{S}_N$ can be reformulated as a CLSD problem of the set of states $\mathcal{S}^{[m]}_{\mathcal{P}[\{N\}]}\equiv\left\{\mathcal{P}^{[m]}\left(\otimes_{i=1}^N\ket{\psi_i}\right)\right\}\subset\mathcal{H}^{\otimes m}$, where $\left\{\mathcal{P}^{[m]}\left(\otimes_{i=1}^N\ket{\psi_i}\right)\right\}$ denotes the set of tensor product states generated through permutations of $m$ distinct indices from the set $\{1,\cdots,N\}$. For instance, for the simple case where $N=3, m=2$, it means
\begin{align}
~~~~~~~\mathcal{S}^{[2]}_{\mathcal{P}[\{3\}]} 
&:= \left\{ \mathcal{P} \left( \otimes_{i=1}^2 \ket{\psi_i} \right) \right\} \notag \\
&= \left\{ 
\ket{\psi_{i_1}} \otimes \ket{\psi_{i_2}} \;\middle|\;
i_1, i_2 \in \{1,2,3\},~ i_1 \neq i_2 
\right\} \notag \\
&= \left\{
\ket{\psi_1\psi_2},\,
\ket{\psi_1\psi_3},\,
\ket{\psi_2\psi_1},\,
\ket{\psi_2\psi_3}, \right. \notag \\
&\quad \left.
\ket{\psi_3\psi_1},\,
\ket{\psi_3\psi_2}
\right\}, \text{where } \ket{xy} := \ket{x} \otimes \ket{y} \notag
\end{align}

 The states in $\mathcal{S}^{[m]}_{\mathcal{P}[\{N\}]}$ can be expressed group-wise as follows,
\begin{align*}
\mathcal{G}^{[m]}_l:=\ket{\psi_l}\otimes\mathcal{S}^{[m-1]}_{\mathcal{P}[\{N\}\setminus l]}\equiv\ket{\psi_l}\otimes\left\{\mathcal{P}\left(\otimes_{i\neq l}^{(m-1)}\ket{\psi_i}\right)\right\},
\end{align*}
where $l\in\{1,\cdots,N\}$. Now, the groups $\mathcal{G}^{[m]}_l$ make disjoint partitions of the set $\mathcal{S}^{[m]}_{\mathcal{P}[\{K\}]}$, {\it i.e.,} $\mathcal{S}^{[m]}_{\mathcal{P}[\{K\}]}\equiv \bigcup_{l=1}^K \mathcal{G}^{[m]}_l$ s.t. $\mathcal{G}^{[m]}_l\cap\mathcal{G}^{[m]}_{l^\prime}=\emptyset$ whenever $l\neq l^\prime$. Since the states in $\mathcal{S}_N$ are locally conclusively distinguishable, by local operations on the first part of the tensor product states in $\mathcal{S}^{[m]}_{\mathcal{P}[\{N\}]}$ we can know conclusively in which of the above groups the given state lies. When a conclusive outcome clicks, the group turns out to be $\mathcal{G}^{[m]}_{l^\star}$ (i.e., if the index $l$ has been identified to be $l^*$), the given state $\ket{\psi_{l^\star}}\otimes(\cdots)$ evolves to $\ket{\psi^\prime_{l^\star}}\otimes(\cdots)$ due to the LOCC protocol, where the term within the brackets remain unchanged and hence further LOCC protocols can be applied on them. The group of states $\mathcal{G}^{[m]}_{l^\star}=\ket{\psi^\prime_{l^\star}}\otimes\mathcal{S}^{[m-1]}_{\mathcal{P}[\{K\}\setminus {l^\star}]}$ can be further partitioned into disjoint subsets as,
\begin{align*}
\mathcal{G}^{[m]}_{l^\star}\equiv\bigcup\mathcal{G}^{[m]}_{l^\star,u}~~\mbox{s.t.}~~\mathcal{G}_{l^\star,u}\cap\mathcal{G}_{l^\star,u^\prime}=\emptyset~\forall~u\neq u^\prime\nonumber,\\
\mbox{where}~\mathcal{G}^{[m]}_{l^\star,u}:=\ket{\psi^\prime_{l^\star}}\otimes\ket{\psi_{u}}\otimes\mathcal{S}^{[m-2]}_{\mathcal{P}[\{N\}\setminus \{l^\star,u\}]}
\end{align*} 
and $u,u^\prime\in\{1,\cdots,N\}\setminus l^\star$. Since any subset of a set of states that is locally conclusively distinguishable remains locally conclusively distinguishable, the index \( u \) can be identified conclusively by applying an appropriate local protocol to the \( \ket{\psi_{u}} \) portion of the given state. As in the previous case, the other parts of the state remain unchanged. This process can be iterated until the state in \( \mathcal{S}^{[m]}_{\mathcal{P}[\{N\}]} \) is conclusively identified, which in turn allows for the m-conclusive marking of the state in \( \mathcal{S}^{[m]}_{N} \). This completes the proof.
\end{proof}

\section{ Proof of Theorem \ref{theo2}}\label{appendixD}
\begin{theorem}
    The ensemble of states \(\mathcal{S}^{\uparrow\hspace{-.05cm}\downarrow}\) does not allow \(2\)-CLSM. 
\end{theorem}
\begin{proof}
The \(2\)-CLSM of the ensemble \(\mathcal{S}^{\uparrow\hspace{-.05cm}\downarrow}\) boils down to CLSD of the ensemble 
\begin{align*}
\mathcal{S}^{\uparrow\hspace{-.05cm}\downarrow}_{(2)}\equiv\left\{\begin{aligned}
&\ket{\chi_{ij}}:=\ket{s_i}\ket{s_j}\otimes\ket{s^\perp_i}\ket{s^\perp_j},\\
&\text{s.t.}~\ket{s_i}\in\mathrm{SIC}~\&~i\neq j
\end{aligned}\right\} \subset\mathbb{C}^4_A\otimes\mathbb{C}^4_B. 
\end{align*}

The key steps involve demonstrating that the set \(\mathcal{S}^{\uparrow\hspace{-.05cm}\downarrow}_{(2)}\) constitutes an unextendible product basis (UPB), but not a genuine
one. Consequently, by invoking Lemma  \textcolor{red}{3}, the claim follows.\\

Clearly we have \(\mathrm{D}[\mathrm{Spn}\{\mathcal{S}^{\uparrow\hspace{-.05cm}\downarrow}_{(2)}\}]=12\), where \(\mathrm{Spn}\{\star\}\) denotes span of a set \(\{\star\}\) and \(\mathrm{D}[\mathrm{Spn}\{\star\}]\) denotes its dimension. As it turn out the ensemble \(\mathcal{S}^{\uparrow\hspace{-.05cm}\downarrow}_{(2)}\) forms a UPB. To prove this let us consider a product state $\ket{\phi} = \ket{\alpha} \otimes \ket{\beta} \in \mathbb{C}^4 \otimes \mathbb{C}^4$ that lies in the orthogonal complement of \(\mathrm{Spn}\{\mathcal{S}^{\uparrow\hspace{-.05cm}\downarrow}_{(2)}\}\). Therefore we have  
\begin{align}
\langle s_i \otimes s_j | \alpha \rangle =0~~\text{and/or}~~\langle s_i^\perp \otimes s_j^\perp | \beta \rangle = 0,\nonumber\\
\forall~i,j\in\{1,\cdots4\},~~\text{where}~~i\neq j.
\end{align}
Consider a disjoint partition of the ensemble \(\mathcal{S}^{\uparrow\hspace{-.05cm}\downarrow}_{(2)}\) i.e.
\begin{align}
\mathcal{S}_A\cup\mathcal{S}_B=\mathcal{S}^{\uparrow\hspace{-.05cm}\downarrow}_{(2)}~~\text{s.t.}~~\mathcal{S}_A\cap\mathcal{S}_B=\emptyset,
\end{align}
and let, \(r_A := \text{Rank}\left\{\ket{\chi_{ij}} \in \mathcal{S}_A \right\}\) be the local rank of subset \(\mathcal{S}_A\) as seen by Alice, and similarly \(r_B := \text{Rank}\left\{ \ket{\chi_{ij}} \in \mathcal{S}_B \right\}\) be the local rank of subset \(\mathcal{S}_B\) as seen by Bob. Now, the ensemble \(\mathcal{S}^{\uparrow\hspace{-.05cm}\downarrow}_{(2)}\) is extendible if and only if there exists a partition of it such that both \(r_A < 4\) and \(r_B < 4\) \cite{Bennett99(1)}. A straightforward but lengthy calculation shows that any partition containing seven or more states in \(\mathcal{S}_A\) has local rank \(r_A=4\), likewise for \(\mathcal{S}_B\). Thus, to find the desired product state, we must divide \(\mathcal{S}^{\uparrow\hspace{-.05cm}\downarrow}_{(2)}\) into two sets of equal cardinalities, {\it i.e} six. We have only the following four partitionings where \(\mathcal{S}_A\) has local rank \(r_A=3\): 
\begin{subequations}
\begin{align}
\{\ket{\chi_{12}},\ket{\chi_{13}},\ket{\chi_{14}},\ket{\chi_{21}},\ket{\chi_{31}},\ket{\chi_{41}}\},\nonumber\\
\{\ket{\chi_{21}},\ket{\chi_{23}},\ket{\chi_{24}},\ket{\chi_{12}},\ket{\chi_{32}},\ket{\chi_{42}}\},\nonumber\\
\{\ket{\chi_{31}},\ket{\chi_{32}},\ket{\chi_{34}},\ket{\chi_{13}},\ket{\chi_{23}},\ket{\chi_{43}}\},\nonumber\\
\{\ket{\chi_{41}},\ket{\chi_{42}},\ket{\chi_{43}},\ket{\chi_{14}},\ket{\chi_{24}},\ket{\chi_{34}}\}\nonumber.
\end{align}  
\end{subequations}
However, in all of these cases the local rank \(r_B\) of \(\mathcal{S}_B\) turns out to be \(4\). One can also consider other partitions having less number of states in \(\mathcal{S}_A\) with \(r_A\le3\), but in theses cases too we have \(r_B=4\). This ensures that no product state lies in the orthogonal complement of \(\mathrm{Spn}\{\mathcal{S}^{\uparrow\hspace{-0.05cm}\downarrow}_{(2)}\}\), thereby confirming it to be an UPB. 

To explicitly demonstrate the non-genuineness of the UPB, we identify and construct a proper subset of \(\mathcal{S}^{\uparrow\hspace{-.05cm}\downarrow}_{(2)}\) that independently forms a UPB. 
Consider the proper subset $\mathcal{T} := \mathcal{S}^{\uparrow\hspace{-.05cm}\downarrow}_{(2)} \setminus \{ \ket{\chi_{12}} \}$. Regarding the ensemble $\mathcal{T}$  few observations are in order:
\begin{itemize}
    \item  We have only the following five-element subset 
 partitionings where \(\mathcal{S}_A\) has local rank \(r_A=3\): 

\begin{subequations}
\begin{align*}
\begin{array}{ll}
\{\ket{\chi_{13}}, \ket{\chi_{14}}, \ket{\chi_{21}}, \ket{\chi_{31}}, \ket{\chi_{41}}\}, &
\{\ket{\chi_{13}}, \ket{\chi_{21}}, \ket{\chi_{23}}, \ket{\chi_{24}}, \ket{\chi_{43}}\}, \\[5pt]
\{\ket{\chi_{13}}, \ket{\chi_{23}}, \ket{\chi_{31}}, \ket{\chi_{32}}, \ket{\chi_{34}}\}, &
\{\ket{\chi_{13}}, \ket{\chi_{23}}, \ket{\chi_{31}}, \ket{\chi_{32}}, \ket{\chi_{43}}\}, \\[5pt]
\{\ket{\chi_{13}}, \ket{\chi_{23}}, \ket{\chi_{31}}, \ket{\chi_{34}}, \ket{\chi_{43}}\}, &
\{\ket{\chi_{13}}, \ket{\chi_{23}}, \ket{\chi_{32}}, \ket{\chi_{34}}, \ket{\chi_{43}}\}, \\[5pt]
\{\ket{\chi_{13}}, \ket{\chi_{23}}, \ket{\chi_{41}}, \ket{\chi_{42}}, \ket{\chi_{43}}\}, &
\{\ket{\chi_{13}}, \ket{\chi_{31}}, \ket{\chi_{32}}, \ket{\chi_{34}}, \ket{\chi_{43}}\}, \\[5pt]
\{\ket{\chi_{14}}, \ket{\chi_{21}}, \ket{\chi_{23}}, \ket{\chi_{24}}, \ket{\chi_{34}}\}, &
\{\ket{\chi_{14}}, \ket{\chi_{24}}, \ket{\chi_{31}}, \ket{\chi_{32}}, \ket{\chi_{34}}\}, \\[5pt]
\{\ket{\chi_{14}}, \ket{\chi_{24}}, \ket{\chi_{34}}, \ket{\chi_{41}}, \ket{\chi_{42}}\}, &
\{\ket{\chi_{14}}, \ket{\chi_{24}}, \ket{\chi_{34}}, \ket{\chi_{41}}, \ket{\chi_{43}}\}, \\[5pt]
\{\ket{\chi_{14}}, \ket{\chi_{24}}, \ket{\chi_{34}}, \ket{\chi_{42}}, \ket{\chi_{43}}\}, &
\{\ket{\chi_{14}}, \ket{\chi_{24}}, \ket{\chi_{41}}, \ket{\chi_{42}}, \ket{\chi_{43}}\}, \\[5pt]
\{\ket{\chi_{14}}, \ket{\chi_{34}}, \ket{\chi_{41}}, \ket{\chi_{42}}, \ket{\chi_{43}}\}, &
\{\ket{\chi_{21}}, \ket{\chi_{23}}, \ket{\chi_{24}}, \ket{\chi_{31}}, \ket{\chi_{41}}\}, \\[5pt]
\{\ket{\chi_{21}}, \ket{\chi_{23}}, \ket{\chi_{24}}, \ket{\chi_{32}}, \ket{\chi_{42}}\}, &
\{\ket{\chi_{21}}, \ket{\chi_{31}}, \ket{\chi_{32}}, \ket{\chi_{34}}, \ket{\chi_{41}}\}, \\[5pt]
\{\ket{\chi_{21}}, \ket{\chi_{31}}, \ket{\chi_{41}}, \ket{\chi_{42}}, \ket{\chi_{43}}\}, &
\{\ket{\chi_{23}}, \ket{\chi_{31}}, \ket{\chi_{32}}, \ket{\chi_{34}}, \ket{\chi_{43}}\}, \\[5pt]
\multicolumn{2}{c}{
\{\ket{\chi_{24}}, \ket{\chi_{34}}, \ket{\chi_{41}}, \ket{\chi_{42}}, \ket{\chi_{43}}\}
}
\end{array}
\end{align*}
\end{subequations}
However, the local rank \(r_B=4\).
\item  We have only the following six-element partitionings where \(\mathcal{S}_A\) has local rank \(r_A=3\):
\begin{subequations}
\begin{align*}
\begin{array}{ll}
\{\ket{\chi_{13}}, \ket{\chi_{23}}, \ket{\chi_{31}}, \ket{\chi_{32}}, \ket{\chi_{34}}, \ket{\chi_{43}}\}, &
\{\ket{\chi_{14}}, \ket{\chi_{24}}, \ket{\chi_{34}}, \ket{\chi_{41}}, \ket{\chi_{42}}, \ket{\chi_{43}}\}
\end{array}
\end{align*}
\end{subequations}
Again, the local rank \(r_B=4\).
\item One can also consider other partitions having less number of states in \(\mathcal{S}_A\) with \(r_A\le3\), but in theses cases too we have \(r_B=4\).

\end{itemize}

Thus, in all of these cases the local rank \(r_B\) of \(\mathcal{S}_B\) turns out to be \(4\). This ensures that \(\mathcal{S}^{\uparrow\hspace{-0.05cm}\downarrow}_{(2)}\), is not a GUPB and thus is not conclusively distinguishable via LOCC due to Lemma \textcolor{red}{3}.
\end{proof}

\color{black}

  \begin{figure}[h!]\label{CLSMfig2}
	\centering
	\includegraphics[width=0.6
    \textwidth]{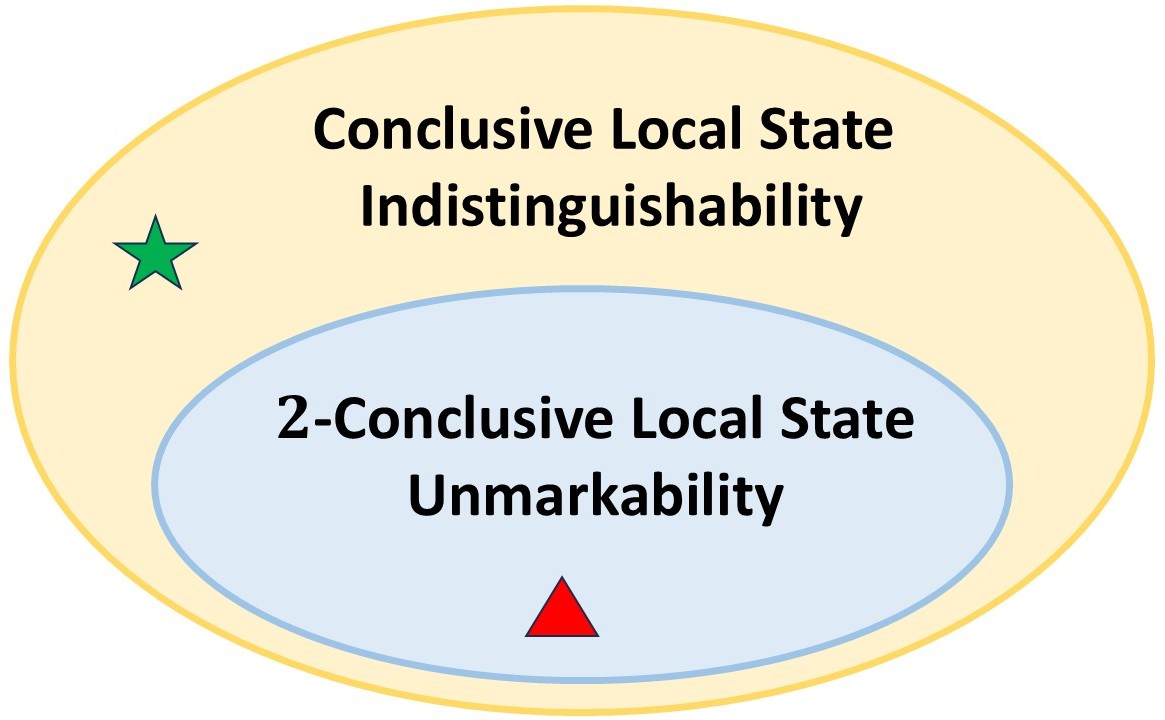}
	\caption{(Color online) A strict hierarchy between the tasks of CLSD and $2$-CLSM is shown. Theorem~\ref{theo4} ( also Proposition \ref{propositionUPB}) is represented by the symbol $\textcolor{green!50!black}{\bigstar}$, which denotes a binary ensemble of states whose CLSD is not possible but for which $2$-CLSM, {\it i.e.} CLSM is possible. In contrast, Proposition \ref{propositionBell} corresponds to the symbol $\textcolor{red}{\scalebox{1.5}{$\blacktriangle$}}$, representing a set of states that is not locally conclusively distinguishable, and for which even $2$-CLSM is not possible.}
	\end{figure}

\section{ The ensemble of states \(\mathcal{S}^{D}\) does not allow \(2\)-CLSM. }\label{appendixE}

\setcounter{theorem}{2} 
\begin{theorem}\label{theo3}
The set of states \(\mathcal{S}^{D}\) does not allow \(2\)-CLSM.
\end{theorem}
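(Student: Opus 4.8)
The plan is to mirror the strategy of the proof of Theorem~\ref{theo2}: reduce the $2$-CLSM task to a conclusive local discrimination problem for the derived two-copy ensemble, and then show that this ensemble is an unextendible product basis (UPB) which is \emph{not} genuine, so that Lemma~\ref{GUPB} forbids conclusive local discrimination. First I would set up the reduction. Regrouping Alice's two qubits and Bob's two qubits, the $2$-CLSM of $\mathcal{S}^{\text{D}}$ is equivalent to the CLSD of the ordered two-fold ensemble
\begin{align*}
\mathcal{S}^{\text{D}}_{(2)}\equiv\Big\{\ket{\xi_{ij}}:=\ket{d^i}_A\ket{d^j}_A\otimes\ket{d^i}_B\ket{d^j}_B \ \big|\ i\neq j\Big\}\subset\mathbb{C}^4_A\otimes\mathbb{C}^4_B.
\end{align*}
Since $\mathcal{S}^{\text{D}}$ is linearly independent, Proposition~\ref{lemmaLI} guarantees that the $12$ states $\ket{\xi_{ij}}$ are linearly independent; hence they span a $12$-dimensional subspace of the $16$-dimensional $\mathbb{C}^4\otimes\mathbb{C}^4$, leaving a $4$-dimensional orthogonal complement. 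As every $\ket{\xi_{ij}}$ is a product state, it remains to control this complement.

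Next I would establish unextendibility by the Bennett \emph{et al.} partition criterion~\cite{Bennett99(1)} used in Theorem~\ref{theo2}, which does not require the states to be mutually orthogonal but only product. A product vector $\ket{\alpha}_A\otimes\ket{\beta}_B$ lies in the complement precisely when the index set admits a partition $\mathcal{S}_A\cup\mathcal{S}_B$ with $\langle d^id^j|\alpha\rangle=0$ for $\ket{\xi_{ij}}\in\mathcal{S}_A$ and $\langle d^id^j|\beta\rangle=0$ for $\ket{\xi_{ij}}\in\mathcal{S}_B$, which forces the local ranks $r_A<4$ and $r_B<4$ simultaneously. I would rule this out by a rank count: any part containing seven or more of the $\ket{\xi_{ij}}$ already has local rank $4$ (the Alice factors $\ket{d^i}\ket{d^j}$ are too spread out to fit in a proper subspace of $\mathbb{C}^4$), so only the balanced $6$--$6$ splits with $r_A=3$ need enumeration, and for each one checks that the complementary six states force $r_B=4$. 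This is the routine-but-lengthy step; it is complicated here by the fact that Alice's and Bob's single-qubit alphabets differ in the fourth letter ($\ket{i_+}$ versus $\ket{i_-}$), so the symmetry exploited in the SIC case is partly broken and the two local ranks must be tracked separately.

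Finally I would establish non-genuineness by exhibiting a proper subset $\mathcal{S}'\subsetneq\mathcal{S}^{\text{D}}_{(2)}$ that is itself unextendible, i.e. whose larger orthogonal complement still contains no product state; concretely one searches for a smaller family of the $\ket{\xi_{ij}}$ for which every bipartition continues to force $r_A=4$ or $r_B=4$, the verification again being a partition/rank argument deferred to an appendix. Invoking Lemma~\ref{GUPB}, a UB that is not a GUB admits no conclusive local discrimination, so the CLSD of $\mathcal{S}^{\text{D}}_{(2)}$, and hence the $2$-CLSM of $\mathcal{S}^{\text{D}}$, is impossible. I expect the main obstacle to be precisely this last step: because the broken symmetry between the parties removes the shortcut available for the SIC ensemble, it is neither immediate which proper subset realizes the non-genuineness nor routine to certify that the candidate subset has no product state in its (now $5$- or higher-dimensional) complement, and a naive guess such as the $6$-state block $\{\ket{\xi_{ij}}:i,j\in\{1,2,3\}\}$ fails (it is extendible, since splitting it $3$--$3$ gives $r_A,r_B\leq3$), so some care is needed to pin down the correct witness.
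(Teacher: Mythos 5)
Your route is sound, but it is genuinely different from the one the paper takes. The paper does \emph{not} rerun the UPB/GUPB argument of Theorem~\ref{theo2} here: it applies Chefles's detecting-state criterion (Lemma~\ref{lemmaCheflesProductDetecting}) directly to the single state $\ket{\xi_{12}}$, writes out the eleven bilinear orthogonality conditions together with $\braket{\xi_{12}|\chi}\neq 0$ for a putative product detector $\ket{\alpha}_A\otimes\ket{\beta}_B$, and shows by case analysis (Appendix~\ref{appendixB}) that the system is inconsistent — one computation for one target state settles the theorem. Your plan instead needs two facts: that $\mathcal{S}^{\text{D}}_{(2)}$ is a UPB, and that some proper subset is still unextendible. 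Both are true, and the bookkeeping is lighter than you fear: for a nonzero $\ket{\alpha}\in\mathbb{C}^4$ the annihilated index set $\{(i,j):\braket{d^id^j|\alpha}=0\}$ has at most six elements, with equality only for the four ``stars'' $\{(l,j)\}_{j\neq l}\cup\{(i,l)\}_{i\neq l}$ (realized by $\ket{\alpha}\propto\ket{d^{l\perp}}\ket{d^{l\perp}}$), and likewise on Bob's side with his alphabet; any two stars share the pair $(l,l')$ or $(l',l)$, so two local annihilator sets can jointly cover at most ten of the twelve index pairs. This gives unextendibility of the full set at once, and — since even eleven pairs cannot be covered — it shows that deleting any single $\ket{\xi_{i_0j_0}}$ still leaves an unextendible set, so the witness of non-genuineness you were searching for is simply $\mathcal{S}^{\text{D}}_{(2)}\setminus\{\ket{\xi_{i_0j_0}}\}$ (and your observation that the $\{1,2,3\}$-triangle block fails is correct, since it splits $3$--$3$ with both local ranks equal to $3$). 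What each approach buys: yours proves the stronger statement that the orthogonal complement of the remaining eleven states contains no product vector at all, not merely none that detects $\ket{\xi_{12}}$, and it unifies Theorems~\ref{theo2} and~\ref{theo3} under a single method via Lemma~\ref{GUPB}; the paper's direct equation-solving is shorter for this one theorem but yields only the non-existence of a detector for one state.
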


\begin{proof}

The set 

\begin{align*}\label{Duanproductstates}
\mathcal{S}^{\text{D}}
\equiv \left\{\begin{aligned}
 &\ket{D_1}:=\ket{0}_A \ket{0}_B,~~~ \ket{D_2}:= \ket{1}_A \ket{1}_B\\
&\ket{D_3}:= \ket{+}_A  \ket{+}_B,~ \ket{D_4}:=\ket{i_+}_A  \ket{i_-}_B
\end{aligned} \right\},    
\end{align*} where $\ket{\pm}:=1/\sqrt{2}(\ket{0} \pm\ket{1})$, $\ket{i_{\pm}}:=1/\sqrt{2}(\ket{0} \pm\iota\ket{1})$, with $\iota:=\sqrt{-1}$\\

can be equivalently written as 
\begin{align*}
\mathcal{S}^{\text{D}}
\equiv \left\{\begin{aligned}
 &\ket{D_1}:=\ket{d^1}_A \ket{d^1}_B,~ \ket{D_2}:= \ket{d^2}_A \ket{d^2}_B\\
&\ket{D_3}:= \ket{d^3}_A  \ket{d^3}_B,~ \ket{D_4}:=\ket{d^4}_A  \ket{d^4}_B
\end{aligned} \right\},    
\end{align*}
with\[
\left\{
\begin{array}{ll}
\ket{d^1}_A := \ket{0}, & \ket{d^1}_B := \ket{0} \\
\ket{d^2}_A := \ket{1}, & \ket{d^2}_B := \ket{1} \\
\ket{d^3}_A := \ket{+}, & \ket{d^3}_B := \ket{+} \\
\ket{d^4}_A := \ket{i_+}, & \ket{d^4}_B := \ket{i_-}
\end{array}
\right\}.
\]
where $\ket{\pm}:=1/\sqrt{2}(\ket{0} \pm\ket{1})$, $\ket{i_{\pm}}:=1/\sqrt{2}(\ket{0} \pm\iota\ket{1})$, with $\iota:=\sqrt{-1}$.

From Def. \textcolor{red}{1}, we need each member of the $2$-CLSM set to be conclusively identifiable. The $2$-CLSM of the set \(\mathcal{S}^{D}\) is:
\begin{align*}
\mathcal{S}^{\text{D}}_{(2)} \equiv \left\{
\begin{array}{l}
\ket{\xi_{ij}} := 
\ket{d^i}_{A_1} \ket{d^j}_{A_2} \otimes 
\ket{d^i}_{B_1} \ket{d^j}_{B_2}, \\[0.5em]
\text{where}~~i\neq j ~~ \forall~i,j\in\{1,\cdots,4\},~~
\end{array}
\right\}
\end{align*}

Thus, for each and every member of $\mathcal{S}^{\text{D}}_{(2)}: \ket{\xi_{i^{\prime}j^{\prime}}}$, there should exist a product state \( |\phi_{ij}\rangle \) such that $\langle \phi_{ij} \vert \xi_{i'j'} \rangle = 0$ for \( (i,j) \neq (i^{\prime},j^{\prime}) \) and \( \langle \phi_{ij} |  \xi_{ij} \rangle \neq 0 \)  We show, on the contrary, that it is not possible, using Lemma \textcolor{red}{2}, where  $i \neq j$ and $i^{\prime}\neq j^{\prime}$.  The above states in $\mathcal{S}^{\text{D}}_{(2)}$ are linearly independent, and hence, they are conclusively distinguishable using global measurements. We now investigate the status of conclusive local state discrimination for this set of states by trying to find a product state that can detect $\ket{\xi_{12}}_{AB}$.  Assume there exists a product state $\ket{\chi} \in \mathbb{C}^4 \otimes \mathbb{C}^4$, such that \( \ket{\chi} = \ket{\alpha}_{A} \otimes \ket{\beta}_{B} \), where:
\begin{align*}
\ket{\alpha} &= (a_1\ket{00}+a_2\ket{01}+a_3\ket{10}+a_4\ket{11})_{A_1A_2} \quad\\
\ket{\beta}  &= (b_1\ket{00}+b_2\ket{01}+b_3\ket{10}+b_4\ket{11})_{B_1B_2} \quad 
\end{align*}
, where $\{\ket{i}_A\}$ and $\{\ket{j}_B\}$ respectively denote the computational basis set for subsystems of Alice's and Bob's, respectively, along with  $\sum |a_i|^2 = 1$ and $ \sum |b_i|^2 = 1$. We require that \( \ket{\chi} \) is orthogonal to all \( \ket{\xi_{ij}} \)  except \( \ket{\xi_{12}} \). However, as we shall see, such a consistent solution cannot be found.\\    
Let us consider the $2$-CLSM of this set. Writing in \( A_1 A_2 | B_1 B_2 \) format, with $A\equiv A_1A_2$, likewise for Bob's subsystem, let us look at the 12 states that the two parties Alice and Bob need to conclusively distinguish via LOCC---
\begin{equation} \label{duan}
\left\{
\begin{aligned}
\ket{\xi_{12}} &= \ket{01}_{A}  \ket{01}_B,  & ~~ \ket{\xi_{13}} &= \ket{0+}_A  \ket{0+}_B, & ~~ \ket{\xi_{14}} &= \ket{0i_+}_A \ket{i-}_B,  & ~~ \ket{\xi_{21}} &= \ket{10}_A \ket{10}_B,  \\
\ket{\xi_{23}} &= \ket{1+}_A \ket{1+}_B,  & ~~ \ket{\xi_{24}} &= \ket{1i_+}_A \ket{1i_-}_B,  & ~~
\ket{\xi_{31}} &= \ket{+0}_A \ket{+0}_B,  & ~~ \ket{\xi_{32}} &= \ket{+1}_A \ket{+1}_B,  \\
\ket{\xi_{34}} &= \ket{+i_+}_A \ket{+i_-}_B,  & ~~ \ket{\xi_{41}} &= \ket{i_+0}_A  \ket{i_-0}_B,  & ~~
\ket{\xi_{42}} &= \ket{i_+1}_A \ket{i_-1}_B, & ~~ \ket{\xi_{43}} &= \ket{i_++}_A  \ket{i_-+}_B
\end{aligned}
\right\}.
\end{equation}
 The above states are already linearly independent, and hence they are conclusively distinguishable using global measurements. We now investigate the status of conclusive local discrimination for this set of states.

We attempt to find a product state that can detect \( \ket{\xi_{12}}_{AB} \).  Assume there exists a product state $\ket{\chi} \in \mathbb{C}^4 \otimes \mathbb{C}^4$, such that \( \ket{\chi} = \ket{\alpha}_{A} \otimes \ket{\beta}_{B} \), where:
\begin{align*}
\ket{\alpha} &= (a_1\ket{00}+a_2\ket{01}+a_3\ket{10}+a_4\ket{11})_{A_1A_2}, \quad \text{with } \sum |a_i|^2 = 1, \\
\ket{\beta}  &= (b_1\ket{00}+b_2\ket{01}+b_3\ket{10}+b_4\ket{11})_{B_1B_2}, \quad ~\text{with } \sum |b_i|^2 = 1.
\end{align*}
, where $\{\ket{i}_A\}$ and $\{\ket{j}_B\}$ respectively denote the computational basis set for subsystems of Alice's and Bob's, respectively. We require that \( \ket{\chi} \) is orthogonal to all \( \ket{\xi_{ij}} \) in Eqn.~\ref{duan} except \( \ket{\xi_{12}} \)
and $\braket{\xi_{12} | \chi} \neq 0 $.\\

This leads to the following system of equations:

\begin{subequations}
\begin{minipage}{.4\linewidth}
\begin{align}
(a_1+a_2)(b_1+b_2) &= 0, \label{eq:1}\\
(a_1-\iota~ a_2)(b_1+\iota~b_2) &= 0, \label{eq:2} \\
a_3 b_3 &= 0, \label{eq:3} \\
(a_3+a_4)(b_3+b_4) &= 0, \label{eq:4} \\
(a_3-\iota~a_4)(b_3+\iota~b_4) &= 0, \label{eq:5} \\
(a_1+a_3)(b_1+b_3) &= 0, \label{eq:6}   
\end{align}
\end{minipage}%
\begin{minipage}{.5\linewidth}
\begin{align}
(a_2+a_4)(b_2+b_4) &= 0, \label{eq:7} \\
(a_1-\iota~a_2+a_3-\iota~a_4)(b_1+\iota~b_2+b_3-\iota~b_4) &= 0, \label{eq:8} \\
(a_1-\iota~a_3)(b_1+\iota~b_3) &= 0, \label{eq:9} \\
(a_2-\iota~a_4)(b_2+\iota~b_4) &= 0, \label{eq:10} \\
(a_1+a_2-\iota~a_3-\iota~a_4)(b_1+b_2+\iota~b_3+\iota~b_4) &= 0, \label{eq:11} \\
a_2b_2 &\neq 0. \label{eq:12}  
\end{align}
\end{minipage}
\end{subequations}

where $\iota:=\sqrt{-1}$. \\

From Eqn.~\eqref{eq:12}, we immediately find that
\begin{equation}
a_2 \neq 0 \quad \text{and} \quad b_2 \neq 0.
\end{equation}

From Eqn.~\eqref{eq:3}, either \( a_3 = 0 \) or \( b_3 = 0 \). Without loss of generality, assume
\begin{equation}
a_3 = 0, \quad b_3 \neq 0.
\end{equation}

From Eqn.~\eqref{eq:1}, assume
\begin{equation}
a_1 = -a_2 \neq 0.
\end{equation}

Substituting into Eqn.~\eqref{eq:2}, we find
\begin{equation}
b_1 = -\iota~b_2 \neq 0.
\end{equation}

From Eqn.~\eqref{eq:9}, either \( a_3 = -a_4 \) or \( b_3 = -b_4 \). Since \( a_3 = 0 \), this gives
\begin{equation}
a_4 = 0.
\end{equation}

Next, from Eqn.~\eqref{eq:11}, using \( a_4 = 0 \), \( a_1 = -a_2 \), and \( b_1 = -\iota~b_2 \), we have
\begin{equation}
b_1 = -b_3 = -\iota~b_2 \neq 0.
\end{equation}

From Eqn.~\eqref{eq:10}, either \( a_2 = -a_4 \) or \( b_2 = -b_4 \). Assume
\begin{equation}
a_4 = -a_2 \neq 0.
\end{equation}
However, from earlier, \( a_4 = 0 \), leading to a contradiction unless \( a_2 = 0 \), which contradicts Eqn.~\eqref{eq:12}.

Finally, using Eqns.~\eqref{eq:8} and \eqref{eq:11}, we find
\begin{equation}
b_3 = 0,
\end{equation}
which again contradicts \( b_2 \neq 0 \) from Eqn.~\eqref{eq:12}.

Thus, we arrive at a contradiction. This contradiction emerges from one particular choice of assumptions. A similar inconsistency arises for all other choices of setting parameters to zero. Therefore, we conclude that the state \( \ket{\xi_{12}} \) cannot even be conclusively identified via LOCC.

Consequently, by Definition~\textcolor{red}{1}, the entire set of $12$ product states cannot be locally and conclusively distinguished.
\end{proof}
\section{Converse of theorem \ref{theo1}}\label{appendixF}
In Theorem \ref{theo1}, we showed that: 
\textit{
    For a set of multipartite states $\mathcal{S}$, conclusive local state discrimination (i.e. $1$-CLSM) always implies m-conclusive local state marking (i.e. $m$-CLSM ~~~$\forall m \in \{1,2,\cdots ,|\mathcal{S}|\}$).}\\

The claim that the product state sets \(\mathcal{S}^{\uparrow\hspace{-.05cm}\downarrow}\) and \(\mathcal{S}^D\), which exhibit \(2\)-CLSM indistinguishability, demonstrate a stronger form of nonlocality than the original constructions by Bennett {\it et al.}~\cite{BennettNLWE} and Peres and Wootters~\cite{PeresWooters}, is supported by the fact that, although Theorem~ \textcolor{red}{1} establishes that CLSD implies \(m\)-CLSM, the converse does not hold—a result we substantiate through explicit counterexamples. Specifically, we show that while the product state sets \(\mathcal{S}^{\uparrow\hspace{-.05cm}\downarrow}\) and \(\mathcal{S}^D\) remain locally conclusively indistinguishable under their $2$-marking, the same does not hold for certain binary ensembles of mutually orthogonal quantum states. These ensembles, despite being locally conclusively indistinguishable, become conclusively markable under LOCC. Notably, such ensembles have previously been regarded as exhibiting a high degree of nonlocality, as they remain locally conclusively indistinguishable even with arbitrarily large but finite copies assistance.

Consider the following binary ensemble of mutually orthogonal quantum states as introduced by Yu {\it et al.} \cite{Duan}:
\[
\mathcal{X}^Y := \left\{\rho_{AB} := \Phi^+_{d}~,~ \sigma_{AB} := \frac{1}{(d^2-1)}(\mathbb{I}_{d \times d} - \Phi^+_{d})\right\}.
\] where $\Phi^+_{d}$ is the projector corresponding to the state $\ket{\Phi^+_{d}}_{AB}:= \frac{1}{\sqrt{d}} \sum_{i=0}^{d-1} \ket{ii}_{AB} \in \mathbb{C}^d_{A} \otimes \mathbb{C}^d_{B} $.
These states are locally indistinguishable. This result was established by proving that even under PPT operations, they remain conclusively indistinguishable. Remarkably, even with access to arbitrarily large but finite copies, the states cannot even be conclusively distinguished. Thus, \[
\mathcal{X}^Y[n] := \left\{\rho^{\otimes n}_{AB},~ \sigma^{\otimes n}_{AB}\right\}.
\] where $n\in \mathbb{N}$, the set of natural numbers is again locally conclusively indistinguishable. 
However, we demonstrate that these states are conclusively markable via LOCC. Hence, 
$
\mathcal{X}^Y_{(2)} := \left\{\rho_{A_1B_1} \otimes \sigma_{A_2B_2}~,~ \sigma_{A_1B_1} \otimes \rho_{A_2B_2}\right\}
$ is locally conclusively distinguishable, which leads us to the following theorem.  
\begin{theorem}\label{theo4}
CLSM of any given set of states $\mathcal{S}$ does not necessarily imply CLSD of $\mathcal{S}$.
\end{theorem}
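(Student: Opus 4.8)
The statement is a non-implication, so the goal is to exhibit one ensemble that is CLSM-able by LOCC yet not CLSD-able by LOCC. I would take as witness precisely the binary Yu \emph{et al.} ensemble $\mathcal{X}^Y=\{\rho=\Phi^+_d,\ \sigma=\tfrac{1}{d^2-1}(\mathbb{I}-\Phi^+_d)\}$ introduced immediately before the statement. The theorem then reduces to two facts of opposite flavour: (i) $1$-CLSM (i.e.\ CLSD) of $\mathcal{X}^Y$ is impossible, and (ii) full CLSM of $\mathcal{X}^Y$ is possible (here full CLSM is $2$-CLSM, since $|\mathcal{X}^Y|=2$).

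Part (i) I would not reprove: it is the established property of this ensemble that $\rho$ and $\sigma$ remain conclusively indistinguishable even under PPT operations and even with arbitrarily many copies, so \emph{a fortiori} they are not conclusively locally distinguishable. This is exactly the ingredient recalled in the paragraph preceding the theorem, so I simply invoke it.

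Part (ii) is the real content, and I would prove it by an explicit one-round LOCC protocol for the reduced set $\mathcal{X}^Y_{(2)}=\{X:=\rho_{A_1B_1}\otimes\sigma_{A_2B_2},\ Y:=\sigma_{A_1B_1}\otimes\rho_{A_2B_2}\}$, using the reduction of $2$-CLSM to CLSD of $\mathcal{X}^Y_{(2)}$ already set up above. Everything rests on the support of $\Phi^+_d=\ketbra{\Phi^+_d}{\Phi^+_d}$: since $\langle\Phi^+_d|ab\rangle=\tfrac{1}{\sqrt d}\,\delta_{ab}$, a computational-basis measurement of a subsystem prepared in $\rho$ \emph{always} returns perfectly correlated outcomes $a=b$, whereas $\sigma$ is supported on the orthogonal complement and hence yields anticorrelated outcomes $a\neq b$ with nonzero probability. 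Concretely, Alice measures $A_1$ and $A_2$, Bob measures $B_1$ and $B_2$, all in the computational basis, and they compare: an outcome $a_1\neq b_1$ on system $1$ certifies that system $1$ carries $\sigma$ (impossible under $\rho$), hence ordering $Y$; an outcome $a_2\neq b_2$ certifies ordering $X$; any fully correlated outcome is declared inconclusive.

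The step I would treat most carefully --- the main obstacle --- is conclusiveness: showing the protocol never errs and is consistently two-valued. Because exactly one subsystem is always $\rho$, the correlation constraint forces $a_1=b_1$ under ordering $X$ and $a_2=b_2$ under ordering $Y$; thus the flag $a_k\neq b_k$ can never fire on the $\rho$-subsystem, which both rules out false conclusions and excludes the contradictory event where both systems look like $\sigma$. A short computation, $\sum_{a\neq b}\langle ab|\sigma|ab\rangle=\tfrac{d}{d+1}>0$, confirms the conclusive outcome fires with positive probability, so each ordering is conclusively identifiable and $\mathcal{X}^Y_{(2)}$ is locally conclusively distinguishable. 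One technical remark I would flag is that the ensemble is \emph{mixed}, so rather than invoking the pure-state criterion of Lemma~\ref{lemmaCheflesProductDetecting} directly I argue through supports: the rank-one product effects $\proj{\phi}$ implicit above satisfy $\Tr(\proj{\phi}\,Y)=0$ exactly when $\ket{\phi}\perp\mathrm{supp}(Y)$, which is all that conclusiveness demands, and framing the argument as an explicit LOCC instrument sidesteps any positivity concern about the inconclusive effect. Combining (i) and (ii) exhibits an ensemble that is CLSM-able but not CLSD-able, proving the theorem.
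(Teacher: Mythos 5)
Your proposal is correct and follows essentially the same route as the paper: the same witness ensemble $\mathcal{X}^Y$, the same reduction of its markability to CLSD of $\mathcal{X}^Y_{(2)}$, and the same computational-basis protocol exploiting that $\rho=\Phi^+_d$ can never yield anticorrelated outcomes while $\sigma$ does with positive probability. Your version is marginally more careful than the paper's (you accept any outcome $a\neq b$ rather than only the fixed detecting state $\ket{01}$, verify the conclusive outcome fires with probability $\tfrac{d}{d+1}$, and note that the pure-state detecting-state lemma must be replaced by a support argument for mixed states), but these are refinements of the identical underlying idea rather than a different proof.
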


\begin{proof}

We now describe a protocol that allows Alice and Bob to conclusively locally mark the set $\mathcal{X}^Y_{(2)}$ and thus prove the theorem. They begin by selecting \(\ket{01}\) as the product detecting state corresponding to \(\sigma_{AB}\). A computational basis measurement is performed on their respective first subsystems, and the outcomes are shared through classical communication. If the projector onto the detecting state clicks for both parties, they can conclusively infer that the state is \(\sigma_{A_1B_1} \otimes \rho_{A_2B_2}\). In the case where the measurement yields correlated outcomes, the result is deemed inconclusive. To resolve this ambiguity, Alice and Bob repeat the same procedure on the second subsystem. A conclusive outcome in this round then indicates that the state must have been \(\rho_{A_1B_1} \otimes \sigma_{A_2B_2}\). Therefore, the set  $\mathcal{X}^Y$ is locally conclusively markable, even though it is known that its arbitrarily large but finite copies do not allow CLSD. This completes the proof.
\end{proof} 

This result reveals a striking phenomenon: while individual copies of the states \(\rho_{AB}\) or \(\sigma_{AB}\), even with arbitrarily many copies, remain conclusively indistinguishable under LOCC, their tensor product combinations enable conclusive discrimination. This underscores the power of joint state configurations in overcoming the limitations imposed by LOCC constraints. Consequently, in the context of a marking task, the set of states appears to lose the high degree of nonlocality typically attributed to it.\\

Interestingly, we now consider another binary ensemble of mutually orthogonal quantum states that exhibits the same feature. In \cite{Bandyopadhyay}, it was proved that: 
\setcounter{lemma}{3} 
\begin{lemma}
Any bipartite orthogonal ensemble which
contains $\sigma$, the normalized projector onto a UPB subspace,
is conclusively locally indistinguishable in the many
copy limit    
\end{lemma}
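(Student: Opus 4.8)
The plan is to isolate the single member that obstructs conclusive local discrimination, namely $\sigma$, and to show that it can never be conclusively locally identified, even after many copies are tensored. Since conclusive local distinguishability of an ensemble requires \emph{every} member to be conclusively locally identifiable (Definition~\ref{defCheflesCLSD}), it suffices to rule out conclusive local identification of $\sigma$. Write $S$ for the span of the UPB, so that $\mathrm{range}(\sigma)=S^{\perp}$ is the completely entangled subspace that contains no product vector. The structural input I would use throughout is that every LOCC effect is separable and hence decomposes into rank-one product terms; combined with the product-detecting-state criterion of Lemma~\ref{lemmaCheflesProductDetecting}, conclusively identifying $\sigma$ is equivalent to producing a product vector $\ket{\phi}$ orthogonal to the supports of all the other members with $\bra{\phi}\sigma\ket{\phi}\neq0$.

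For one copy this fails immediately. Every other member is orthogonal to $\sigma$ and hence supported in $S$, so (the ensemble being complete) orthogonality to all of them forces $\ket{\phi}\in S^{\perp}$; but a product vector in $S^{\perp}$ cannot exist by unextendibility of the UPB, so no detecting vector exists and $\sigma$ is not conclusively locally identifiable.

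The substantive part is the many-copy claim. I would regroup the $n$ copies into the bipartition $A^{n}\!:\!B^{n}$, with Alice holding all first factors and Bob all second factors, and treat the problem as conclusive local discrimination of $\{\tau^{\otimes n}\}$ on this cut. Two ingredients are needed. The first is a tensor-stability lemma: $\mathrm{range}(\sigma^{\otimes n})=(S^{\perp})^{\otimes n}$ is still completely entangled across $A^{n}\!:\!B^{n}$. I would prove it by positing a product vector $\ket{\alpha}_{A^{n}}\ket{\beta}_{B^{n}}\in(S^{\perp})^{\otimes n}$, Schmidt-decomposing $\ket{\alpha}$ and $\ket{\beta}$ \emph{across the copies}, and using the orthonormality of the remaining-copy factors to force a single-copy product vector into $S^{\perp}$, which is impossible; an induction on $n$ then finishes it. The second ingredient is to exclude a product detecting vector for $\sigma^{\otimes n}$, i.e.\ a product vector orthogonal to the $n$-copy supports of the other members that nonetheless overlaps $(S^{\perp})^{\otimes n}$. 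Here I would follow the strategy of the proof of Theorem~\ref{theo2} and apply the Bennett \emph{et al.}\ partition/local-rank criterion~\cite{Bennett99(1)} to the $n$-fold tensored product states, showing they remain unextendible in the relevant sense.

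The main obstacle is this second ingredient. With many copies the parties may act collectively on the grouped bipartition, so a candidate detecting vector need not lie inside $(S^{\perp})^{\otimes n}$: it must merely avoid the supports of the other members, and these do not fill the complement of $(S^{\perp})^{\otimes n}$ because the cross terms $S^{\perp}\otimes S\otimes\cdots$ survive. Eliminating detecting vectors that hide in these cross terms is exactly what the local-rank partition analysis must accomplish, and it is the step at which the unextendibility of the original UPB has to be used copy by copy rather than through a single orthogonality relation. Once this is in place, $\sigma^{\otimes n}$ is not conclusively locally identifiable for any $n$, so any orthogonal ensemble containing $\sigma$ is conclusively locally indistinguishable in the many-copy limit.
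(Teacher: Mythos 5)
Your argument rests on a misreading of the statement that inverts the roles of the two subspaces. Here $\sigma$ is the normalized projector onto the \emph{span of the UPB product states}, not onto the complementary completely entangled subspace: the paper's own gloss says the member admitting no product detecting state is $\rho^{Ent}$, because its detector ``needs to be necessarily orthogonal to the UPB subspace,'' and Proposition~\ref{propositionUPB} detects $\sigma^{UPB}$ by measuring a UPB element --- which is only possible if $\mathrm{range}(\sigma)$ is the UPB span. Consequently $\sigma$ \emph{is} conclusively locally identifiable; the obstruction sits in every \emph{other} member of the orthogonal ensemble, whose detecting vector (by Lemma~\ref{lemmaCheflesProductDetecting}) must be orthogonal to $\mathrm{supp}(\sigma)$ and hence must be a product vector in $\mathrm{span}(\mathrm{UPB})^{\perp}$ --- impossible by unextendibility, with no completeness assumption needed. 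Under your reading the lemma would in fact be false: for the two-element ensemble $\{\sigma,\ket{u_1}\bra{u_1}\}$ with $\ket{u_1}$ a single UPB element, $\ket{u_1}$ itself detects the second member, and a product vector whose Alice factor is orthogonal to that of $\ket{u_1}$ generically has weight on the entangled subspace and detects $\sigma$. The parenthetical ``(the ensemble being complete)'' you had to insert is the symptom of this inversion: the lemma assumes no such completeness.

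The many-copy half inherits the same inversion and is, by your own account, not actually completed. With the correct reading it is immediate, and this is the route the paper takes via \cite{Bandyopadhyay}: a detecting vector for $\rho_j^{\otimes n}$ must be orthogonal to $\sigma^{\otimes n}$, i.e.\ must be a product vector (across the regrouped cut $A^n\!:\!B^n$) lying in the orthogonal complement of $\mathrm{span}(\mathrm{UPB})^{\otimes n}$, and that complement contains no product vector because the tensor product of UPBs is again a UPB \cite{DiVincenzo03}. Note that the relevant subspace is the complement of $S^{\otimes n}$, which strictly contains your $(S^{\perp})^{\otimes n}$; your proposed Schmidt-decomposition induction targets the smaller subspace and, as you yourself observe, leaves the cross terms --- exactly where the content of the DiVincenzo \emph{et al.} result lies --- unhandled. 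No local-rank partition analysis along the lines of Theorem~\ref{theo2} is needed.
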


Let us define \(\sigma^{UPB}_{AB}\) as the normalized projection operator onto any bipartite UPB subspace, and let \(\rho^{Ent}_{AB}\) be any quantum state orthogonal to it.
Consider the set:
 $$\mathcal{X}^B:=\{\sigma^{UPB}_{AB}~,~\rho^{Ent}_{AB} \}$$ 
Bandyopadhyay \cite{Bandyopadhyay} showed that the set is conclusively indistinguishable via LOCC, as there exists no product state capable of detecting \(\rho^{Ent}_{AB}\), as that product state needs to be necessarily orthogonal to the UPB subspace, which from the definition of UPB is not possible. Furthermore, since the tensor product of two UPB sets forms another UPB~\cite{DiVincenzo03}, the set remains conclusively indistinguishable even when provided with an arbitrarily large but finite number of copies. However, just like $\mathcal{X}^Y$, even this pair of states is conclusively markable.
 \setcounter{proposition}{4}
 \begin{proposition}\label{propositionUPB}
 The set $\mathcal{X}^B_{(2)} := \left\{\sigma^{UPB}_{A_1B_1} \otimes \rho^{Ent}_{A_2B_2}~,~ \rho^{Ent}_{A_1B_1} \otimes \sigma^{UPB}_{A_2B_2}\right\}
$ is conclusively distinguishable via LOCC.
 \end{proposition}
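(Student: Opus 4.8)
The plan is to exploit the structural asymmetry between the two pieces of $\mathcal{X}^B$. Although $\rho^{Ent}_{AB}$ admits no product detecting state---exactly the reason $\mathcal{X}^B$ fails CLSD---the companion state $\sigma^{UPB}_{AB}$ \emph{does} admit one. First I would note that any member $\ket{u}=\ket{a}_A\otimes\ket{b}_B$ of the UPB generating the subspace is itself a product state lying \emph{inside} that subspace. Consequently $\langle u|\rho^{Ent}|u\rangle=0$, since $\rho^{Ent}$ is supported entirely on the orthogonal complement, whereas $\langle u|\sigma^{UPB}|u\rangle>0$. Thus $\ket{u}$ conclusively detects $\sigma^{UPB}$ against $\rho^{Ent}$ via a purely local product measurement.

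Next I would reformulate $2$-CLSM of $\mathcal{X}^B$ as the task of locating the single $\sigma^{UPB}$ factor among the two configurations of $\mathcal{X}^B_{(2)}$: either $\sigma^{UPB}$ sits in the first slot $A_1B_1$ (with $\rho^{Ent}$ in the second) or the reverse. The crucial point is that to mark \emph{both} states it suffices to pin down where $\sigma^{UPB}$ lives, so the missing detecting state for $\rho^{Ent}$ is never needed. Concretely, Alice measures $\{\ket{a}\bra{a},\,\mathbb{1}-\ket{a}\bra{a}\}$ and Bob measures $\{\ket{b}\bra{b},\,\mathbb{1}-\ket{b}\bra{b}\}$ on their halves of the first pair and compare outcomes by classical communication. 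If both obtain the $\ket{a}$ and $\ket{b}$ outcomes respectively, the joint projection onto $\ket{u}$ has succeeded on $A_1B_1$; because $\langle u|\rho^{Ent}|u\rangle=0$, this can happen only when the first slot is $\sigma^{UPB}$, certifying the configuration $\sigma^{UPB}_{A_1B_1}\otimes\rho^{Ent}_{A_2B_2}$ with zero error. If this round is inconclusive, the parties repeat the identical measurement on $A_2B_2$; a success there certifies $\rho^{Ent}_{A_1B_1}\otimes\sigma^{UPB}_{A_2B_2}$, and every other outcome is declared inconclusive. The protocol is manifestly LOCC and errorless on both configurations, mirroring the scheme used for $\mathcal{X}^Y_{(2)}$ in Theorem~\ref{theo4}.

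The step I expect to need the most care is confirming that the first-round measurement cannot corrupt the second-round detection. This is settled by noting that the local operation on $A_1B_1$ acts as the identity on $A_2B_2$, so the reduced state of the second pair is undisturbed and remains exactly $\rho^{Ent}$ in the first configuration and $\sigma^{UPB}$ in the second; the vanishing overlap $\langle u|\rho^{Ent}|u\rangle=0$ then guarantees that the second-round success fires only for the second configuration, which completes the argument.
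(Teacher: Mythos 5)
Your proof is correct and follows essentially the same route as the paper: use a member $\ket{u}=\ket{a}\otimes\ket{b}$ of the UPB as a product detecting state for $\sigma^{UPB}$ (valid since $\rho^{Ent}$ is supported on the complement of the UPB span, so $\langle u|\rho^{Ent}|u\rangle=0$ while $\langle u|\sigma^{UPB}|u\rangle>0$), test the first pair, and on an inconclusive outcome move to the second pair. Your explicit local implementation via the binary measurements $\{\proj{a},\mathbb{1}-\proj{a}\}$ and $\{\proj{b},\mathbb{1}-\proj{b}\}$, and your check that the first round leaves the second pair undisturbed, are slightly more careful than the paper's phrasing but do not constitute a different argument.
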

 \begin{proof}
     Alice and Bob choose one of the UPB state itself as the corresponding product detecting state for $\sigma^{UPB}_{AB}$. Then they perform the measurement with one effect as the projector of this detecting state and the rest as the remaining product states that completes this basis. If the projector corresponding to the chosen detecting state clicks, they are sure that the first state is the state $\sigma^{UPB}_{A_1B_1}$. Hence the second state must be $\rho^{Ent}_{A_2B_2}$. However, if the projector does not click, they regard the outcome as inconclusive and move on to the second state and perform the same measurement with the same line of strategy and inference. If they get a conclusive outcome, then they are certain that the state was  $\sigma^{UPB}_{A_2B_2}$ and thus the first state was $\rho^{Ent}_{A_1B_1}$.
 \end{proof}
It is true that the multicopy conclusive distinguishability of the sets \(\mathcal{S}^{\uparrow\hspace{-.05cm}\downarrow}\) or \(\mathcal{S}^{\text{D}}\) has not yet been fully investigated. Suppose, for instance, that the set \(\mathcal{S}^{\uparrow\hspace{-.05cm}\downarrow}\) becomes conclusively distinguishable when assisted with two copies. However, it is already known that the same set is not conclusively markable in the one-copy assistance scenario. In contrast, the states in the sets \(\mathcal{X}^Y\) and \(\mathcal{X}^B\) remain conclusively indistinguishable via LOCC, even when arbitrarily many copies are provided. Yet, these sets are conclusively markable locally using only a single copy of each state in the permutation. This contrast underscores the fundamental inequivalence between the notions of CLSD and CLSM, especially when examined through the lens of quantum nonlocality.\\

While the previous examples discussed in Theorem~\ref{theo4} and Proposition~\ref{propositionUPB} may suggest that orthogonal ensembles can enable a locally conclusively indistinguishable set to become conclusively markable, this intuition does not hold true. We now present a counterexample involving a set of mutually orthogonal states for which \(2\)-CLSM is not possible. This set \(\mathcal{S}^{\text{Bell}}\), comprises the four Bell states. As discussed before it is well known that these states are locally indistinguishable~\cite{GKar}, and they also remain conclusively indistinguishable under LOCC. Furthermore, it was shown in~\cite{LSM} that these states are locally unmarkable. As a result, none of the \(4\)-LSM, \(3\)-LSM, or \(2\)-LSM tasks are achievable for this set. In this work, we go a step further and demonstrate that \(2\)-CLSM is  not even possible for this ensemble. Hence, \(\mathcal{S}^{\text{Bell}}\) serves as an example of a mutually orthogonal set that is not only locally indistinguishable and unmarkable, but also conclusively unmarkable under the \(2\)-CLSM framework.

\begin{proposition}\label{propositionBell}
Mutual orthogonality does not guarantee conclusive markability under LOCC.
\end{proposition}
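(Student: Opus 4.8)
The plan is to recast the task as a conclusive local discrimination problem and apply Chefles' product-detecting criterion. First I would note that $2$-CLSM of $\mathcal{S}^{\text{Bell}}$ is exactly the CLSD of the twelve-element set $\mathcal{S}^{\text{Bell}}_{(2)}\equiv\{\,\ket{\mathcal{B}^i}_{A_1B_1}\otimes\ket{\mathcal{B}^j}_{A_2B_2}\mid i\neq j\,\}\subset\mathbb{C}^4_A\otimes\mathbb{C}^4_B$, where Alice holds $A_1A_2$ and Bob holds $B_1B_2$. By Lemma~\ref{lemmaCheflesProductDetecting}, this set is locally conclusively discriminable iff every one of its members admits a product (across the $A$:$B$ cut) detecting state. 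Hence to prove impossibility it suffices to exhibit a single member with no product detector, and by the local-unitary symmetry relating the Bell states I would fix the convenient representative $\ket{\mathcal{B}^1\mathcal{B}^2}:=\ket{\mathcal{B}^1}_{A_1B_1}\otimes\ket{\mathcal{B}^2}_{A_2B_2}$.

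Next I would pin down where a candidate detector must live. Since the sixteen vectors $\{\ket{\mathcal{B}^i}_{A_1B_1}\otimes\ket{\mathcal{B}^j}_{A_2B_2}\}_{i,j=1}^4$ form an orthonormal basis of $\mathbb{C}^4_A\otimes\mathbb{C}^4_B$, orthogonality to the other eleven off-diagonal states confines any detector $\ket{\phi}$ to the five-dimensional subspace spanned by $\ket{\mathcal{B}^1\mathcal{B}^2}$ together with the four ``diagonal'' vectors $\ket{\mathcal{B}^k}_{A_1B_1}\otimes\ket{\mathcal{B}^k}_{A_2B_2}$, $k=1,\dots,4$. The crucial point is that these diagonal vectors do not belong to $\mathcal{S}^{\text{Bell}}_{(2)}$ (they have $i=j$), so they are entirely unconstrained; the question thus reduces to whether this five-dimensional space contains a product state with a nonzero $\ket{\mathcal{B}^1\mathcal{B}^2}$-component.

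To settle this I would use the matrix representation across the cut: writing each two-qubit Bell state as $\ket{\mathcal{B}^1}\!\leftrightarrow\! I/\sqrt2$, $\ket{\mathcal{B}^2}\!\leftrightarrow\! Z/\sqrt2$, $\ket{\mathcal{B}^3}\!\leftrightarrow\! X/\sqrt2$, $\ket{\mathcal{B}^4}\!\leftrightarrow\! iY/\sqrt2$, the state $\ket{\mathcal{B}^i}_{A_1B_1}\otimes\ket{\mathcal{B}^j}_{A_2B_2}$ corresponds to the $4\times4$ matrix $\tfrac12\,U_i\otimes U_j$ (with $U_1,U_2,U_3,U_4=I,Z,X,iY$), and a vector is a product state across $A$:$B$ iff its matrix has rank one. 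A general element of the five-dimensional space then corresponds to $M=c_0\,(I\otimes Z)+\sum_{k=1}^4 c_k\,(U_k\otimes U_k)$, which in the reordered basis $(\ket{00},\ket{11},\ket{01},\ket{10})$ is block-diagonal with the two $2\times2$ blocks
\[
\begin{pmatrix} c_1+c_0+c_2 & c_3+c_4\\ c_3+c_4 & c_1-c_0+c_2\end{pmatrix},\qquad
\begin{pmatrix} c_1-c_0-c_2 & c_3-c_4\\ c_3-c_4 & c_1+c_0-c_2\end{pmatrix}.
\]
A block-diagonal matrix has rank one only if exactly one block vanishes while the other has rank one; but annihilating either block equates its two diagonal entries, which forces $c_0=0$. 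Thus no rank-one $M$ can have $c_0\neq0$, so $\ket{\mathcal{B}^1\mathcal{B}^2}$ admits no product detecting state, and by Lemma~\ref{lemmaCheflesProductDetecting} the $2$-CLSM of $\mathcal{S}^{\text{Bell}}$ fails. Since the Bell states are mutually orthogonal, this establishes the claim.

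I expect the main obstacle to be the bookkeeping in the second step---correctly recognizing that the orthogonality constraints leave precisely the four diagonal directions free---together with choosing the Pauli/matrix encoding that trivializes the final rank-one analysis; with a less adapted parametrization the rank-one condition would become an opaque system of quadratic minor equations rather than the transparent block argument above.
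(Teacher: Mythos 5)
Your proposal is correct, and while it follows the paper's overall reduction exactly---$2$-CLSM of $\mathcal{S}^{\text{Bell}}$ becomes CLSD of the twelve off-diagonal states $\ket{\mathcal{B}^i}\otimes\ket{\mathcal{B}^j}$ ($i\neq j$) across the $A_1A_2:B_1B_2$ cut, and impossibility is reduced via Lemma~\ref{lemmaCheflesProductDetecting} to the nonexistence of a product detector for one representative---the way you kill that representative is genuinely different from, and cleaner than, the paper's. The paper parametrizes the candidate detector in the computational basis as $\ket{\alpha}_{A}\otimes\ket{\beta}_{B}$ with coefficients $a_i,b_j$, extracts eleven bilinear orthogonality equations plus one non-orthogonality condition, and derives contradictions by case analysis on which factors vanish (explicitly working out one branch and asserting the others behave similarly). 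You instead exploit that the sixteen states $\ket{\mathcal{B}^i}\otimes\ket{\mathcal{B}^j}$ are an orthonormal basis, so orthogonality to the eleven constrained states confines any detector to the five-dimensional span of the target and the four unconstrained diagonal states; then the Pauli reshaping $\ket{\mathcal{B}^i}\leftrightarrow U_i/\sqrt2$ turns ``product across $A{:}B$'' into ``rank one,'' and the block-diagonal form of $c_0(I\otimes Z)+\sum_k c_k(U_k\otimes U_k)$ forces $c_0=0$ for any rank-one element. This is exhaustive with no residual case analysis (an advantage over the paper's ``a similar inconsistency arises for all other choices''), and it isolates the structural reason for the obstruction: the admissible five-dimensional subspace contains product states only in its diagonal part. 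The trade-off is that the paper's coordinate computation is elementary and self-contained, whereas yours relies on the operator-reshaping dictionary; both are valid, and your version generalizes more readily to the other eleven members by local-unitary symmetry.
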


\begin{proof}
The sketch of the proof is as follows. We establish the result by demonstrating that the set \(\mathcal{S}^{\text{Bell}}\) does not admit a \(2\)-CLSM. We shall prove this via contradiction.\\ We define the \(2\)-CLSM set of $\mathcal{S}^{\mbox{Bell}}$,  as  $\mathcal{S}^{\mbox{Bell}}_{(2)}$  where 
\begin{equation}
\label{CLSMBell}
\mathcal{S}^{\mbox{Bell}}_{(2)}:=\left\{
\begin{aligned}
&\ket{\gamma}^{12}:=\ket{\mathcal{B}^1}\ket{\mathcal{B}^2},~~~~~\ket{\gamma}^{13}:=\ket{\mathcal{B}^1}\ket{\mathcal{B}^3},\\
&\ket{\gamma}^{14}:=\ket{\mathcal{B}^1}\ket{\mathcal{B}^4},~~~~~\ket{\gamma}^{21}:=\ket{\mathcal{B}^2}\ket{\mathcal{B}^1},\\
&\ket{\gamma}^{23}:=\ket{\mathcal{B}^2}\ket{\mathcal{B}^3},~~~~~\ket{\gamma}^{24}:=\ket{\mathcal{B}^2}\ket{\mathcal{B}^4},\\
&\ket{\gamma}^{31}:=\ket{\mathcal{B}^3}\ket{\mathcal{B}^1},~~~~~\ket{\gamma}^{32}:=\ket{\mathcal{B}^3}\ket{\mathcal{B}^2},\\
&\ket{\gamma}^{34}:=\ket{\mathcal{B}^3}\ket{\mathcal{B}^4},~~~~~\ket{\gamma}^{41}:=\ket{\mathcal{B}^4}\ket{\mathcal{B}^1},\\
&\ket{\gamma}^{42}:=\ket{\mathcal{B}^4}\ket{\mathcal{B}^2},~~~~~\ket{\gamma}^{43}:=\ket{\mathcal{B}^4}\ket{\mathcal{B}^3}  
\end{aligned}
\right\}
\end{equation}
 is written in the $A_1B_1A_2B_2$  party notation.

The above states are mutually orthogonal, and hence globally distinguishable, and are thus trivially conclusively distinguishable using global measurements. We now investigate the status of conclusive local state discrimination for this set of states. According to Definition~\textcolor{red}{1}, every element in the set must be conclusively identifiable, for the set to be conclusively distinguishable via LOCC. 
Thus, for each of the twelve states in the set, a corresponding product detecting state must exist. However, we show that such a product detecting state cannot be found for at least one member of the set, rendering it conclusively unidentifiable. This leads to a contradiction with Lemma~\textcolor{red}{2}, thereby establishing the claim. \\Let us first check whether \( \ket{\xi_{12}}_{AB} \) is conclusively identifiable locally. Using Lemma \textcolor{red}{2}, we attempt to find a product state that can detect it. Assume there exists a product state $\ket{\chi} \in \mathbb{C}^4 \otimes \mathbb{C}^4$, such that \( \ket{\chi} = \ket{\alpha}_{A} \otimes \ket{\beta}_{B} \), with:

\begin{align*}
\ket{\alpha} &= (a_1\ket{00}+a_2\ket{01}+a_3\ket{10}+a_4\ket{11})_{A_1A_2} \quad  \\
\ket{\beta}  &= (b_1\ket{00}+b_2\ket{01}+b_3\ket{10}+b_4\ket{11})_{B_1B_2} \quad
\end{align*}
where $\{\ket{i}_A\}$ and $\{\ket{j}_B\}$ respectively denote the computational basis set for subsystems of Alice's and Bob's, respectively, along with  $\sum |a_i|^2 = 1$, and $\sum |b_i|^2 = 1 $.

Hence, we require that \( \ket{\chi} \) is orthogonal to all \( \ket{\gamma}^{ij} \) in Eq.(\ref{CLSMBell}) except \( \ket{\gamma}^{12} \)  to which
\(
\braket{\xi_{12} | \chi} \neq 0.
\)  This leads to the following system of equations over complex variables
\( a_1, a_2, a_3, a_4, b_1, b_2, b_3, b_4 \in \mathbb{C} \) , which we show has no solution:

\begin{subequations}
\begin{minipage}{.45\linewidth}
\begin{align}
a_1 b_1 + a_2 b_2 - a_3 b_3 - a_4 b_4 &= 0,\label{eq:C1}\\
a_1 b_2 + a_2 b_1 + a_3 b_4 + a_4 b_3 &= 0,\label{eq:C2}\\
a_1 b_3 + a_3 b_1 + a_2 b_4 + a_4 b_2 &= 0,\label{eq:C3}\\
a_1 b_2 - a_2 b_1 + a_3 b_4 - a_4 b_3 &= 0,\label{eq:C4}\\
a_1 b_3 - a_3 b_1 + a_2 b_4 - a_4 b_2 &= 0,\label{eq:C5}\\
a_1 b_2 + a_2 b_1 - a_3 b_4 - a_4 b_3 &= 0\label{eq:C6},   
\end{align}
\end{minipage}%
\begin{minipage}{.45\linewidth}
\begin{align}
a_1 b_3 + a_3 b_1 - a_2 b_4 - a_4 b_2 &= 0,\label{eq:C7}\\
a_1 b_2 - a_2 b_1 - a_3 b_4 + a_4 b_3 &= 0,\label{eq:C8}\\
a_1 b_3 - a_3 b_1 - a_2 b_4 + a_4 b_2 &= 0,\label{eq:C9}\\
a_1 b_4 - a_2 b_3 + a_3 b_2 - a_4 b_1 &= 0,\label{eq:C10}\\
a_1 b_4 - a_3 b_2 + a_2 b_3 - a_4 b_1 &= 0,\label{eq:C11}\\
a_1 b_1 - a_2 b_2 + a_3 b_3 - a_4 b_4 &\neq 0 \label{eq:C12}.   
\end{align}
\end{minipage}
\end{subequations}

We analyze the system of equations and derive contradictions as follows:

From equations (\ref{eq:C2}) and (\ref{eq:C4}):
\begin{align*}
(a_1b_2 + a_3b_4) + (a_2b_1 + a_4b_3) &= 0 \\
(a_1b_2 + a_3b_4) - (a_2b_1 + a_4b_3) &= 0
\end{align*}
Adding/subtracting gives:
\begin{align}
a_1b_2 + a_3b_4 &= 0 \label{eq:pair1} \\
a_2b_1 + a_4b_3 &= 0 \label{eq:pair2}
\end{align}

Similarly for equations (\ref{eq:C6}) and (\ref{eq:C8}):
\begin{align*}
(a_1b_2 - a_3b_4) + (a_2b_1 - a_4b_3) &= 0 \\
(a_1b_2 - a_3b_4) - (a_2b_1 - a_4b_3) &= 0
\end{align*}
Yielding:
\begin{align}
a_1b_2 = a_3b_4 = 0 \label{eq:zeros1} \\
a_2b_1 = a_4b_3 = 0 \label{eq:zeros2}
\end{align}

From equations (\ref{eq:C3}), (\ref{eq:C5}), (\ref{eq:C7}), and (\ref{eq:C9}) through similar analysis:
\begin{align}
a_1b_3 = a_2b_4 = 0 \label{eq:zeros3} \\
a_3b_1 = a_4b_2 = 0 \label{eq:zeros4}
\end{align}

\textbf{Case 1:} $a_1 \neq 0$ \\
From (\ref{eq:zeros1}) and (\ref{eq:zeros3}): $b_2 = b_3 = 0$

\begin{itemize}
\item \textit{Subcase 1a:} $b_1 \neq 0$ \\
From (\ref{eq:zeros2}) and (\ref{eq:zeros4}): $a_2 = a_3 = 0$ \\
Equation (\ref{eq:C1}) becomes $a_1b_1 = a_4b_4$ \\
The non-degeneracy condition (\ref{eq:C12}) becomes:
\begin{equation*}
a_1b_1 - a_4b_4 \neq 0 
 ~(\text{thus giving rise 
 to a contradiction})
\end{equation*}

\item \textit{Subcase 1b:} $b_1 = 0$ \\
Equation (\ref{eq:C1}) reduces to $a_4b_4=0$ \\
Condition (\ref{eq:C12}) becomes $0 - 0 + 0 - a_4b_4 \neq 0  ~(\text{which gives rise 
 to a contradiction})$
\end{itemize}

\textbf{Case 2:} $a_1 = 0$ \\
From equations (\ref{eq:zeros1}-\ref{eq:zeros4}), similar analysis leads to contradictions
in all subcases.

Thus, we see that 
all possible cases lead to  contradictions in the non-degeneracy condition (\ref{eq:C12}). Therefore, no solution exists that satisfies all equations simultaneously.

\end{proof}

Interestingly, it was found in \cite{LSM} that for the above set of states, no LSM is possible. Our proof explains this by showing that local conclusive state discrimination, which is a weaker task than local state discrimination, is itself not possible.

\twocolumngrid


\begin{thebibliography}{81}%
	\makeatletter
	\providecommand \@ifxundefined [1]{%
		\@ifx{#1\undefined}
	}%
	\providecommand \@ifnum [1]{%
		\ifnum #1\expandafter \@firstoftwo
		\else \expandafter \@secondoftwo
		\fi
	}%
	\providecommand \@ifx [1]{%
		\ifx #1\expandafter \@firstoftwo
		\else \expandafter \@secondoftwo
		\fi
	}%
	\providecommand \natexlab [1]{#1}%
	\providecommand \enquote  [1]{``#1''}%
	\providecommand \bibnamefont  [1]{#1}%
	\providecommand \bibfnamefont [1]{#1}%
	\providecommand \citenamefont [1]{#1}%
	\providecommand \href@noop [0]{\@secondoftwo}%
	\providecommand \href [0]{\begingroup \@sanitize@url \@href}%
	\providecommand \@href[1]{\@@startlink{#1}\@@href}%
	\providecommand \@@href[1]{\endgroup#1\@@endlink}%
	\providecommand \@sanitize@url [0]{\catcode `\\12\catcode `\$12\catcode `\&12\catcode `\#12\catcode `\^12\catcode `\_12\catcode `\%12\relax}%
	\providecommand \@@startlink[1]{}%
	\providecommand \@@endlink[0]{}%
	\providecommand \url  [0]{\begingroup\@sanitize@url \@url }%
	\providecommand \@url [1]{\endgroup\@href {#1}{\urlprefix }}%
	\providecommand \urlprefix  [0]{URL }%
	\providecommand \Eprint [0]{\href }%
	\providecommand \doibase [0]{http://dx.doi.org/}%
	\providecommand \selectlanguage [0]{\@gobble}%
	\providecommand \bibinfo  [0]{\@secondoftwo}%
	\providecommand \bibfield  [0]{\@secondoftwo}%
	\providecommand \translation [1]{[#1]}%
	\providecommand \BibitemOpen [0]{}%
	\providecommand \bibitemStop [0]{}%
	\providecommand \bibitemNoStop [0]{.\EOS\space}%
	\providecommand \EOS [0]{\spacefactor3000\relax}%
	\providecommand \BibitemShut  [1]{\csname bibitem#1\endcsname}%
	\let\auto@bib@innerbib\@empty
	\bibitem [{\citenamefont {Bell}(1964)}]{Bell1964}%
	\BibitemOpen
	\bibfield  {author} {\bibinfo {author} {\bibfnamefont {J.~S.}\ \bibnamefont {Bell}},\ }\bibfield  {title} {\enquote {\bibinfo {title} {On the {Einstein} {Podolsky} {Rosen} paradox},}\ }\href {\doibase 10.1103/PhysicsPhysiqueFizika.1.195} {\bibfield  {journal} {\bibinfo  {journal} {Physics Physique Fizika}\ }\textbf {\bibinfo {volume} {1}},\ \bibinfo {pages} {195--200} (\bibinfo {year} {1964})}\BibitemShut {NoStop}%
	\bibitem [{\citenamefont {Bell}(1966)}]{BellAPS}%
	\BibitemOpen
	\bibfield  {author} {\bibinfo {author} {\bibfnamefont {J.~S.}\ \bibnamefont {Bell}},\ }\bibfield  {title} {\enquote {\bibinfo {title} {On the problem of hidden variables in quantum mechanics},}\ }\href {\doibase 10.1103/RevModPhys.38.447} {\bibfield  {journal} {\bibinfo  {journal} {Rev. Mod. Phys.}\ }\textbf {\bibinfo {volume} {38}},\ \bibinfo {pages} {447--452} (\bibinfo {year} {1966})}\BibitemShut {NoStop}%
	\bibitem [{\citenamefont {Mermin}(1993)}]{Mermin}%
	\BibitemOpen
	\bibfield  {author} {\bibinfo {author} {\bibfnamefont {N.~D.}\ \bibnamefont {Mermin}},\ }\bibfield  {title} {\enquote {\bibinfo {title} {Hidden variables and the two theorems of john bell},}\ }\href {\doibase 10.1103/RevModPhys.65.803} {\bibfield  {journal} {\bibinfo  {journal} {Rev. Mod. Phys.}\ }\textbf {\bibinfo {volume} {65}},\ \bibinfo {pages} {803--815} (\bibinfo {year} {1993})}\BibitemShut {NoStop}%
	\bibitem [{\citenamefont {Brunner}\ \emph {et~al.}(2014)\citenamefont {Brunner}, \citenamefont {Cavalcanti}, \citenamefont {Pironio}, \citenamefont {Scarani},\ and\ \citenamefont {Wehner}}]{BellReview}%
	\BibitemOpen
	\bibfield  {author} {\bibinfo {author} {\bibfnamefont {N.}~\bibnamefont {Brunner}}, \bibinfo {author} {\bibfnamefont {D.}~\bibnamefont {Cavalcanti}}, \bibinfo {author} {\bibfnamefont {S.}~\bibnamefont {Pironio}}, \bibinfo {author} {\bibfnamefont {V.}~\bibnamefont {Scarani}}, \ and\ \bibinfo {author} {\bibfnamefont {S.}~\bibnamefont {Wehner}},\ }\bibfield  {title} {\enquote {\bibinfo {title} {Bell nonlocality},}\ }\href {\doibase 10.1103/RevModPhys.86.419} {\bibfield  {journal} {\bibinfo  {journal} {Rev. Mod. Phys.}\ }\textbf {\bibinfo {volume} {86}},\ \bibinfo {pages} {419--478} (\bibinfo {year} {2014})}\BibitemShut {NoStop}%
	\bibitem [{\citenamefont {Peres}\ and\ \citenamefont {Wootters}(1991)}]{PeresWooters}%
	\BibitemOpen
	\bibfield  {author} {\bibinfo {author} {\bibfnamefont {A.}~\bibnamefont {Peres}}\ and\ \bibinfo {author} {\bibfnamefont {W.~K.}\ \bibnamefont {Wootters}},\ }\bibfield  {title} {\enquote {\bibinfo {title} {Optimal detection of quantum information},}\ }\href {\doibase 10.1103/PhysRevLett.66.1119} {\bibfield  {journal} {\bibinfo  {journal} {Phys. Rev. Lett.}\ }\textbf {\bibinfo {volume} {66}},\ \bibinfo {pages} {1119--1122} (\bibinfo {year} {1991})}\BibitemShut {NoStop}%
	\bibitem [{\citenamefont {Bennett}\ \emph {et~al.}(1999{\natexlab{a}})\citenamefont {Bennett}, \citenamefont {DiVincenzo}, \citenamefont {Fuchs}, \citenamefont {Mor}, \citenamefont {Rains}, \citenamefont {Shor}, \citenamefont {Smolin},\ and\ \citenamefont {Wootters}}]{BennettNLWE}%
	\BibitemOpen
	\bibfield  {author} {\bibinfo {author} {\bibfnamefont {C.~H.}\ \bibnamefont {Bennett}}, \bibinfo {author} {\bibfnamefont {D.~P.}\ \bibnamefont {DiVincenzo}}, \bibinfo {author} {\bibfnamefont {C.~A.}\ \bibnamefont {Fuchs}}, \bibinfo {author} {\bibfnamefont {T.}~\bibnamefont {Mor}}, \bibinfo {author} {\bibfnamefont {E.}~\bibnamefont {Rains}}, \bibinfo {author} {\bibfnamefont {P.~W.}\ \bibnamefont {Shor}}, \bibinfo {author} {\bibfnamefont {J.~A.}\ \bibnamefont {Smolin}}, \ and\ \bibinfo {author} {\bibfnamefont {W.~K.}\ \bibnamefont {Wootters}},\ }\bibfield  {title} {\enquote {\bibinfo {title} {Quantum nonlocality without entanglement},}\ }\href {\doibase 10.1103/PhysRevA.59.1070} {\bibfield  {journal} {\bibinfo  {journal} {Phys. Rev. A}\ }\textbf {\bibinfo {volume} {59}},\ \bibinfo {pages} {1070--1091} (\bibinfo {year} {1999}{\natexlab{a}})}\BibitemShut {NoStop}%
	\bibitem [{\citenamefont {Bennett}\ \emph {et~al.}(1999{\natexlab{b}})\citenamefont {Bennett}, \citenamefont {DiVincenzo}, \citenamefont {Mor}, \citenamefont {Shor}, \citenamefont {Smolin},\ and\ \citenamefont {Terhal}}]{Bennett99(1)}%
	\BibitemOpen
	\bibfield  {author} {\bibinfo {author} {\bibfnamefont {C.~H.}\ \bibnamefont {Bennett}}, \bibinfo {author} {\bibfnamefont {D.~P.}\ \bibnamefont {DiVincenzo}}, \bibinfo {author} {\bibfnamefont {T.}~\bibnamefont {Mor}}, \bibinfo {author} {\bibfnamefont {P.~W.}\ \bibnamefont {Shor}}, \bibinfo {author} {\bibfnamefont {J.~A.}\ \bibnamefont {Smolin}}, \ and\ \bibinfo {author} {\bibfnamefont {B.~M.}\ \bibnamefont {Terhal}},\ }\bibfield  {title} {\enquote {\bibinfo {title} {Unextendible product bases and bound entanglement},}\ }\href {\doibase 10.1103/PhysRevLett.82.5385} {\bibfield  {journal} {\bibinfo  {journal} {Phys. Rev. Lett.}\ }\textbf {\bibinfo {volume} {82}},\ \bibinfo {pages} {5385--5388} (\bibinfo {year} {1999}{\natexlab{b}})}\BibitemShut {NoStop}%
	\bibitem [{\citenamefont {DiVincenzo}\ \emph {et~al.}(2003{\natexlab{a}})\citenamefont {DiVincenzo}, \citenamefont {Mor}, \citenamefont {Shor}, \citenamefont {Smolin},\ and\ \citenamefont {Terhal}}]{DiVincenzo03}%
	\BibitemOpen
	\bibfield  {author} {\bibinfo {author} {\bibfnamefont {D.~P.}\ \bibnamefont {DiVincenzo}}, \bibinfo {author} {\bibfnamefont {T.}~\bibnamefont {Mor}}, \bibinfo {author} {\bibfnamefont {P.~W.}\ \bibnamefont {Shor}}, \bibinfo {author} {\bibfnamefont {J.~A.}\ \bibnamefont {Smolin}}, \ and\ \bibinfo {author} {\bibfnamefont {B.~M.}\ \bibnamefont {Terhal}},\ }\bibfield  {title} {\enquote {\bibinfo {title} {Unextendible {P}roduct {B}ases, {U}ncompletable {P}roduct {B}ases and {B}ound {E}ntanglement},}\ }\href {\doibase 10.1007/s00220-003-0877-6} {\bibfield  {journal} {\bibinfo  {journal} {Commun. Math. Phys.}\ }\textbf {\bibinfo {volume} {238}},\ \bibinfo {pages} {379–410} (\bibinfo {year} {2003}{\natexlab{a}})}\BibitemShut {NoStop}%
	\bibitem [{\citenamefont {Niset}\ and\ \citenamefont {Cerf}(2006)}]{Niset06}%
	\BibitemOpen
	\bibfield  {author} {\bibinfo {author} {\bibfnamefont {J.}~\bibnamefont {Niset}}\ and\ \bibinfo {author} {\bibfnamefont {N.~J.}\ \bibnamefont {Cerf}},\ }\bibfield  {title} {\enquote {\bibinfo {title} {Multipartite nonlocality without entanglement in many dimensions},}\ }\href {\doibase 10.1103/PhysRevA.74.052103} {\bibfield  {journal} {\bibinfo  {journal} {Phys. Rev. A}\ }\textbf {\bibinfo {volume} {74}},\ \bibinfo {pages} {052103} (\bibinfo {year} {2006})}\BibitemShut {NoStop}%
	\bibitem [{\citenamefont {Duan}\ \emph {et~al.}(2007)\citenamefont {Duan}, \citenamefont {Feng}, \citenamefont {Ji},\ and\ \citenamefont {Ying}}]{DuanProduct}%
	\BibitemOpen
	\bibfield  {author} {\bibinfo {author} {\bibfnamefont {R.}~\bibnamefont {Duan}}, \bibinfo {author} {\bibfnamefont {Y.}~\bibnamefont {Feng}}, \bibinfo {author} {\bibfnamefont {Z.}~\bibnamefont {Ji}}, \ and\ \bibinfo {author} {\bibfnamefont {M.}~\bibnamefont {Ying}},\ }\bibfield  {title} {\enquote {\bibinfo {title} {Distinguishing arbitrary multipartite basis unambiguously using local operations and classical communication},}\ }\href {\doibase 10.1103/PhysRevLett.98.230502} {\bibfield  {journal} {\bibinfo  {journal} {Phys. Rev. Lett.}\ }\textbf {\bibinfo {volume} {98}},\ \bibinfo {pages} {230502} (\bibinfo {year} {2007})}\BibitemShut {NoStop}%
	\bibitem [{\citenamefont {Calsamiglia}\ \emph {et~al.}(2010)\citenamefont {Calsamiglia}, \citenamefont {de~Vicente}, \citenamefont {Mu\~noz Tapia},\ and\ \citenamefont {Bagan}}]{Calsamiglia10}%
	\BibitemOpen
	\bibfield  {author} {\bibinfo {author} {\bibfnamefont {J.}~\bibnamefont {Calsamiglia}}, \bibinfo {author} {\bibfnamefont {J.~I.}\ \bibnamefont {de~Vicente}}, \bibinfo {author} {\bibfnamefont {R.}~\bibnamefont {Mu\~noz Tapia}}, \ and\ \bibinfo {author} {\bibfnamefont {E.}~\bibnamefont {Bagan}},\ }\bibfield  {title} {\enquote {\bibinfo {title} {Local discrimination of mixed states},}\ }\href {\doibase 10.1103/PhysRevLett.105.080504} {\bibfield  {journal} {\bibinfo  {journal} {Phys. Rev. Lett.}\ }\textbf {\bibinfo {volume} {105}},\ \bibinfo {pages} {080504} (\bibinfo {year} {2010})}\BibitemShut {NoStop}%
	\bibitem [{\citenamefont {Bandyopadhyay}(2011)}]{Bandyopadhyay}%
	\BibitemOpen
	\bibfield  {author} {\bibinfo {author} {\bibfnamefont {S.}~\bibnamefont {Bandyopadhyay}},\ }\bibfield  {title} {\enquote {\bibinfo {title} {More nonlocality with less purity},}\ }\href {\doibase 10.1103/PhysRevLett.106.210402} {\bibfield  {journal} {\bibinfo  {journal} {Phys. Rev. Lett.}\ }\textbf {\bibinfo {volume} {106}},\ \bibinfo {pages} {210402} (\bibinfo {year} {2011})}\BibitemShut {NoStop}%
	\bibitem [{\citenamefont {Chitambar}\ \emph {et~al.}(2014{\natexlab{a}})\citenamefont {Chitambar}, \citenamefont {Duan},\ and\ \citenamefont {Hsieh}}]{Chitambar14}%
	\BibitemOpen
	\bibfield  {author} {\bibinfo {author} {\bibfnamefont {E.}~\bibnamefont {Chitambar}}, \bibinfo {author} {\bibfnamefont {R.}~\bibnamefont {Duan}}, \ and\ \bibinfo {author} {\bibfnamefont {Min-Hsiu}\ \bibnamefont {Hsieh}},\ }\bibfield  {title} {\enquote {\bibinfo {title} {When do local operations and classical communication suffice for two-qubit state discrimination?}}\ }\href {\doibase 10.1109/TIT.2013.2295356} {\bibfield  {journal} {\bibinfo  {journal} {IEEE Trans. Inf. Theory}\ }\textbf {\bibinfo {volume} {60}},\ \bibinfo {pages} {1549--1561} (\bibinfo {year} {2014}{\natexlab{a}})}\BibitemShut {NoStop}%
	\bibitem [{\citenamefont {Halder}(2018)}]{Halder18}%
	\BibitemOpen
	\bibfield  {author} {\bibinfo {author} {\bibfnamefont {S.}~\bibnamefont {Halder}},\ }\bibfield  {title} {\enquote {\bibinfo {title} {Several nonlocal sets of multipartite pure orthogonal product states},}\ }\href {\doibase 10.1103/PhysRevA.98.022303} {\bibfield  {journal} {\bibinfo  {journal} {Phys. Rev. A}\ }\textbf {\bibinfo {volume} {98}},\ \bibinfo {pages} {022303} (\bibinfo {year} {2018})}\BibitemShut {NoStop}%
	\bibitem [{\citenamefont {Demianowicz}\ and\ \citenamefont {Augusiak}(2018)}]{Demianowicz18}%
	\BibitemOpen
	\bibfield  {author} {\bibinfo {author} {\bibfnamefont {M.}~\bibnamefont {Demianowicz}}\ and\ \bibinfo {author} {\bibfnamefont {R.}~\bibnamefont {Augusiak}},\ }\bibfield  {title} {\enquote {\bibinfo {title} {From unextendible product bases to genuinely entangled subspaces},}\ }\href {\doibase 10.1103/PhysRevA.98.012313} {\bibfield  {journal} {\bibinfo  {journal} {Phys. Rev. A}\ }\textbf {\bibinfo {volume} {98}},\ \bibinfo {pages} {012313} (\bibinfo {year} {2018})}\BibitemShut {NoStop}%
	\bibitem [{\citenamefont {Halder}\ \emph {et~al.}(2019{\natexlab{a}})\citenamefont {Halder}, \citenamefont {Banik}, \citenamefont {Agrawal},\ and\ \citenamefont {Bandyopadhyay}}]{Halder19}%
	\BibitemOpen
	\bibfield  {author} {\bibinfo {author} {\bibfnamefont {S.}~\bibnamefont {Halder}}, \bibinfo {author} {\bibfnamefont {M.}~\bibnamefont {Banik}}, \bibinfo {author} {\bibfnamefont {S.}~\bibnamefont {Agrawal}}, \ and\ \bibinfo {author} {\bibfnamefont {S.}~\bibnamefont {Bandyopadhyay}},\ }\bibfield  {title} {\enquote {\bibinfo {title} {Strong quantum nonlocality without entanglement},}\ }\href {\doibase 10.1103/PhysRevLett.122.040403} {\bibfield  {journal} {\bibinfo  {journal} {Phys. Rev. Lett.}\ }\textbf {\bibinfo {volume} {122}},\ \bibinfo {pages} {040403} (\bibinfo {year} {2019}{\natexlab{a}})}\BibitemShut {NoStop}%
	\bibitem [{\citenamefont {Halder}\ \emph {et~al.}(2019{\natexlab{b}})\citenamefont {Halder}, \citenamefont {Banik},\ and\ \citenamefont {Ghosh}}]{Halder19(1)}%
	\BibitemOpen
	\bibfield  {author} {\bibinfo {author} {\bibfnamefont {S.}~\bibnamefont {Halder}}, \bibinfo {author} {\bibfnamefont {M.}~\bibnamefont {Banik}}, \ and\ \bibinfo {author} {\bibfnamefont {S.}~\bibnamefont {Ghosh}},\ }\bibfield  {title} {\enquote {\bibinfo {title} {Family of bound entangled states on the boundary of the peres set},}\ }\href {\doibase 10.1103/PhysRevA.99.062329} {\bibfield  {journal} {\bibinfo  {journal} {Phys. Rev. A}\ }\textbf {\bibinfo {volume} {99}},\ \bibinfo {pages} {062329} (\bibinfo {year} {2019}{\natexlab{b}})}\BibitemShut {NoStop}%
	\bibitem [{\citenamefont {Agrawal}\ \emph {et~al.}(2019)\citenamefont {Agrawal}, \citenamefont {Halder},\ and\ \citenamefont {Banik}}]{Agrawal19}%
	\BibitemOpen
	\bibfield  {author} {\bibinfo {author} {\bibfnamefont {S.}~\bibnamefont {Agrawal}}, \bibinfo {author} {\bibfnamefont {S.}~\bibnamefont {Halder}}, \ and\ \bibinfo {author} {\bibfnamefont {M.}~\bibnamefont {Banik}},\ }\bibfield  {title} {\enquote {\bibinfo {title} {Genuinely entangled subspace with all-encompassing distillable entanglement across every bipartition},}\ }\href {\doibase 10.1103/PhysRevA.99.032335} {\bibfield  {journal} {\bibinfo  {journal} {Phys. Rev. A}\ }\textbf {\bibinfo {volume} {99}},\ \bibinfo {pages} {032335} (\bibinfo {year} {2019})}\BibitemShut {NoStop}%
	\bibitem [{\citenamefont {Rout}\ \emph {et~al.}(2019)\citenamefont {Rout}, \citenamefont {Maity}, \citenamefont {Mukherjee}, \citenamefont {Halder},\ and\ \citenamefont {Banik}}]{Rout19}%
	\BibitemOpen
	\bibfield  {author} {\bibinfo {author} {\bibfnamefont {S.}~\bibnamefont {Rout}}, \bibinfo {author} {\bibfnamefont {A.~G.}\ \bibnamefont {Maity}}, \bibinfo {author} {\bibfnamefont {A.}~\bibnamefont {Mukherjee}}, \bibinfo {author} {\bibfnamefont {S.}~\bibnamefont {Halder}}, \ and\ \bibinfo {author} {\bibfnamefont {M.}~\bibnamefont {Banik}},\ }\bibfield  {title} {\enquote {\bibinfo {title} {Genuinely nonlocal product bases: Classification and entanglement-assisted discrimination},}\ }\href {\doibase 10.1103/PhysRevA.100.032321} {\bibfield  {journal} {\bibinfo  {journal} {Phys. Rev. A}\ }\textbf {\bibinfo {volume} {100}},\ \bibinfo {pages} {032321} (\bibinfo {year} {2019})}\BibitemShut {NoStop}%
	\bibitem [{\citenamefont {Bhattacharya}\ \emph {et~al.}(2020)\citenamefont {Bhattacharya}, \citenamefont {Saha}, \citenamefont {Guha},\ and\ \citenamefont {Banik}}]{Bhattacharya20}%
	\BibitemOpen
	\bibfield  {author} {\bibinfo {author} {\bibfnamefont {S.~S.}\ \bibnamefont {Bhattacharya}}, \bibinfo {author} {\bibfnamefont {S.}~\bibnamefont {Saha}}, \bibinfo {author} {\bibfnamefont {T.}~\bibnamefont {Guha}}, \ and\ \bibinfo {author} {\bibfnamefont {M.}~\bibnamefont {Banik}},\ }\bibfield  {title} {\enquote {\bibinfo {title} {Nonlocality without entanglement: Quantum theory and beyond},}\ }\href {\doibase 10.1103/PhysRevResearch.2.012068} {\bibfield  {journal} {\bibinfo  {journal} {Phys. Rev. Res.}\ }\textbf {\bibinfo {volume} {2}},\ \bibinfo {pages} {012068} (\bibinfo {year} {2020})}\BibitemShut {NoStop}%
	\bibitem [{\citenamefont {Banik}\ \emph {et~al.}(2021)\citenamefont {Banik}, \citenamefont {Guha}, \citenamefont {Alimuddin}, \citenamefont {Kar}, \citenamefont {Halder},\ and\ \citenamefont {Bhattacharya}}]{Banik20}%
	\BibitemOpen
	\bibfield  {author} {\bibinfo {author} {\bibfnamefont {M.}~\bibnamefont {Banik}}, \bibinfo {author} {\bibfnamefont {T.}~\bibnamefont {Guha}}, \bibinfo {author} {\bibfnamefont {M.}~\bibnamefont {Alimuddin}}, \bibinfo {author} {\bibfnamefont {G.}~\bibnamefont {Kar}}, \bibinfo {author} {\bibfnamefont {S.}~\bibnamefont {Halder}}, \ and\ \bibinfo {author} {\bibfnamefont {S.~S.}\ \bibnamefont {Bhattacharya}},\ }\bibfield  {title} {\enquote {\bibinfo {title} {Multicopy adaptive local discrimination: Strongest possible two-qubit nonlocal bases},}\ }\href {\doibase 10.1103/PhysRevLett.126.210505} {\bibfield  {journal} {\bibinfo  {journal} {Phys. Rev. Lett.}\ }\textbf {\bibinfo {volume} {126}},\ \bibinfo {pages} {210505} (\bibinfo {year} {2021})}\BibitemShut {NoStop}%
	\bibitem [{\citenamefont {Rout}\ \emph {et~al.}(2021)\citenamefont {Rout}, \citenamefont {Maity}, \citenamefont {Mukherjee}, \citenamefont {Halder},\ and\ \citenamefont {Banik}}]{Rout20}%
	\BibitemOpen
	\bibfield  {author} {\bibinfo {author} {\bibfnamefont {S.}~\bibnamefont {Rout}}, \bibinfo {author} {\bibfnamefont {A.~G.}\ \bibnamefont {Maity}}, \bibinfo {author} {\bibfnamefont {A.}~\bibnamefont {Mukherjee}}, \bibinfo {author} {\bibfnamefont {S.}~\bibnamefont {Halder}}, \ and\ \bibinfo {author} {\bibfnamefont {M.}~\bibnamefont {Banik}},\ }\bibfield  {title} {\enquote {\bibinfo {title} {Multiparty orthogonal product states with minimal genuine nonlocality},}\ }\href {\doibase 10.1103/PhysRevA.104.052433} {\bibfield  {journal} {\bibinfo  {journal} {Phys. Rev. A}\ }\textbf {\bibinfo {volume} {104}},\ \bibinfo {pages} {052433} (\bibinfo {year} {2021})}\BibitemShut {NoStop}%
	\bibitem [{\citenamefont {Sen}\ \emph {et~al.}(2022)\citenamefont {Sen}, \citenamefont {Lobo}, \citenamefont {Naik}, \citenamefont {Patra}, \citenamefont {Gupta}, \citenamefont {Ghosh}, \citenamefont {Saha}, \citenamefont {Alimuddin}, \citenamefont {Guha}, \citenamefont {Bhattacharya},\ and\ \citenamefont {Banik}}]{LSM}%
	\BibitemOpen
	\bibfield  {author} {\bibinfo {author} {\bibfnamefont {S.}~\bibnamefont {Sen}}, \bibinfo {author} {\bibfnamefont {E.~P.}\ \bibnamefont {Lobo}}, \bibinfo {author} {\bibfnamefont {S.~G.}\ \bibnamefont {Naik}}, \bibinfo {author} {\bibfnamefont {R.~K.}\ \bibnamefont {Patra}}, \bibinfo {author} {\bibfnamefont {T.}~\bibnamefont {Gupta}}, \bibinfo {author} {\bibfnamefont {S.~B.}\ \bibnamefont {Ghosh}}, \bibinfo {author} {\bibfnamefont {S.}~\bibnamefont {Saha}}, \bibinfo {author} {\bibfnamefont {M.}~\bibnamefont {Alimuddin}}, \bibinfo {author} {\bibfnamefont {T.}~\bibnamefont {Guha}}, \bibinfo {author} {\bibfnamefont {S.~S.}\ \bibnamefont {Bhattacharya}}, \ and\ \bibinfo {author} {\bibfnamefont {M.}~\bibnamefont {Banik}},\ }\bibfield  {title} {\enquote {\bibinfo {title} {Local quantum state marking},}\ }\href {\doibase 10.1103/PhysRevA.105.032407} {\bibfield  {journal} {\bibinfo  {journal} {Phys. Rev. A}\ }\textbf {\bibinfo {volume} {105}},\ \bibinfo {pages} {032407} (\bibinfo {year} {2022})}\BibitemShut {NoStop}%
	\bibitem [{\citenamefont {Ghosh}\ \emph {et~al.}(2022)\citenamefont {Ghosh}, \citenamefont {Gupta}, \citenamefont {V.}, \citenamefont {Das~Bhowmik}, \citenamefont {Saha}, \citenamefont {Guha},\ and\ \citenamefont {Mukherjee}}]{Subhendu}%
	\BibitemOpen
	\bibfield  {author} {\bibinfo {author} {\bibfnamefont {S.~B.}\ \bibnamefont {Ghosh}}, \bibinfo {author} {\bibfnamefont {T.}~\bibnamefont {Gupta}}, \bibinfo {author} {\bibfnamefont {Ardra~A.}\ \bibnamefont {V.}}, \bibinfo {author} {\bibfnamefont {A.}~\bibnamefont {Das~Bhowmik}}, \bibinfo {author} {\bibfnamefont {S.}~\bibnamefont {Saha}}, \bibinfo {author} {\bibfnamefont {T.}~\bibnamefont {Guha}}, \ and\ \bibinfo {author} {\bibfnamefont {A.}~\bibnamefont {Mukherjee}},\ }\bibfield  {title} {\enquote {\bibinfo {title} {Activating strong nonlocality from local sets: An elimination paradigm},}\ }\href {\doibase 10.1103/PhysRevA.106.L010202} {\bibfield  {journal} {\bibinfo  {journal} {Phys. Rev. A}\ }\textbf {\bibinfo {volume} {106}},\ \bibinfo {pages} {L010202} (\bibinfo {year} {2022})}\BibitemShut {NoStop}%
	\bibitem [{\citenamefont {Gupta}\ \emph {et~al.}(2023)\citenamefont {Gupta}, \citenamefont {Ghosh}, \citenamefont {A.~V.}, \citenamefont {Das~Bhowmik}, \citenamefont {Saha}, \citenamefont {Guha}, \citenamefont {Rahaman},\ and\ \citenamefont {Mukherjee}}]{Tathagata1}%
	\BibitemOpen
	\bibfield  {author} {\bibinfo {author} {\bibfnamefont {T.}~\bibnamefont {Gupta}}, \bibinfo {author} {\bibfnamefont {S.~B.}\ \bibnamefont {Ghosh}}, \bibinfo {author} {\bibfnamefont {Ardra}\ \bibnamefont {A.~V.}}, \bibinfo {author} {\bibfnamefont {A.}~\bibnamefont {Das~Bhowmik}}, \bibinfo {author} {\bibfnamefont {S.}~\bibnamefont {Saha}}, \bibinfo {author} {\bibfnamefont {T.}~\bibnamefont {Guha}}, \bibinfo {author} {\bibfnamefont {R.}~\bibnamefont {Rahaman}}, \ and\ \bibinfo {author} {\bibfnamefont {A.}~\bibnamefont {Mukherjee}},\ }\bibfield  {title} {\enquote {\bibinfo {title} {Hierarchical activation of quantum nonlocality: Stronger than local indistinguishability},}\ }\href {\doibase 10.1103/PhysRevA.107.052418} {\bibfield  {journal} {\bibinfo  {journal} {Phys. Rev. A}\ }\textbf {\bibinfo {volume} {107}},\ \bibinfo {pages} {052418} (\bibinfo {year} {2023})}\BibitemShut {NoStop}%
	\bibitem [{\citenamefont {Ghosal}\ \emph {et~al.}(2024)\citenamefont {Ghosal}, \citenamefont {Ghosal}, \citenamefont {Ghosh},\ and\ \citenamefont {Mukherjee}}]{Pratik}%
	\BibitemOpen
	\bibfield  {author} {\bibinfo {author} {\bibfnamefont {P.}~\bibnamefont {Ghosal}}, \bibinfo {author} {\bibfnamefont {A.}~\bibnamefont {Ghosal}}, \bibinfo {author} {\bibfnamefont {S.~B.}\ \bibnamefont {Ghosh}}, \ and\ \bibinfo {author} {\bibfnamefont {A.}~\bibnamefont {Mukherjee}},\ }\bibfield  {title} {\enquote {\bibinfo {title} {Locally unidentifiable subset of quantum states and its resourcefulness in secret password distribution},}\ }\href {\doibase 10.1103/PhysRevA.109.052617} {\bibfield  {journal} {\bibinfo  {journal} {Phys. Rev. A}\ }\textbf {\bibinfo {volume} {109}},\ \bibinfo {pages} {052617} (\bibinfo {year} {2024})}\BibitemShut {NoStop}%
	\bibitem [{\citenamefont {Ghosh}\ \emph {et~al.}(2023)\citenamefont {Ghosh}, \citenamefont {Chowdhury}, \citenamefont {Gupta}, \citenamefont {Bhowmik}, \citenamefont {Saha}, \citenamefont {Bhattacharya},\ and\ \citenamefont {Guha}}]{Subhendu2}%
	\BibitemOpen
	\bibfield  {author} {\bibinfo {author} {\bibfnamefont {S.~B}\ \bibnamefont {Ghosh}}, \bibinfo {author} {\bibfnamefont {S.~R.}\ \bibnamefont {Chowdhury}}, \bibinfo {author} {\bibfnamefont {T.}~\bibnamefont {Gupta}}, \bibinfo {author} {\bibfnamefont {A.~Das}\ \bibnamefont {Bhowmik}}, \bibinfo {author} {\bibfnamefont {S.}~\bibnamefont {Saha}}, \bibinfo {author} {\bibfnamefont {S.~S.}\ \bibnamefont {Bhattacharya}}, \ and\ \bibinfo {author} {\bibfnamefont {T.}~\bibnamefont {Guha}},\ }\bibfield  {title} {\enquote {\bibinfo {title} {Local inaccessibility of random classical information : Conditional nonlocality demands entanglement},}\ }\href@noop {} {\  (\bibinfo {year} {2023})},\ \Eprint {http://arxiv.org/abs/arxiv: 2307.08457} {arxiv: 2307.08457 [quant-ph]} \BibitemShut {NoStop}%
	\bibitem [{\citenamefont {Gagliardoni}\ \emph {et~al.}(2021)\citenamefont {Gagliardoni}, \citenamefont {Kr{\"a}mer},\ and\ \citenamefont {Struck}}]{Gagliardoni2021Quantum}%
	\BibitemOpen
	\bibfield  {author} {\bibinfo {author} {\bibfnamefont {T.}~\bibnamefont {Gagliardoni}}, \bibinfo {author} {\bibfnamefont {J.}~\bibnamefont {Kr{\"a}mer}}, \ and\ \bibinfo {author} {\bibfnamefont {P.}~\bibnamefont {Struck}},\ }\bibfield  {title} {\enquote {\bibinfo {title} {Quantum indistinguishability for public key encryption},}\ }\href {\doibase 10.1007/978-3-030-81293-5_24} {\bibfield  {journal} {\bibinfo  {journal} {Post-Quantum Cryptography (PQCrypto 2021)}\ }\bibinfo {series} {Lecture Notes in Computer Science},\ \textbf {\bibinfo {volume} {12841}},\ \bibinfo {pages} {463--482} (\bibinfo {year} {2021})}\BibitemShut {NoStop}%
	\bibitem [{\citenamefont {Bennett}\ and\ \citenamefont {Brassard}(1984)}]{Bennett1984Quantum}%
	\BibitemOpen
	\bibfield  {author} {\bibinfo {author} {\bibfnamefont {C.~H.}\ \bibnamefont {Bennett}}\ and\ \bibinfo {author} {\bibfnamefont {G.}~\bibnamefont {Brassard}},\ }\bibfield  {title} {\enquote {\bibinfo {title} {Quantum cryptography: Public key distribution and coin tossing},}\ \ }(\bibinfo  {publisher} {IEEE},\ \bibinfo {address} {Bangalore, India},\ \bibinfo {year} {1984})\ pp.\ \bibinfo {pages} {175--179}\BibitemShut {NoStop}%
	\bibitem [{\citenamefont {Markham}\ and\ \citenamefont {Sanders}(2008)}]{Markham2008Graph}%
	\BibitemOpen
	\bibfield  {author} {\bibinfo {author} {\bibfnamefont {D.}~\bibnamefont {Markham}}\ and\ \bibinfo {author} {\bibfnamefont {B.~C.}\ \bibnamefont {Sanders}},\ }\bibfield  {title} {\enquote {\bibinfo {title} {Graph states for quantum secret sharing},}\ }\href {\doibase 10.1103/PhysRevA.78.042309} {\bibfield  {journal} {\bibinfo  {journal} {Phys. Rev. A}\ }\textbf {\bibinfo {volume} {78}},\ \bibinfo {pages} {042309} (\bibinfo {year} {2008})}\BibitemShut {NoStop}%
	\bibitem [{\citenamefont {Matthews}\ \emph {et~al.}(2009)\citenamefont {Matthews}, \citenamefont {Wehner},\ and\ \citenamefont {Winter}}]{Matthews2009Distinguishability}%
	\BibitemOpen
	\bibfield  {author} {\bibinfo {author} {\bibfnamefont {W.~M.}\ \bibnamefont {Matthews}}, \bibinfo {author} {\bibfnamefont {S.}~\bibnamefont {Wehner}}, \ and\ \bibinfo {author} {\bibfnamefont {Andreas}\ \bibnamefont {Winter}},\ }\bibfield  {title} {\enquote {\bibinfo {title} {Distinguishability of quantum states under restricted families of measurements with an application to quantum data hiding},}\ }\href {\doibase 10.1007/s00220-009-0890-5} {\bibfield  {journal} {\bibinfo  {journal} {Commun. Math. Phys.}\ }\textbf {\bibinfo {volume} {291}},\ \bibinfo {pages} {813--843} (\bibinfo {year} {2009})}\BibitemShut {NoStop}%
	\bibitem [{\citenamefont {Chefles}(1998)}]{CheflesGlobal}%
	\BibitemOpen
	\bibfield  {author} {\bibinfo {author} {\bibfnamefont {A.}~\bibnamefont {Chefles}},\ }\bibfield  {title} {\enquote {\bibinfo {title} {Unambiguous discrimination between linearly-independent quantum states},}\ }\href {\doibase 10.1016/S0375-9601(98)00064-4} {\bibfield  {journal} {\bibinfo  {journal} {Phys. Lett. A}\ }\textbf {\bibinfo {volume} {239}},\ \bibinfo {pages} {339--347} (\bibinfo {year} {1998})}\BibitemShut {NoStop}%
	\bibitem [{\citenamefont {Chefles}(2004)}]{Chefles}%
	\BibitemOpen
	\bibfield  {author} {\bibinfo {author} {\bibfnamefont {A.}~\bibnamefont {Chefles}},\ }\bibfield  {title} {\enquote {\bibinfo {title} {Condition for unambiguous state discrimination using local operations and classical communication},}\ }\href {\doibase 10.1103/PhysRevA.69.050307} {\bibfield  {journal} {\bibinfo  {journal} {Phys. Rev. A}\ }\textbf {\bibinfo {volume} {69}},\ \bibinfo {pages} {050307} (\bibinfo {year} {2004})}\BibitemShut {NoStop}%
	\bibitem [{\citenamefont {Yu}\ \emph {et~al.}(2014)\citenamefont {Yu}, \citenamefont {Duan},\ and\ \citenamefont {Ying}}]{Duan}%
	\BibitemOpen
	\bibfield  {author} {\bibinfo {author} {\bibfnamefont {N.}~\bibnamefont {Yu}}, \bibinfo {author} {\bibfnamefont {R.}~\bibnamefont {Duan}}, \ and\ \bibinfo {author} {\bibfnamefont {M.}~\bibnamefont {Ying}},\ }\bibfield  {title} {\enquote {\bibinfo {title} {Distinguishability of quantum states by positive operator-valued measures with positive partial transpose},}\ }\href {\doibase 10.1109/TIT.2014.2307575} {\bibfield  {journal} {\bibinfo  {journal} {IEEE Trans. Inf. Theory}\ }\textbf {\bibinfo {volume} {60}},\ \bibinfo {pages} {2069--2079} (\bibinfo {year} {2014})}\BibitemShut {NoStop}%
	\bibitem [{\citenamefont {Chitambar}\ \emph {et~al.}(2014{\natexlab{b}})\citenamefont {Chitambar}, \citenamefont {Leung}, \citenamefont {Mančinska}, \citenamefont {Ozols},\ and\ \citenamefont {Winter}}]{Chitambar2014Everything}%
	\BibitemOpen
	\bibfield  {author} {\bibinfo {author} {\bibfnamefont {E.}~\bibnamefont {Chitambar}}, \bibinfo {author} {\bibfnamefont {D.}~\bibnamefont {Leung}}, \bibinfo {author} {\bibfnamefont {L.}~\bibnamefont {Mančinska}}, \bibinfo {author} {\bibfnamefont {M.}~\bibnamefont {Ozols}}, \ and\ \bibinfo {author} {\bibfnamefont {A.}~\bibnamefont {Winter}},\ }\bibfield  {title} {\enquote {\bibinfo {title} {Everything you always wanted to know about locc (but were afraid to ask)},}\ }\href {\doibase 10.1007/s00220-014-1953-9} {\bibfield  {journal} {\bibinfo  {journal} {Commun. Math. Phys.}\ }\textbf {\bibinfo {volume} {328}},\ \bibinfo {pages} {303--326} (\bibinfo {year} {2014}{\natexlab{b}})}\BibitemShut {NoStop}%
	\bibitem [{\citenamefont {Xu}\ \emph {et~al.}(2016{\natexlab{a}})\citenamefont {Xu}, \citenamefont {Yang}, \citenamefont {Wen}, \citenamefont {Qin},\ and\ \citenamefont {Gao}}]{Xu2016}%
	\BibitemOpen
	\bibfield  {author} {\bibinfo {author} {\bibfnamefont {G.-B.}\ \bibnamefont {Xu}}, \bibinfo {author} {\bibfnamefont {Y.-H.}\ \bibnamefont {Yang}}, \bibinfo {author} {\bibfnamefont {Q.-Y.}\ \bibnamefont {Wen}}, \bibinfo {author} {\bibfnamefont {S.-J.}\ \bibnamefont {Qin}}, \ and\ \bibinfo {author} {\bibfnamefont {F.}~\bibnamefont {Gao}},\ }\bibfield  {title} {\enquote {\bibinfo {title} {Locally indistinguishable orthogonal product bases in arbitrary bipartite quantum system},}\ }\href {https://doi.org/10.1038/srep31048} {\bibfield  {journal} {\bibinfo  {journal} {Sci. Rep.}\ }\textbf {\bibinfo {volume} {6}},\ \bibinfo {pages} {1--6} (\bibinfo {year} {2016}{\natexlab{a}})}\BibitemShut {NoStop}%
	\bibitem [{\citenamefont {Yang}\ \emph {et~al.}(2015)\citenamefont {Yang}, \citenamefont {Gao}, \citenamefont {Xu}, \citenamefont {Zuo}, \citenamefont {Zhang},\ and\ \citenamefont {Wen}}]{Yang2015}%
	\BibitemOpen
	\bibfield  {author} {\bibinfo {author} {\bibfnamefont {Y.-H.}\ \bibnamefont {Yang}}, \bibinfo {author} {\bibfnamefont {F.}~\bibnamefont {Gao}}, \bibinfo {author} {\bibfnamefont {G.-B.}\ \bibnamefont {Xu}}, \bibinfo {author} {\bibfnamefont {H.-J.}\ \bibnamefont {Zuo}}, \bibinfo {author} {\bibfnamefont {Z.-C.}\ \bibnamefont {Zhang}}, \ and\ \bibinfo {author} {\bibfnamefont {Q.-Y.}\ \bibnamefont {Wen}},\ }\bibfield  {title} {\enquote {\bibinfo {title} {Characterizing unextendible product bases in qutrit-ququad system},}\ }\href {https://doi.org/10.1038/srep11963} {\bibfield  {journal} {\bibinfo  {journal} {Sci. Rep.}\ }\textbf {\bibinfo {volume} {5}},\ \bibinfo {pages} {11963} (\bibinfo {year} {2015})}\BibitemShut {NoStop}%
	\bibitem [{\citenamefont {Yang}\ \emph {et~al.}(2013)\citenamefont {Yang}, \citenamefont {Gao}, \citenamefont {Tian}, \citenamefont {Cao},\ and\ \citenamefont {Wen}}]{Gao}%
	\BibitemOpen
	\bibfield  {author} {\bibinfo {author} {\bibfnamefont {Y.-H.}\ \bibnamefont {Yang}}, \bibinfo {author} {\bibfnamefont {F.}~\bibnamefont {Gao}}, \bibinfo {author} {\bibfnamefont {G.-J.}\ \bibnamefont {Tian}}, \bibinfo {author} {\bibfnamefont {T.-Q.}\ \bibnamefont {Cao}}, \ and\ \bibinfo {author} {\bibfnamefont {Q.-Y.}\ \bibnamefont {Wen}},\ }\bibfield  {title} {\enquote {\bibinfo {title} {Local distinguishability of orthogonal quantum states in a $2\ensuremath{\bigotimes}2\ensuremath{\bigotimes}2$ system},}\ }\href {\doibase 10.1103/PhysRevA.88.024301} {\bibfield  {journal} {\bibinfo  {journal} {Phys. Rev. A}\ }\textbf {\bibinfo {volume} {88}},\ \bibinfo {pages} {024301} (\bibinfo {year} {2013})}\BibitemShut {NoStop}%
	\bibitem [{\citenamefont {Xu}\ \emph {et~al.}(2016{\natexlab{b}})\citenamefont {Xu}, \citenamefont {Wen}, \citenamefont {Qin}, \citenamefont {Yang},\ and\ \citenamefont {Gao}}]{Fei}%
	\BibitemOpen
	\bibfield  {author} {\bibinfo {author} {\bibfnamefont {G.-B.}\ \bibnamefont {Xu}}, \bibinfo {author} {\bibfnamefont {Q.-Y.}\ \bibnamefont {Wen}}, \bibinfo {author} {\bibfnamefont {S.-J.}\ \bibnamefont {Qin}}, \bibinfo {author} {\bibfnamefont {Y.-H.}\ \bibnamefont {Yang}}, \ and\ \bibinfo {author} {\bibfnamefont {F.}~\bibnamefont {Gao}},\ }\bibfield  {title} {\enquote {\bibinfo {title} {Quantum nonlocality of multipartite orthogonal product states},}\ }\href {\doibase 10.1103/PhysRevA.93.032341} {\bibfield  {journal} {\bibinfo  {journal} {Phys. Rev. A}\ }\textbf {\bibinfo {volume} {93}},\ \bibinfo {pages} {032341} (\bibinfo {year} {2016}{\natexlab{b}})}\BibitemShut {NoStop}%
	\bibitem [{\citenamefont {Xu}\ \emph {et~al.}(2017)\citenamefont {Xu}, \citenamefont {Wen}, \citenamefont {Gao}, \citenamefont {Qin},\ and\ \citenamefont {Zuo}}]{Xu2017}%
	\BibitemOpen
	\bibfield  {author} {\bibinfo {author} {\bibfnamefont {G.-B.}\ \bibnamefont {Xu}}, \bibinfo {author} {\bibfnamefont {Q.-Y.}\ \bibnamefont {Wen}}, \bibinfo {author} {\bibfnamefont {F.}~\bibnamefont {Gao}}, \bibinfo {author} {\bibfnamefont {S.-J.}\ \bibnamefont {Qin}}, \ and\ \bibinfo {author} {\bibfnamefont {H.-J.}\ \bibnamefont {Zuo}},\ }\bibfield  {title} {\enquote {\bibinfo {title} {Local indistinguishability of multipartite orthogonal product bases},}\ }\href {https://doi.org/10.1007/s11128-017-1725-5} {\bibfield  {journal} {\bibinfo  {journal} {Quantum Inf. Process.}\ }\textbf {\bibinfo {volume} {16}} (\bibinfo {year} {2017})}\BibitemShut {NoStop}%
	\bibitem [{\citenamefont {Wang}\ \emph {et~al.}(2017{\natexlab{a}})\citenamefont {Wang}, \citenamefont {Li}, \citenamefont {Zheng},\ and\ \citenamefont {Fei}}]{Wang2017}%
	\BibitemOpen
	\bibfield  {author} {\bibinfo {author} {\bibfnamefont {Y.-L.}\ \bibnamefont {Wang}}, \bibinfo {author} {\bibfnamefont {M.-S.}\ \bibnamefont {Li}}, \bibinfo {author} {\bibfnamefont {Z.-J.}\ \bibnamefont {Zheng}}, \ and\ \bibinfo {author} {\bibfnamefont {S.-M.}\ \bibnamefont {Fei}},\ }\bibfield  {title} {\enquote {\bibinfo {title} {The local indistinguishability of multipartite product states},}\ }\href {https://doi.org/10.1007/s11128-016-1477-7} {\bibfield  {journal} {\bibinfo  {journal} {Quantum Inf. Process.}\ }\textbf {\bibinfo {volume} {16}} (\bibinfo {year} {2017}{\natexlab{a}})}\BibitemShut {NoStop}%
	\bibitem [{\citenamefont {Zhang}\ \emph {et~al.}(2017{\natexlab{a}})\citenamefont {Zhang}, \citenamefont {Zhang}, \citenamefont {Gao}, \citenamefont {Wen},\ and\ \citenamefont {Oh}}]{Zhang}%
	\BibitemOpen
	\bibfield  {author} {\bibinfo {author} {\bibfnamefont {Z.-C.}\ \bibnamefont {Zhang}}, \bibinfo {author} {\bibfnamefont {K.-J.}\ \bibnamefont {Zhang}}, \bibinfo {author} {\bibfnamefont {F.}~\bibnamefont {Gao}}, \bibinfo {author} {\bibfnamefont {Q.-Y.}\ \bibnamefont {Wen}}, \ and\ \bibinfo {author} {\bibfnamefont {C.~H.}\ \bibnamefont {Oh}},\ }\bibfield  {title} {\enquote {\bibinfo {title} {Construction of nonlocal multipartite quantum states},}\ }\href {\doibase 10.1103/PhysRevA.95.052344} {\bibfield  {journal} {\bibinfo  {journal} {Phys. Rev. A}\ }\textbf {\bibinfo {volume} {95}},\ \bibinfo {pages} {052344} (\bibinfo {year} {2017}{\natexlab{a}})}\BibitemShut {NoStop}%
	\bibitem [{\citenamefont {Zhang}\ \emph {et~al.}(2015)\citenamefont {Zhang}, \citenamefont {Gao}, \citenamefont {Qin}, \citenamefont {Yang},\ and\ \citenamefont {Wen}}]{Zhang2015}%
	\BibitemOpen
	\bibfield  {author} {\bibinfo {author} {\bibfnamefont {Z.-C.}\ \bibnamefont {Zhang}}, \bibinfo {author} {\bibfnamefont {F.}~\bibnamefont {Gao}}, \bibinfo {author} {\bibfnamefont {S.-J.}\ \bibnamefont {Qin}}, \bibinfo {author} {\bibfnamefont {Y.-H.}\ \bibnamefont {Yang}}, \ and\ \bibinfo {author} {\bibfnamefont {Q.-Y.}\ \bibnamefont {Wen}},\ }\bibfield  {title} {\enquote {\bibinfo {title} {Nonlocality of orthogonal product states},}\ }\href {\doibase 10.1103/PhysRevA.92.012332} {\bibfield  {journal} {\bibinfo  {journal} {Phys. Rev. A}\ }\textbf {\bibinfo {volume} {92}},\ \bibinfo {pages} {012332} (\bibinfo {year} {2015})}\BibitemShut {NoStop}%
	\bibitem [{\citenamefont {Wang}\ \emph {et~al.}(2015)\citenamefont {Wang}, \citenamefont {Li}, \citenamefont {Zheng},\ and\ \citenamefont {Fei}}]{Zheng}%
	\BibitemOpen
	\bibfield  {author} {\bibinfo {author} {\bibfnamefont {Y.-L.}\ \bibnamefont {Wang}}, \bibinfo {author} {\bibfnamefont {M.-S.}\ \bibnamefont {Li}}, \bibinfo {author} {\bibfnamefont {Z.-J.}\ \bibnamefont {Zheng}}, \ and\ \bibinfo {author} {\bibfnamefont {S.-M.}\ \bibnamefont {Fei}},\ }\bibfield  {title} {\enquote {\bibinfo {title} {Nonlocality of orthogonal product-basis quantum states},}\ }\href {\doibase 10.1103/PhysRevA.92.032313} {\bibfield  {journal} {\bibinfo  {journal} {Phys. Rev. A}\ }\textbf {\bibinfo {volume} {92}},\ \bibinfo {pages} {032313} (\bibinfo {year} {2015})}\BibitemShut {NoStop}%
	\bibitem [{\citenamefont {Zhang}\ \emph {et~al.}(2016{\natexlab{a}})\citenamefont {Zhang}, \citenamefont {Gao}, \citenamefont {Cao}, \citenamefont {Qin},\ and\ \citenamefont {Wen}}]{Zhang2016}%
	\BibitemOpen
	\bibfield  {author} {\bibinfo {author} {\bibfnamefont {Z.-C.}\ \bibnamefont {Zhang}}, \bibinfo {author} {\bibfnamefont {F.}~\bibnamefont {Gao}}, \bibinfo {author} {\bibfnamefont {Y.}~\bibnamefont {Cao}}, \bibinfo {author} {\bibfnamefont {S.-J.}\ \bibnamefont {Qin}}, \ and\ \bibinfo {author} {\bibfnamefont {Q.-Y.}\ \bibnamefont {Wen}},\ }\bibfield  {title} {\enquote {\bibinfo {title} {Local indistinguishability of orthogonal product states},}\ }\href {\doibase 10.1103/PhysRevA.93.012314} {\bibfield  {journal} {\bibinfo  {journal} {Phys. Rev. A}\ }\textbf {\bibinfo {volume} {93}},\ \bibinfo {pages} {012314} (\bibinfo {year} {2016}{\natexlab{a}})}\BibitemShut {NoStop}%
	\bibitem [{\citenamefont {Zhang}\ \emph {et~al.}(2016{\natexlab{b}})\citenamefont {Zhang}, \citenamefont {Tan}, \citenamefont {Weng},\ and\ \citenamefont {Li}}]{Zhang2016Yongjun}%
	\BibitemOpen
	\bibfield  {author} {\bibinfo {author} {\bibfnamefont {X.}~\bibnamefont {Zhang}}, \bibinfo {author} {\bibfnamefont {X.}~\bibnamefont {Tan}}, \bibinfo {author} {\bibfnamefont {J.}~\bibnamefont {Weng}}, \ and\ \bibinfo {author} {\bibfnamefont {Y.}~\bibnamefont {Li}},\ }\bibfield  {title} {\enquote {\bibinfo {title} {{LOCC} indistinguishable orthogonal product quantum states},}\ }\href {https://doi.org/10.1038/srep28864} {\bibfield  {journal} {\bibinfo  {journal} {Sci. Rep.}\ }\textbf {\bibinfo {volume} {6}},\ \bibinfo {pages} {1--12} (\bibinfo {year} {2016}{\natexlab{b}})}\BibitemShut {NoStop}%
	\bibitem [{\citenamefont {Wang}\ \emph {et~al.}(2017{\natexlab{b}})\citenamefont {Wang}, \citenamefont {Li}, \citenamefont {Fei},\ and\ \citenamefont {Zheng}}]{Wang2017arxiv}%
	\BibitemOpen
	\bibfield  {author} {\bibinfo {author} {\bibfnamefont {Y.-L.}\ \bibnamefont {Wang}}, \bibinfo {author} {\bibfnamefont {M.-S.}\ \bibnamefont {Li}}, \bibinfo {author} {\bibfnamefont {S.-M.}\ \bibnamefont {Fei}}, \ and\ \bibinfo {author} {\bibfnamefont {Z.-J.}\ \bibnamefont {Zheng}},\ }\bibfield  {title} {\enquote {\bibinfo {title} {Constructing unextendible product bases from the old ones},}\ }\href@noop {} {\  (\bibinfo {year} {2017}{\natexlab{b}})},\ \Eprint {http://arxiv.org/abs/1703.06542} {arXiv:1703.06542 [quant-ph]} \BibitemShut {NoStop}%
	\bibitem [{\citenamefont {Zhang}\ \emph {et~al.}(2017{\natexlab{b}})\citenamefont {Zhang}, \citenamefont {Weng}, \citenamefont {Tan},\ and\ \citenamefont {Luo}}]{Zhang2017Luo}%
	\BibitemOpen
	\bibfield  {author} {\bibinfo {author} {\bibfnamefont {X.}~\bibnamefont {Zhang}}, \bibinfo {author} {\bibfnamefont {J.}~\bibnamefont {Weng}}, \bibinfo {author} {\bibfnamefont {X.}~\bibnamefont {Tan}}, \ and\ \bibinfo {author} {\bibfnamefont {W.}~\bibnamefont {Luo}},\ }\bibfield  {title} {\enquote {\bibinfo {title} {Indistinguishability of pure orthogonal product states by {LOCC}},}\ }\href {https://doi.org/10.1007/s11128-017-1616-9} {\bibfield  {journal} {\bibinfo  {journal} {Quantum Inf. Process.}\ }\textbf {\bibinfo {volume} {16}} (\bibinfo {year} {2017}{\natexlab{b}})}\BibitemShut {NoStop}%
	\bibitem [{\citenamefont {Zhang}\ \emph {et~al.}(2017{\natexlab{c}})\citenamefont {Zhang}, \citenamefont {Guo}, \citenamefont {Luo},\ and\ \citenamefont {Tan}}]{Zhang2017Tan}%
	\BibitemOpen
	\bibfield  {author} {\bibinfo {author} {\bibfnamefont {X.}~\bibnamefont {Zhang}}, \bibinfo {author} {\bibfnamefont {C.}~\bibnamefont {Guo}}, \bibinfo {author} {\bibfnamefont {W.}~\bibnamefont {Luo}}, \ and\ \bibinfo {author} {\bibfnamefont {X.}~\bibnamefont {Tan}},\ }\bibfield  {title} {\enquote {\bibinfo {title} {Local distinguishability of quantum states in bipartite systems},}\ }\href@noop {} {\  (\bibinfo {year} {2017}{\natexlab{c}})},\ \Eprint {http://arxiv.org/abs/1712.08830} {arXiv:1712.08830 [quant-ph]} \BibitemShut {NoStop}%
	\bibitem [{\citenamefont {Zuo}\ \emph {et~al.}(2021)\citenamefont {Zuo}, \citenamefont {Liu}, \citenamefont {Zhen},\ and\ \citenamefont {Fei}}]{Zuo2021}%
	\BibitemOpen
	\bibfield  {author} {\bibinfo {author} {\bibfnamefont {H.-J.}\ \bibnamefont {Zuo}}, \bibinfo {author} {\bibfnamefont {J.-H.}\ \bibnamefont {Liu}}, \bibinfo {author} {\bibfnamefont {X.-F.}\ \bibnamefont {Zhen}}, \ and\ \bibinfo {author} {\bibfnamefont {S.-M.}\ \bibnamefont {Fei}},\ }\bibfield  {title} {\enquote {\bibinfo {title} {Nonlocal sets of orthogonal multipartite product states with less members},}\ }\href {https://doi.org/10.1007/s11128-021-03320-9} {\bibfield  {journal} {\bibinfo  {journal} {Quantum Inf. Process.}\ }\textbf {\bibinfo {volume} {20}} (\bibinfo {year} {2021})}\BibitemShut {NoStop}%
	\bibitem [{\citenamefont {Zhang}\ \emph {et~al.}(2024)\citenamefont {Zhang}, \citenamefont {Jiang}, \citenamefont {Y.-G.Yang},\ and\ \citenamefont {Xu}}]{Zhang2024}%
	\BibitemOpen
	\bibfield  {author} {\bibinfo {author} {\bibfnamefont {Y.-Q.}\ \bibnamefont {Zhang}}, \bibinfo {author} {\bibfnamefont {D.-H.}\ \bibnamefont {Jiang}}, \bibinfo {author} {\bibnamefont {Y.-G.Yang}}, \ and\ \bibinfo {author} {\bibfnamefont {G.-B.}\ \bibnamefont {Xu}},\ }\bibfield  {title} {\enquote {\bibinfo {title} {Nonlocal sets of orthogonal product states with less members in multipartite quantum systems},}\ }\href {https://doi.org/10.1007/s11128-024-04591-8} {\bibfield  {journal} {\bibinfo  {journal} {Quantum Inf. Process.}\ }\textbf {\bibinfo {volume} {23}} (\bibinfo {year} {2024})}\BibitemShut {NoStop}%
	\bibitem [{\citenamefont {Zhu}\ \emph {et~al.}(2022)\citenamefont {Zhu}, \citenamefont {Jiang}, \citenamefont {Liang}, \citenamefont {Xu},\ and\ \citenamefont {Yang}}]{Zhu2022}%
	\BibitemOpen
	\bibfield  {author} {\bibinfo {author} {\bibfnamefont {Y.-Y.}\ \bibnamefont {Zhu}}, \bibinfo {author} {\bibfnamefont {D.-H.}\ \bibnamefont {Jiang}}, \bibinfo {author} {\bibfnamefont {X.-Q.}\ \bibnamefont {Liang}}, \bibinfo {author} {\bibfnamefont {G.-B.}\ \bibnamefont {Xu}}, \ and\ \bibinfo {author} {\bibfnamefont {Y.-G.}\ \bibnamefont {Yang}},\ }\bibfield  {title} {\enquote {\bibinfo {title} {Nonlocal sets of orthogonal product states with the less amount of elements in tripartite quantum systems},}\ }\href {https://doi.org/10.1007/s11128-022-03601-x} {\bibfield  {journal} {\bibinfo  {journal} {Quantum Inf. Process.}\ }\textbf {\bibinfo {volume} {21}} (\bibinfo {year} {2022})}\BibitemShut {NoStop}%
	\bibitem [{\citenamefont {{Z.-C. Zhang}}\ and\ \citenamefont {{T.-Q. Cao}}(2021)}]{Zhang2021}%
	\BibitemOpen
	\bibfield  {author} {\bibinfo {author} {\bibfnamefont {{G.-J. Tian}}\ \bibnamefont {{Z.-C. Zhang}}}\ and\ \bibinfo {author} {\bibnamefont {{T.-Q. Cao}}},\ }\bibfield  {title} {\enquote {\bibinfo {title} {Strong quantum nonlocality for multipartite entangled states},}\ }\href {https://doi.org/10.1007/s11128-021-03284-w} {\bibfield  {journal} {\bibinfo  {journal} {Quantum Inf. Process.}\ }\textbf {\bibinfo {volume} {20}} (\bibinfo {year} {2021})}\BibitemShut {NoStop}%
	\bibitem [{\citenamefont {{T.-Q. Cao}}\ and\ \citenamefont {{Q.-L. Xin}}(2025)}]{Cao2025}%
	\BibitemOpen
	\bibfield  {author} {\bibinfo {author} {\bibfnamefont {{ B.-H. Gao}}\ \bibnamefont {{T.-Q. Cao}}}\ and\ \bibinfo {author} {\bibnamefont {{Q.-L. Xin}}},\ }\bibfield  {title} {\enquote {\bibinfo {title} {Locally discriminating nonlocal tripartite orthogonal product states with entanglement resource},}\ }\href {https://doi.org/10.1007/s10773-025-05923-9} {\bibfield  {journal} {\bibinfo  {journal} {Int. J. Theor. Phys.}\ }\textbf {\bibinfo {volume} {64}} (\bibinfo {year} {2025})}\BibitemShut {NoStop}%
	\bibitem [{\citenamefont {Feng}\ and\ \citenamefont {Shi}(2009)}]{Feng}%
	\BibitemOpen
	\bibfield  {author} {\bibinfo {author} {\bibfnamefont {Y.}~\bibnamefont {Feng}}\ and\ \bibinfo {author} {\bibfnamefont {Y.}~\bibnamefont {Shi}},\ }\bibfield  {title} {\enquote {\bibinfo {title} {Characterizing locally indistinguishable orthogonal product states},}\ }\href {\doibase 10.1109/TIT.2009.2018330} {\bibfield  {journal} {\bibinfo  {journal} {IEEE Transactions on Information Theory}\ }\textbf {\bibinfo {volume} {55}},\ \bibinfo {pages} {2799--2806} (\bibinfo {year} {2009})}\BibitemShut {NoStop}%
	\bibitem [{\citenamefont {Cohen}(2008)}]{Cohen}%
	\BibitemOpen
	\bibfield  {author} {\bibinfo {author} {\bibfnamefont {S.~M.}\ \bibnamefont {Cohen}},\ }\bibfield  {title} {\enquote {\bibinfo {title} {Understanding entanglement as resource: Locally distinguishing unextendible product bases},}\ }\href {\doibase 10.1103/PhysRevA.77.012304} {\bibfield  {journal} {\bibinfo  {journal} {Phys. Rev. A}\ }\textbf {\bibinfo {volume} {77}},\ \bibinfo {pages} {012304} (\bibinfo {year} {2008})}\BibitemShut {NoStop}%
	\bibitem [{\citenamefont {Croke}\ and\ \citenamefont {Barnett}(2017)}]{Croke}%
	\BibitemOpen
	\bibfield  {author} {\bibinfo {author} {\bibfnamefont {S.}~\bibnamefont {Croke}}\ and\ \bibinfo {author} {\bibfnamefont {S.~M.}\ \bibnamefont {Barnett}},\ }\bibfield  {title} {\enquote {\bibinfo {title} {Difficulty of distinguishing product states locally},}\ }\href {\doibase 10.1103/PhysRevA.95.012337} {\bibfield  {journal} {\bibinfo  {journal} {Phys. Rev. A}\ }\textbf {\bibinfo {volume} {95}},\ \bibinfo {pages} {012337} (\bibinfo {year} {2017})}\BibitemShut {NoStop}%
	\bibitem [{\citenamefont {Childs}\ \emph {et~al.}(2013)\citenamefont {Childs}, \citenamefont {Leung}, \citenamefont {Man{\v c}inska},\ and\ \citenamefont {Ozols}}]{Childs2013}%
	\BibitemOpen
	\bibfield  {author} {\bibinfo {author} {\bibfnamefont {A.~M.}\ \bibnamefont {Childs}}, \bibinfo {author} {\bibfnamefont {D.}~\bibnamefont {Leung}}, \bibinfo {author} {\bibfnamefont {L.}~\bibnamefont {Man{\v c}inska}}, \ and\ \bibinfo {author} {\bibfnamefont {M.}~\bibnamefont {Ozols}},\ }\bibfield  {title} {\enquote {\bibinfo {title} {A framework for bounding nonlocality of state discrimination},}\ }\href {https://doi.org/10.1007/s00220-013-1784-0} {\bibfield  {journal} {\bibinfo  {journal} {Commun. Math. Phys.}\ }\textbf {\bibinfo {volume} {323}},\ \bibinfo {pages} {1121--1153} (\bibinfo {year} {2013})}\BibitemShut {NoStop}%
	\bibitem [{\citenamefont {Massar}\ and\ \citenamefont {Popescu}(1995)}]{MassarPop}%
	\BibitemOpen
	\bibfield  {author} {\bibinfo {author} {\bibfnamefont {S.}~\bibnamefont {Massar}}\ and\ \bibinfo {author} {\bibfnamefont {S.}~\bibnamefont {Popescu}},\ }\bibfield  {title} {\enquote {\bibinfo {title} {Optimal extraction of information from finite quantum ensembles},}\ }\href {\doibase 10.1103/PhysRevLett.74.1259} {\bibfield  {journal} {\bibinfo  {journal} {Phys. Rev. Lett.}\ }\textbf {\bibinfo {volume} {74}},\ \bibinfo {pages} {1259--1263} (\bibinfo {year} {1995})}\BibitemShut {NoStop}%
	\bibitem [{\citenamefont {Zhang}\ \emph {et~al.}(2014)\citenamefont {Zhang}, \citenamefont {Gao}, \citenamefont {Tian}, \citenamefont {Cao},\ and\ \citenamefont {Wen}}]{Tian}%
	\BibitemOpen
	\bibfield  {author} {\bibinfo {author} {\bibfnamefont {Z.-C}\ \bibnamefont {Zhang}}, \bibinfo {author} {\bibfnamefont {F.}~\bibnamefont {Gao}}, \bibinfo {author} {\bibfnamefont {G.-J.}\ \bibnamefont {Tian}}, \bibinfo {author} {\bibfnamefont {T.-Q.}\ \bibnamefont {Cao}}, \ and\ \bibinfo {author} {\bibfnamefont {Q.-Y.}\ \bibnamefont {Wen}},\ }\bibfield  {title} {\enquote {\bibinfo {title} {Nonlocality of orthogonal product basis quantum states},}\ }\href {\doibase 10.1103/PhysRevA.90.022313} {\bibfield  {journal} {\bibinfo  {journal} {Phys. Rev. A}\ }\textbf {\bibinfo {volume} {90}},\ \bibinfo {pages} {022313} (\bibinfo {year} {2014})}\BibitemShut {NoStop}%
	\bibitem [{\citenamefont {Rinaldis}(2004)}]{Rinaldis}%
	\BibitemOpen
	\bibfield  {author} {\bibinfo {author} {\bibfnamefont {S.~De}\ \bibnamefont {Rinaldis}},\ }\bibfield  {title} {\enquote {\bibinfo {title} {Distinguishability of complete and unextendible product bases},}\ }\href {\doibase 10.1103/PhysRevA.70.022309} {\bibfield  {journal} {\bibinfo  {journal} {Phys. Rev. A}\ }\textbf {\bibinfo {volume} {70}},\ \bibinfo {pages} {022309} (\bibinfo {year} {2004})}\BibitemShut {NoStop}%
	\bibitem [{\citenamefont {Cosentino}\ and\ \citenamefont {Russo}(2013)}]{Cosentino2013}%
	\BibitemOpen
	\bibfield  {author} {\bibinfo {author} {\bibfnamefont {A.}~\bibnamefont {Cosentino}}\ and\ \bibinfo {author} {\bibfnamefont {V.}~\bibnamefont {Russo}},\ }\bibfield  {title} {\enquote {\bibinfo {title} {Small sets of locally indistinguishable orthogonal maximally entangled states},}\ }\href@noop {} {\  (\bibinfo {year} {2013})},\ \Eprint {http://arxiv.org/abs/1307.3232} {arXiv:1307.3232 [quant-ph]} \BibitemShut {NoStop}%
	\bibitem [{\citenamefont {Ghosh}\ \emph {et~al.}(2004)\citenamefont {Ghosh}, \citenamefont {Kar}, \citenamefont {Roy},\ and\ \citenamefont {Sarkar}}]{Ghosh2004}%
	\BibitemOpen
	\bibfield  {author} {\bibinfo {author} {\bibfnamefont {S.}~\bibnamefont {Ghosh}}, \bibinfo {author} {\bibfnamefont {G.}~\bibnamefont {Kar}}, \bibinfo {author} {\bibfnamefont {A.}~\bibnamefont {Roy}}, \ and\ \bibinfo {author} {\bibfnamefont {D.}~\bibnamefont {Sarkar}},\ }\bibfield  {title} {\enquote {\bibinfo {title} {Distinguishability of maximally entangled states},}\ }\href {\doibase 10.1103/PhysRevA.70.022304} {\bibfield  {journal} {\bibinfo  {journal} {Phys. Rev. A}\ }\textbf {\bibinfo {volume} {70}},\ \bibinfo {pages} {022304} (\bibinfo {year} {2004})}\BibitemShut {NoStop}%
	\bibitem [{\citenamefont {Bandyopadhyay}\ and\ \citenamefont {Nathanson}(2013)}]{Bandyopadhyay2013}%
	\BibitemOpen
	\bibfield  {author} {\bibinfo {author} {\bibfnamefont {S.}~\bibnamefont {Bandyopadhyay}}\ and\ \bibinfo {author} {\bibfnamefont {M.}~\bibnamefont {Nathanson}},\ }\bibfield  {title} {\enquote {\bibinfo {title} {Tight bounds on the distinguishability of quantum states under separable measurements},}\ }\href {\doibase 10.1103/PhysRevA.88.052313} {\bibfield  {journal} {\bibinfo  {journal} {Phys. Rev. A}\ }\textbf {\bibinfo {volume} {88}},\ \bibinfo {pages} {052313} (\bibinfo {year} {2013})}\BibitemShut {NoStop}%
	\bibitem [{\citenamefont {Walgate}\ and\ \citenamefont {Hardy}(2002)}]{Walgate2002}%
	\BibitemOpen
	\bibfield  {author} {\bibinfo {author} {\bibfnamefont {J.}~\bibnamefont {Walgate}}\ and\ \bibinfo {author} {\bibfnamefont {L.}~\bibnamefont {Hardy}},\ }\bibfield  {title} {\enquote {\bibinfo {title} {Nonlocality, asymmetry, and distinguishing bipartite states},}\ }\href {\doibase 10.1103/PhysRevLett.89.147901} {\bibfield  {journal} {\bibinfo  {journal} {Phys. Rev. Lett.}\ }\textbf {\bibinfo {volume} {89}},\ \bibinfo {pages} {147901} (\bibinfo {year} {2002})}\BibitemShut {NoStop}%
	\bibitem [{\citenamefont {Fan}(2004)}]{Fan}%
	\BibitemOpen
	\bibfield  {author} {\bibinfo {author} {\bibfnamefont {H.}~\bibnamefont {Fan}},\ }\bibfield  {title} {\enquote {\bibinfo {title} {Distinguishability and indistinguishability by local operations and classical communication},}\ }\href {\doibase 10.1103/PhysRevLett.92.177905} {\bibfield  {journal} {\bibinfo  {journal} {Phys. Rev. Lett.}\ }\textbf {\bibinfo {volume} {92}},\ \bibinfo {pages} {177905} (\bibinfo {year} {2004})}\BibitemShut {NoStop}%
	\bibitem [{\citenamefont {Nathanson}(2005)}]{Nathanson2005}%
	\BibitemOpen
	\bibfield  {author} {\bibinfo {author} {\bibfnamefont {M.}~\bibnamefont {Nathanson}},\ }\bibfield  {title} {\enquote {\bibinfo {title} {Distinguishing bipartitite orthogonal states using {LOCC}: Best and worst cases},}\ }\href {https://doi.org/10.1063/1.1914731} {\bibfield  {journal} {\bibinfo  {journal} {J. Math. Phys.}\ }\textbf {\bibinfo {volume} {46}},\ \bibinfo {pages} {062103} (\bibinfo {year} {2005})}\BibitemShut {NoStop}%
	\bibitem [{\citenamefont {Watrous}(2005)}]{Watrous}%
	\BibitemOpen
	\bibfield  {author} {\bibinfo {author} {\bibfnamefont {J.}~\bibnamefont {Watrous}},\ }\bibfield  {title} {\enquote {\bibinfo {title} {Bipartite subspaces having no bases distinguishable by local operations and classical communication},}\ }\href {\doibase 10.1103/PhysRevLett.95.080505} {\bibfield  {journal} {\bibinfo  {journal} {Phys. Rev. Lett.}\ }\textbf {\bibinfo {volume} {95}},\ \bibinfo {pages} {080505} (\bibinfo {year} {2005})}\BibitemShut {NoStop}%
	\bibitem [{\citenamefont {Hayashi}\ \emph {et~al.}(2006)\citenamefont {Hayashi}, \citenamefont {Markham}, \citenamefont {Murao}, \citenamefont {Owari},\ and\ \citenamefont {Virmani}}]{Hayashi2006}%
	\BibitemOpen
	\bibfield  {author} {\bibinfo {author} {\bibfnamefont {M.}~\bibnamefont {Hayashi}}, \bibinfo {author} {\bibfnamefont {D.}~\bibnamefont {Markham}}, \bibinfo {author} {\bibfnamefont {M.}~\bibnamefont {Murao}}, \bibinfo {author} {\bibfnamefont {M.}~\bibnamefont {Owari}}, \ and\ \bibinfo {author} {\bibfnamefont {S.}~\bibnamefont {Virmani}},\ }\bibfield  {title} {\enquote {\bibinfo {title} {Bounds on multipartite entangled orthogonal state discrimination using local operations and classical communication},}\ }\href {\doibase 10.1103/PhysRevLett.96.040501} {\bibfield  {journal} {\bibinfo  {journal} {Phys. Rev. Lett.}\ }\textbf {\bibinfo {volume} {96}},\ \bibinfo {pages} {040501} (\bibinfo {year} {2006})}\BibitemShut {NoStop}%
	\bibitem [{\citenamefont {Bandyopadhyay}\ \emph {et~al.}(2011)\citenamefont {Bandyopadhyay}, \citenamefont {Ghosh},\ and\ \citenamefont {Kar}}]{Bandyopadhyay2011}%
	\BibitemOpen
	\bibfield  {author} {\bibinfo {author} {\bibfnamefont {S.}~\bibnamefont {Bandyopadhyay}}, \bibinfo {author} {\bibfnamefont {S.}~\bibnamefont {Ghosh}}, \ and\ \bibinfo {author} {\bibfnamefont {G.}~\bibnamefont {Kar}},\ }\bibfield  {title} {\enquote {\bibinfo {title} {Locc distinguishability of unilaterally transformable quantum states},}\ }\href {\doibase 10.1088/1367-2630/13/12/123013} {\bibfield  {journal} {\bibinfo  {journal} {New Journal of Physics}\ }\textbf {\bibinfo {volume} {13}},\ \bibinfo {pages} {123013} (\bibinfo {year} {2011})}\BibitemShut {NoStop}%
	\bibitem [{\citenamefont {Yu}\ \emph {et~al.}(2012)\citenamefont {Yu}, \citenamefont {Duan},\ and\ \citenamefont {Ying}}]{Yu}%
	\BibitemOpen
	\bibfield  {author} {\bibinfo {author} {\bibfnamefont {N.}~\bibnamefont {Yu}}, \bibinfo {author} {\bibfnamefont {R.}~\bibnamefont {Duan}}, \ and\ \bibinfo {author} {\bibfnamefont {M.}~\bibnamefont {Ying}},\ }\bibfield  {title} {\enquote {\bibinfo {title} {Four locally indistinguishable ququad-ququad orthogonal maximally entangled states},}\ }\href {\doibase 10.1103/PhysRevLett.109.020506} {\bibfield  {journal} {\bibinfo  {journal} {Phys. Rev. Lett.}\ }\textbf {\bibinfo {volume} {109}},\ \bibinfo {pages} {020506} (\bibinfo {year} {2012})}\BibitemShut {NoStop}%
	\bibitem [{\citenamefont {Walgate}\ \emph {et~al.}(2000)\citenamefont {Walgate}, \citenamefont {Short}, \citenamefont {Hardy},\ and\ \citenamefont {Vedral}}]{walgatehardy}%
	\BibitemOpen
	\bibfield  {author} {\bibinfo {author} {\bibfnamefont {J.}~\bibnamefont {Walgate}}, \bibinfo {author} {\bibfnamefont {A.~J.}\ \bibnamefont {Short}}, \bibinfo {author} {\bibfnamefont {L.}~\bibnamefont {Hardy}}, \ and\ \bibinfo {author} {\bibfnamefont {V.}~\bibnamefont {Vedral}},\ }\bibfield  {title} {\enquote {\bibinfo {title} {Local distinguishability of multipartite orthogonal quantum states},}\ }\href {\doibase 10.1103/PhysRevLett.85.4972} {\bibfield  {journal} {\bibinfo  {journal} {Phys. Rev. Lett.}\ }\textbf {\bibinfo {volume} {85}},\ \bibinfo {pages} {4972--4975} (\bibinfo {year} {2000})}\BibitemShut {NoStop}%
	\bibitem [{\citenamefont {Ghosh}\ \emph {et~al.}(2001)\citenamefont {Ghosh}, \citenamefont {Kar}, \citenamefont {Roy}, \citenamefont {Sen(De)},\ and\ \citenamefont {Sen}}]{GKar}%
	\BibitemOpen
	\bibfield  {author} {\bibinfo {author} {\bibfnamefont {S.}~\bibnamefont {Ghosh}}, \bibinfo {author} {\bibfnamefont {G.}~\bibnamefont {Kar}}, \bibinfo {author} {\bibfnamefont {A.}~\bibnamefont {Roy}}, \bibinfo {author} {\bibfnamefont {A.}~\bibnamefont {Sen(De)}}, \ and\ \bibinfo {author} {\bibfnamefont {U.}~\bibnamefont {Sen}},\ }\bibfield  {title} {\enquote {\bibinfo {title} {Distinguishability of bell states},}\ }\href {\doibase 10.1103/PhysRevLett.87.277902} {\bibfield  {journal} {\bibinfo  {journal} {Phys. Rev. Lett.}\ }\textbf {\bibinfo {volume} {87}},\ \bibinfo {pages} {277902} (\bibinfo {year} {2001})}\BibitemShut {NoStop}%
	\bibitem [{\citenamefont {Smolin}(2001)}]{Smolin}%
	\BibitemOpen
	\bibfield  {author} {\bibinfo {author} {\bibfnamefont {J.~A.}\ \bibnamefont {Smolin}},\ }\bibfield  {title} {\enquote {\bibinfo {title} {Four-party unlockable bound entangled state},}\ }\href {\doibase 10.1103/PhysRevA.63.032306} {\bibfield  {journal} {\bibinfo  {journal} {Phys. Rev. A}\ }\textbf {\bibinfo {volume} {63}},\ \bibinfo {pages} {032306} (\bibinfo {year} {2001})}\BibitemShut {NoStop}%
	\bibitem [{\citenamefont {Chitambar}\ and\ \citenamefont {Hsieh}(2013)}]{ChitambarP-W}%
	\BibitemOpen
	\bibfield  {author} {\bibinfo {author} {\bibfnamefont {E.}~\bibnamefont {Chitambar}}\ and\ \bibinfo {author} {\bibfnamefont {M-H}\ \bibnamefont {Hsieh}},\ }\bibfield  {title} {\enquote {\bibinfo {title} {Revisiting the optimal detection of quantum information},}\ }\href {\doibase 10.1103/PhysRevA.88.020302} {\bibfield  {journal} {\bibinfo  {journal} {Phys. Rev. A}\ }\textbf {\bibinfo {volume} {88}},\ \bibinfo {pages} {020302} (\bibinfo {year} {2013})}\BibitemShut {NoStop}%
	\bibitem [{\citenamefont {Chefles}(2001)}]{CheflesLD}%
	\BibitemOpen
	\bibfield  {author} {\bibinfo {author} {\bibfnamefont {A.}~\bibnamefont {Chefles}},\ }\bibfield  {title} {\enquote {\bibinfo {title} {Unambiguous discrimination between linearly dependent states with multiple copies},}\ }\href {\doibase 10.1103/PhysRevA.64.062305} {\bibfield  {journal} {\bibinfo  {journal} {Phys. Rev. A}\ }\textbf {\bibinfo {volume} {64}},\ \bibinfo {pages} {062305} (\bibinfo {year} {2001})}\BibitemShut {NoStop}%
	\bibitem [{\citenamefont {Duan}(2009)}]{duansup}%
	\BibitemOpen
	\bibfield  {author} {\bibinfo {author} {\bibfnamefont {R.}~\bibnamefont {Duan}},\ }\bibfield  {title} {\enquote {\bibinfo {title} {Super-activation of zero-error capacity of noisy quantum channels},}\ }\href {\doibase 10.48550/arXiv.0906.2527} {\bibfield  {journal} {\bibinfo  {journal} {arXiv preprint arXiv:0906.2527}\ } (\bibinfo {year} {2009}),\ 10.48550/arXiv.0906.2527}\BibitemShut {NoStop}%
	\bibitem [{\citenamefont {Cubitt}\ \emph {et~al.}(2011)\citenamefont {Cubitt}, \citenamefont {Chen},\ and\ \citenamefont {Harrow}}]{cubitt}%
	\BibitemOpen
	\bibfield  {author} {\bibinfo {author} {\bibfnamefont {T.~S.}\ \bibnamefont {Cubitt}}, \bibinfo {author} {\bibfnamefont {J.}~\bibnamefont {Chen}}, \ and\ \bibinfo {author} {\bibfnamefont {A.~W.}\ \bibnamefont {Harrow}},\ }\bibfield  {title} {\enquote {\bibinfo {title} {Superactivation of the asymptotic zero-error classical capacity of a quantum channel},}\ }\href {\doibase 10.1109/TIT.2011.2169109} {\bibfield  {journal} {\bibinfo  {journal} {IEEE Transactions on Information Theory}\ }\textbf {\bibinfo {volume} {57}},\ \bibinfo {pages} {8114--8126} (\bibinfo {year} {2011})}\BibitemShut {NoStop}%
	\bibitem [{\citenamefont {DiVincenzo}\ \emph {et~al.}(2003{\natexlab{b}})\citenamefont {DiVincenzo}, \citenamefont {Mor}, \citenamefont {Shor}, \citenamefont {Smolin},\ and\ \citenamefont {Terhal}}]{DiVincenzo2003}%
	\BibitemOpen
	\bibfield  {author} {\bibinfo {author} {\bibfnamefont {D.~P.}\ \bibnamefont {DiVincenzo}}, \bibinfo {author} {\bibfnamefont {T.}~\bibnamefont {Mor}}, \bibinfo {author} {\bibfnamefont {P.~W.}\ \bibnamefont {Shor}}, \bibinfo {author} {\bibfnamefont {J.~A.}\ \bibnamefont {Smolin}}, \ and\ \bibinfo {author} {\bibfnamefont {B.~M.}\ \bibnamefont {Terhal}},\ }\bibfield  {title} {\enquote {\bibinfo {title} {Unextendible product bases, uncompletable product bases and bound entanglement},}\ }\href {\doibase 10.1007/s00220-003-0877-6} {\bibfield  {journal} {\bibinfo  {journal} {Communications in Mathematical Physics}\ }\textbf {\bibinfo {volume} {238}},\ \bibinfo {pages} {379–410} (\bibinfo {year} {2003}{\natexlab{b}})}\BibitemShut {NoStop}%
	\bibitem [{\citenamefont {Bhat}(2006)}]{BHAT}%
	\BibitemOpen
	\bibfield  {author} {\bibinfo {author} {\bibfnamefont {B.~V.~R.}\ \bibnamefont {Bhat}},\ }\bibfield  {title} {\enquote {\bibinfo {title} {A completely entangled subspace of maximal dimension},}\ }\href {\doibase 10.1142/s0219749906001797} {\bibfield  {journal} {\bibinfo  {journal} {International Journal of Quantum Information}\ }\textbf {\bibinfo {volume} {04}},\ \bibinfo {pages} {325–330} (\bibinfo {year} {2006})}\BibitemShut {NoStop}%
	\bibitem [{\citenamefont {Roman}(2008)}]{Steven_LinearAlgebra}%
	\BibitemOpen
	\bibfield  {author} {\bibinfo {author} {\bibfnamefont {S.}~\bibnamefont {Roman}},\ }\href {\doibase 10.1007/978-0-387-72831-5} {\emph {\bibinfo {title} {Advanced Linear Algebra}}},\ \bibinfo {edition} {3rd}\ ed.,\ \bibinfo {series} {Graduate Texts in Mathematics}, Vol.\ \bibinfo {volume} {135}\ (\bibinfo  {publisher} {Springer},\ \bibinfo {address} {New York},\ \bibinfo {year} {2008})\BibitemShut {NoStop}%
\end{thebibliography}
%

\end{document}